\documentclass[accepted]{uai2026} %
\usepackage[american]{babel}
\usepackage[round]{natbib} %
\usepackage{mathtools} %
\usepackage{booktabs} %
\usepackage{tikz} %

\usepackage{lscape}

\usepackage{comment}
\usepackage{siunitx}
\usepackage{microtype}
\usepackage{graphicx}
\usepackage{subcaption}
\usepackage{booktabs} %

\usepackage{algorithm}
\usepackage{algorithmic}
\usepackage{forloop}
\usepackage{multirow}

\usepackage{enumitem}

\usepackage{wrapfig}
\usepackage{tikz}
\usepackage{cancel}
\usepackage{subcaption}

\usepackage[utf8]{inputenc} %
\usepackage[T1]{fontenc}    %
\usepackage{hyperref}       %
\usepackage{url}            %
\usepackage{booktabs}       %
\usepackage{amsfonts}       %
\usepackage{amsmath}
\usepackage{amsthm}
\usepackage{physics}
\usepackage{multicol}
\usepackage{bbm}
\usepackage{xcolor}
\usepackage{nicefrac}       %
\usepackage{microtype}      %
\usepackage{lipsum}		%
\usepackage{graphicx}
\usepackage{doi}

\usepackage{apxproof}

\usepackage[outline]{contour}
\contourlength{1pt}

\usepackage{amsmath}
\usepackage{amssymb}
\usepackage{mathtools}
\usepackage{amsthm}

\newcommand{\cs}{\footnotesize}
\theoremstyle{plain}
\newtheorem{theorem}{Theorem}[section]

\newtheorem{lemma}[theorem]{Lemma}

\theoremstyle{definition}

\theoremstyle{remark}

\usepackage{hyperref}
\usepackage[capitalize]{cleveref}
\crefname{algorithm}{Alg.}{Algs.}
\crefname{section}{Sec.}{Secs.}
\crefname{theorem}{Thm.}{Thms.}
\crefname{lemma}{Lem.}{Lems.}

\newenvironment{delayedproof}[1]
 {\begin{proof}[\raisedtarget{#1}Proof of \cref{#1}]}
 {\end{proof}}
\newcommand{\raisedtarget}[1]{%
  \raisebox{\fontcharht\font`P}[0pt][0pt]{\hypertarget{#1}{}}%
}
\newcommand{\proofref}[1]{\hyperlink{#1}{proof}}

\newcommand{\makebold}[1]{\text{\boldmath$\mathnormal{#1}$}} %

\DeclareMathOperator{\dens}{\mathnormal{\pi}}
\DeclareMathOperator{\udens}{\mathnormal{\varphi}}
\DeclareMathOperator{\N}{\mathcal{N}}

\DeclareMathOperator{\R}{\mathbb{R}}
\DeclareMathOperator{\polytope}{\mathcal{P}}
\DeclareMathOperator{\A}{\makebold{A}}

\DeclareMathOperator{\bb}{\makebold{b}}
\DeclareMathOperator{\x}{\makebold{x}}
\DeclareMathOperator{\y}{\makebold{y}}
\DeclareMathOperator{\z}{\makebold{z}}

\DeclareMathOperator{\xu}{\makebold{u}}
\DeclareMathOperator{\xv}{\makebold{v}}
\DeclareMathOperator{\eye}{\makebold{I}}
\DeclareMathOperator{\G}{\makebold{\mathcal{G}}}
\DeclareMathOperator{\cov}{\text{\boldmath$\mathrm{\Sigma}$}}

\DeclareMathOperator{\LL}{\makebold{L}}

\DeclareMathOperator{\rand}{\makebold{\eta}}
\DeclareMathOperator{\s}{\gamma}

\DeclareMathOperator{\HR}{\mathrm{HR}}
\DeclareMathOperator{\EHR}{\mathrm{EHR}}
\DeclareMathOperator{\LHR}{\mathrm{LHR}}
\DeclareMathOperator{\smHR}{\mathrm{smHR}}
\DeclareMathOperator{\smLHR}{\mathrm{smLHR}}
\DeclareMathOperator{\hessian}{\makebold{\mathcal{H}}}

\DeclareMathOperator{\supp}{\mathrm{supp}}

\DeclareMathOperator{\smax}{\s_{\mathrm{max}}}

\newcommand{\HRname}{Hit-\&-Run}
\newcommand{\HRshortname}{$\mathrm{HR}$}

\newcommand{\spd}{s.p.d.\xspace}

\newcommand{\mMALA}{mMALA}
\newcommand{\smMALA}{smMALA}

\newcommand{\Ciso}{$^{13}$C}
\newcommand{\blackjax}{\texttt{blackjax}}
\newcommand{\jax}{\texttt{jax}}
\newcommand{\xcflux}{\texttt{13CFLUX}}
\newcommand{\minESS}{$\min\!\text{ESS}$}

\definecolor{lgry}{HTML}{aaaaaa}
\definecolor{tabblu}{HTML}{1f77b4}
\definecolor{taborg}{HTML}{ff7f0e}
\definecolor{tabgre}{HTML}{2ca02c}
\definecolor{tabred}{HTML}{d62728}
\definecolor{tabppl}{HTML}{9467bd}
\definecolor{tabbrw}{HTML}{8c564b}
\definecolor{tabpnk}{HTML}{e377c2}
\definecolor{tabgry}{HTML}{7f7f7f}
\definecolor{tabolv}{HTML}{bcbd22}
\definecolor{tabcya}{HTML}{17becf}

\def\width{6pt}
\def\height{3pt}

\usepackage{xspace}

\newcommand{\cb}[1]{
    \raisebox{.3ex}{%
        \begin{tikzpicture}%
            \node[
                minimum width=\width,
                minimum height=\height,
                fill=#1,
                inner sep=1pt] {};
        \end{tikzpicture}%
    }%
    \xspace%
}

\newcommand{\cbb}[1]{
    \raisebox{.3ex}{%
        \begin{tikzpicture}%
            \node[
                minimum width=\width,
                minimum height=\height,
                fill=#1,
                opacity=0.5,
                inner sep=1pt] {};
        \end{tikzpicture}%
    }%
    \xspace%
}

\newcommand{\cbbb}[1]{
    \raisebox{.3ex}{%
        \begin{tikzpicture}[x=\width]%
            \node[
                minimum width=.5*\width,
                minimum height=\height,
                fill=#1,
                opacity=0.5,
                outer sep=0,
                inner sep=1pt] at (0, 0) {};
            \node[
                minimum width=.5*\width,
                minimum height=\height,
                fill=#1,
                outer sep=0,
                inner sep=1pt] at (.5, 0) {};
        \end{tikzpicture}%
    }%
    \xspace%
}

\newcommand{\HRm}{\mbox{HR\!\cb{tabppl}}}
\newcommand{\LHRm}{\mbox{LHR\!\cb{tabbrw}}}
\newcommand{\smHRdm}{\mbox{smHR$_{\delta}$\!\cb{tabpnk}}}
\newcommand{\smLHRdm}{\mbox{smLHR$_{\delta}$\!\cb{tabgry}}}
\newcommand{\RWMHm}{\mbox{RWMH\!\cb{tabblu}}}
\newcommand{\MALAm}{\mbox{MALA\!\cb{tabcya}}}
\newcommand{\smMALAdm}{\mbox{smMALA$_{\delta}$\!\cb{tabgre}}}
\newcommand{\Dikinm}{\mbox{Dikin\!\cb{taborg}}}
\newcommand{\MAPLAm}{\mbox{MAPLA\!\cb{tabred}}}

\newcommand{\smHRem}{\mbox{smHR$_{\varepsilon}$\!\cbb{tabpnk}}}
\newcommand{\smLHRem}{\mbox{smLHR$_{\varepsilon}$\!\cbb{tabgry}}}
\newcommand{\smMALAem}{\mbox{smMALA$_{\varepsilon}$\!\cbb{tabgre}}}

\newcommand{\smHRm}{\mbox{smHR\!\cbbb{tabpnk}}}
\newcommand{\smLHRm}{\mbox{smLHR\!\cbbb{tabgry}}}
\newcommand{\smMALAm}{\mbox{smMALA\!\cbbb{tabgre}}}

\newcommand{\p}[1]{\phantom{#1}} %

\title{Higher-Order Hit-\&-Run Samplers for Linearly Constrained Densities}

\author[1,2]{\href{mailto:<r.paul@fz-juelich.de>?Subject=Higher-Order Hit-&-Run}{Richard~D.~Paul}{}}
\author[1]{Anton Stratmann}
\author[1]{Johann~F.~Jadebeck}
\author[1]{Martin~Bey\ss}
\author[1]{Hanno~Scharr}
\author[2,3]{David~R\"ugamer}
\author[1]{Katharina~N\"oh}
\affil[1]{%
    Forschungszentrum J\"ulich, Germany
}
\affil[2]{%
    Department of Statistics, LMU Munich, Germany
}
\affil[3]{%
    Munich Center for Machine Learning (MCML), Germany.
}
  
\begin{document}
\maketitle

\begin{abstract}
    Markov chain Monte Carlo (MCMC) sampling of densities 
    restricted to linearly constrained domains is an important 
    task arising in Bayesian treatment of inverse problems in 
    the natural sciences.
    While efficient algorithms for uniform polytope sampling 
    exist, much less work has dealt with more complex 
    constrained densities. %
    In particular, gradient information as used in %
    unconstrained MCMC is not necessarily helpful in the 
    constrained case, where the gradient may push the 
    proposal's density out of the polytope.
    In this work, we propose novel constrained sampling
    algorithms, which combine strengths of higher-order 
    information, like the target log-density's gradients 
    and curvature, with the \HRname{}
    proposal, a simple
    mechanism which guarantees the generation of feasible
    proposals, fulfilling the linear constraints.
    Our extensive experiments demonstrate improved sampling 
    efficiency on complex constrained densities over various 
    constrained and unconstrained samplers.
\end{abstract}

\section{Introduction}

\emph{\HRname{}} \citep[\HRshortname{}, ][]{smith_efficient_1984,belisle_hit-and-run_1993_} 
is a well-established \emph{Markov chain Monte Carlo} (MCMC) method
for sampling from probability densities defined on constrained domains.
\HRshortname{} is applied in diverse domains like operations research~\citep{Tervonen2013},
cosmology~\citep{Lubini2014}, systems biology~\citep{Herrmann2019,theorell_be_2017_} and ecology~\citep{Gellner2023}, 
where modelling principles give rise to linearly constrained domains.
Moreover, \HRshortname{} has been used as a building block 
in advanced sampling algorithms~\citep{Theorell_2019,  nested_sampling_primer_2022, hit-and-run-slice-sampling_2024, yallup2026nestedslicesamplingvectorized}.
Its main strength for constrained sampling is the algorithm's %
guarantee to only generate feasible samples.
\citet{Lovasz2006} prove \HRshortname{} mixes rapidly for uniform sampling of bounded and convex linearly
constrained domains, referred to as convex polytopes.

\begin{figure}
\centering
    \resizebox{.95\columnwidth}{!}%
    {
    \begin{tikzpicture}
        \node at (1,0) {\includegraphics[width=\columnwidth]{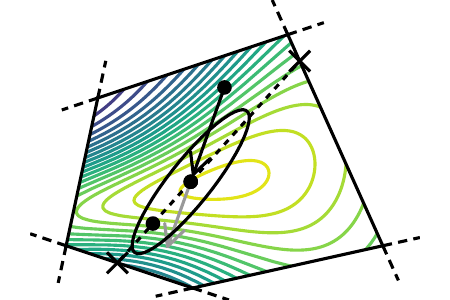}};
        \begin{scope}[cm={5.8,0,0,5.8,(-1.9,-1.75)}]
            \node at (-.12, -.2) {};
            \node at ( 1.1,  .9) {};

\node[right] at (0.5, 0.5) {\large \contour{white}{$\x$}};
\node[right] at (0.393389535266696, 0.20220348824443474) {\large \contour{white}{$\x+\hat{\varepsilon}\nabla\log\phi(\x)$}};
\node[left] at (0.16194118171518293, -0.05398039390506097) {\large \contour{white}{$\x+\s_a\!\xv$}};
\node[right] at (0.7375730946146677, 0.5831709008562938) {\large \contour{white}{$\x+\s_b\!\xv$}};
\node[left] at (0.2740908397720597, 0.0701550097039112) {\large \contour{white}{$\y$}};
        \end{scope}
    \end{tikzpicture}
    }
\caption{
    Our proposed smLHR sampler preconditions the direction using local curvature and
    clips the natural gradient step to prevent it from leaving the
    feasible region.
}
\label{fig:smlhr}
\end{figure}
However, real-world inverse problems, e.g.\ those arising from Bayesian inference~\citep{theorell_be_2017_,Slater2023}, 
lead to more complex non-uniform densities. 
These densities present additional sampling challenges such as strong nonlinear correlations between variables, 
or concentration of measure in small regions of the constrained domain.
For sampling densities defined on unconstrained domains, the incorporation of higher-order information of the density
has shown great success in dealing with these challenges. 
Prominent examples are gradient-guided samplers such as the
\emph{Metropolis-adjusted Langevin algorithm} \citep[MALA,][]{rossky_brownian_1978} 
and \emph{Hamiltonian Monte Carlo} \citep[HMC,][]{duane_hybrid_1987},
or curvature-based preconditioning as in \emph{Riemannian Manifold MALA} (\mMALA) and \emph{Riemannian HMC} \citep[RHMC, ][]{girolami_riemann_2011_}.
In addition, recent advances in automatic differentiation have greatly benefited the calculation of the higher-order information~\citep{baydin_automatic_2018}.

To leverage higher-order information 
for sampling densities defined on convexly constrained domains, alternatives to \HRshortname{} have been proposed. 
\citep{Kook2022} introduce a constrained version of RHMC for sampling log-concave densities,
but test it only for uniform problems.
Recently, \citep{srinivasan_high-accuracy_2024} proposed the first-order \emph{Metropolis-adjusted Preconditioned Langevin algorithm} (MAPLA).
Built upon \citet{girolami_riemann_2011_}, MAPLA preconditions
a Gaussian proposal density with a metric induced by the log-barrier function~\citep{Fiacco1990-nh} 
to incorporate information about the constraints instead of the curvature information of the domain-constrained density.
Although MAPLA aims to guide the sampling such that proposed samples respect the constraints,
its proposal density remains Gaussian, thus admitting some positive probability of 
proposing infeasible samples.
This raises the question whether higher-order algorithms that only propose feasible
samples improve the sampling efficiency for non-uniform constrained sampling problems.

\paragraph{Contribution}

In this work, we propose feasible higher-order samplers, which have positive proposal density just exactly
on the convex-constrained domain.
To this end, we combine the \HRshortname{} proposal mechanism with existing first- and second-order
samplers like MALA and the simplified \mMALA \citep{girolami_riemann_2011_}. 
As a result, we develop three new samplers \emph{Langevin Hit-\&-Run} (LHR),
\emph{simplified manifold Hit-\&-Run} (smHR), 
and \emph{simplified manifold Langevin Hit-\&-Run} (smLHR).
Our approach is mainly driven by the observation that a Gaussian random variable can 
be sampled in a \HRshortname{}-like fashion by decomposing it into direction and magnitude.
We provide theoretical analysis to prove convergence of our samplers 
to the desired target densities.
We numerically evaluate our algorithms on a systematically constructed benchmark consisting
of 2240 problems, which combine parametrizable polytopes and probability densities of 
varying geometry or ill condition, 
as well as on real-world examples from the field of Bayesian \Ciso{} metabolic flux analysis \citep[\Ciso-MFA,][]{theorell_be_2017_}.
Our results show that our combination of both the incorporation of higher-order information and the constraining 
of the proposal density by the \HRshortname{} proposal mechanism improve sampling efficiency.

\section{Preliminaries}
\label{sec:prelim}

We consider the problem of MCMC sampling from
smooth probability densities $\dens(\x)$ with support restricted to
non-empty, convex-constrained domains %
\begin{align}
    \polytope = \{ \x \in \R^d : \A\!\x \leq \bb \},
\end{align}
defined by $m$ linear constraints $\A\!\x\leq\bb$, i.e.\ $\A\in\R^{m\times d}, \bb\in\R^{m}$.
If $\polytope$ is bounded, we refer to it as a \emph{polytope}.

\subsection{Metropolis-Hastings Algorithm}
\label{sec:mh}

The \emph{Metropolis-Hastings} \citep[MH,][]{metropolis_equation_1953, hastings_monte_1970}  
algorithm is a widely applied method 
to draw samples from a target density $\dens: \R^d \to \R$, even
if access is only provided to some unnormalized density $\udens(\x) = Z\cdot\dens(\x)$,
where $Z$ is the normalization constant.
It works by iteratively drawing samples, called \emph{proposals}, 
from some proposal distribution $\y \sim q(\,\cdot\,|\x)$ conditional on $\x$, 
before \emph{``correcting''} them to match the 
desired target distribution.
To this end, the MH algorithm adds a filter that accepts 
or rejects transitions from $\x$ to $\y$ with acceptance probability %
\begin{align}
    \label{eq:mh-filter}
    \alpha(\y|\x) = \min\left\{
        1, 
        \frac{\udens(\y)}{\udens(\x)} \cdot \frac{\cancel{Z}}{\cancel{Z}} \cdot \frac{q(\x|\y)}{q(\y|\x)}
    \right\}.
\end{align}
A sufficient condition for the convergence of this Markov chain is that $q(\y|\x) > 0$ 
for any two states $\x, \y \in \mathrm{supp}(\pi)$ \citep{roberts_general_2004}.

The MH algorithm is also applicable to constrained target densities
$\dens : \polytope \to \R$, if we define an extension thereof with zero
density for any state outside $\polytope$, i.e.\ $\hat{\dens}(\x) := \mathbbm{1}_{\!\polytope}(\x)\dens(\x)$,
where $\mathbbm{1}_{\!\polytope}$ is the indicator function on $\polytope$.
Assuming that we start within $\supp(\dens) \subseteq \polytope$,
the MH filter prevents the sampler to move outside of $\polytope$
even for an infeasible proposal $\y \notin \polytope$, as $\alpha(\y|\x) = 0$ in that case.
Thus, in principle any unconstrained proposal distribution
can be used to produce samples from constrained densities $\pi$. %
However, using such a proposal distribution quickly 
grows ineffective when many infeasible $\y \notin \polytope$ are proposed
and, thus, immediately rejected.
Hence, the development and analysis of constrained sampling algorithms has been of
great interest 
\citep{smith_efficient_1984,belisle_hit-and-run_1993_,lovasz_hit-and-run_1999,kannan_random_2012,narayanan_randomized_2016,chen_vaidya_2017,chen_fast_2018,Mangoubi2022,Kook2022,gatmiry_sampling_2024,srinivasan_high-accuracy_2024}. %

\subsection{Hit-\&-Run Sampler}
\label{sec:hr}

A simple method, which guarantees the proposal to remain within the constrained domain,
is the HR proposal algorithm \citep{smith_efficient_1984, belisle_hit-and-run_1993_}:
\begin{algorithm}[H]
  \caption{Hit-\&-Run Proposal}
  \begin{algorithmic}[1]
    \STATE Draw a point $\xu$ uniformly at random from the 
        $d-1$ dimensional
        hypersphere, i.e.~$\|\!\xu\!\|_2 = 1$,
    \STATE scale the update with the step size $\varepsilon$, 
        i.e.~$\xv = \varepsilon\!\xu$,
    \STATE compute the step size 
        $\smax$ for which $\x + \smax\!\xv$ intersects with the 
        constraints,
    \STATE draw a step $\s \sim p_{[0, \smax]}$ from 
        the step distribution $p$ truncated to $[0, \s_{\smax}]$, and %
    \STATE compute the update $\y = \x + \s\!\xv$.
  \end{algorithmic}
  \label{alg:hr}
\end{algorithm}
For brevity, we denote drawing a sample using \cref{alg:hr}
as $\y \sim \HR(\x, \varepsilon^2).$
The truncated step distribution has density 
\begin{align}
    \label{eq:trunc-p}
    p_{[0, \smax]}(\s) = \begin{cases}
        \flatfrac{p(\s)}{F_p(\s)}, & \text{if } \s \in [0, \smax], \\
        0, & \text{else}
    \end{cases}
\end{align}
where $F_p$ is the cumulative density of the univariate, continuous step distribution $p$
and $\smax := \max_{\x+\s\!\xv \in \polytope} \s$ is the largest permissible step size
such that $\x+\s\!\xv \in \polytope$, which can be computed as
\begin{align}
    \label{eq:intersect}
    \smax = \mathrm{clp}(\x, \xv) := \min_{\makebold{a}_i^\top\!\xv > 0, i=1,\ldots,m}\frac{b_i - \makebold{a}_i^\top\!\x}{\makebold{a}_i^\top\!\xv}.
\end{align}
If $\makebold{a}_i^\top\!\xv \leq 0$ for all $i=1,\ldots,m$, then no 
constraints lie ahead in direction of $\xv$.
In this case, we set $\smax = \mathrm{clp}(\x, \xv) := \infty$ in which case $F_p(\infty) := 1$.

Given $\smax$, $p_{[0, \smax]}$ can be sampled using the 
inverse transform algorithm, if access
to the cumulative and inverse cumulative density functions of $p$ exists.
The HR algorithm has, thus, proposal density
\begin{align}
    \label{eq:hr-pdf}
    q_{_{\HR}}(\y | \x, \varepsilon) = 
    \frac{p_{[0, \smax]}\!\left(\|\y - \x\|_2 / \varepsilon\right)}
        {(\|\y - \x\|_2 / \varepsilon)^{d-1}} \cdot \frac{\Gamma(d/2)}{2\pi^{d/2}}.
\end{align}
Under the mild assumption of $\mathrm{supp}(p) = \R^+$, it follows that $q(\y|\x) > 0$
for any two $\x, \y \in \R^d$ and, thus, the MH chain with HR proposal distribution
converges.

In the limit of $\smax \to \infty$ and
for the particular choice of $p$ being the $\chi_d$ distribution, 
the HR algorithm draws
samples from $\N(\x, \varepsilon^2\!\eye)$.
This can be seen from decomposing a Gaussian sample $\rand \sim \N(0, \eye)$ 
into its directional and magnitudal components, $\xu = \rand/\|\!\rand\!\|$ and $\gamma = \|\!\rand\!\|$,
respectively.
Now we observe that the direction $\xu \sim \mathcal{U}_{S^{d-1}}$ follows a uniform distribution on the $d-1$ 
hypersphere $S^{d-1}$ and the magnitude $\gamma$ follows a $\chi_d$ distribution.

\subsection{Higher-Order Sampling Algorithms}
\label{sec:higher-order-sampling}

In many sampling problems, access to higher-order information of the
target density $\pi$ is available.
A~well-established method to incorporate first-order information %
in the sampling process is MALA \citep{rossky_brownian_1978}, 
which proposes new samples
\begin{align}
    \y \sim \N\!\left(
        \x + \frac{\varepsilon^2}{2}\grad\log\udens(\x), \
        \varepsilon^2\!\eye
    \right),
\end{align}
before applying the MH filter from \cref{eq:mh-filter}.
By applying some \emph{drift} $\varepsilon^2\grad\log\udens(\x)/2$ 
towards higher-density regions, MALA often achieves higher sampling 
efficiency than zeroth-order samplers.

\mMALA~\citep{girolami_riemann_2011_} 
extends the original MALA by incorporating second-order information.
To this end, \mMALA locally preconditions the MALA update using a user-specified
metric tensor $\G(\x)$, which improves sampling efficiency for densities admitting locally varying or ill-conditioned curvature.
However, mMALA requires the gradient of the metric tensor in order to be the correct
Euler-Maruyama discretization of the Riemannian Langevin diffusion on the manifold given by 
the metric tensor $\G(\x)$.
As this can be computationally expensive, \citet{girolami_riemann_2011_} also propose
a \emph{simplified mMALA} (\smMALA), which assumes $\pdv*{\G(\x)}{\x} = 0$.
The resulting proposal of the simplified algorithm is then
\begin{align}
    \y \sim \N\!\left(
        \x + \frac{\varepsilon^2}{2} \G(\x)^{-1}\grad\log\udens(\x), \
        \varepsilon^2 \G(\x)^{-1}
    \right).
\end{align}
Although this results in a \emph{``wrong''} discretization of the Riemannian Langevin diffusion,
this is a minor issue in the context of MH sampling, as the MH filter (cf.~\cref{eq:mh-filter})
will account for the error and guarantee convergence to the correct target distribution.
Indeed, as \citet{girolami_riemann_2011_} demonstrate empirically, assuming $\pdv*{\G(\x)}{\x} = 0$
does improve the sampling efficiency per unit of wall clock time, being less efficient yet 
much faster per sample.

\begin{figure*}

\vspace{1\baselineskip}

\centering
\begin{subfigure}{.325\linewidth}
    \caption{}
    \label{fig:hr}
    \vspace{-3\baselineskip}
    \resizebox{\columnwidth}{!}
    {
    \begin{tikzpicture}
        \node at (1,0) {\includegraphics[width=\columnwidth]{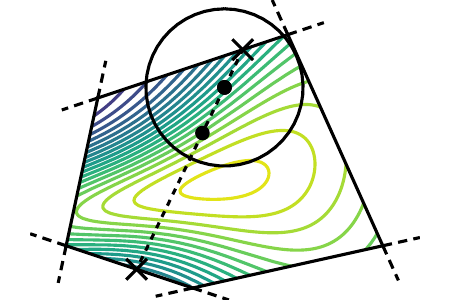}};
        \begin{scope}[cm={3.9,0,0,3.9,(-.95,-1.2)}]
            \node at (-.12, -.2) {};
            \node at ( 1.1,  .9) {};

            \node at (0, .75) {(a)};

\node[right] at (0.5, 0.5) {\contour{white}{$\x$}};
\node[right] at (0.2225910980650916, -0.07419703268836386) {\contour{white}{$\x+\s_a\!\xv$}};
\node[right] at (0.5575862827561264, 0.6191954275853755) {\contour{white}{$\x+\s_b\!\xv$}};
\node[right] at (0.43037383547361874, 0.35588372691118364) {\contour{white}{$\y$}};

        \end{scope}
    \end{tikzpicture}
    }
\end{subfigure}~\ \
\begin{subfigure}{.325\linewidth}
    \caption{}
    \label{fig:lhr}
    \vspace{-3\baselineskip}
    \resizebox{\columnwidth}{!}
    {
    \begin{tikzpicture}
        \node at (1,0) {\includegraphics[width=\columnwidth]{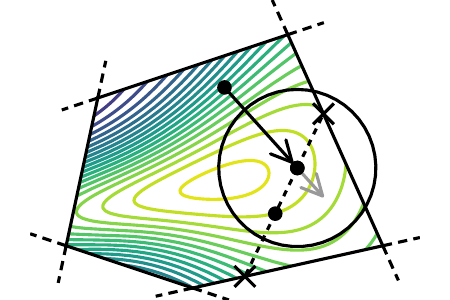}};
        \begin{scope}[cm={3.9,0,0,3.9,(-.95,-1.2)}]
            \node at (-.12, -.2) {};
            \node at ( 1.1,  .9) {};

            \node at (0, .75) {(b)};

\node[left] at (0.5, 0.5) {\contour{white}{$\x$}};
\node[left] at (0.7299665985552689, 0.24554813301228195) {\contour{white}{$\x+\hat{\varepsilon}\nabla\log\phi(\x)$}};
\node[left] at (0.5645900884369858, -0.09675775812511427) {\contour{white}{$\x+\s_a\!\xv$}};
\node[right] at (0.8125666607012855, 0.4165185317749211) {\contour{white}{$\x+\s_b\!\xv$}};
\node[left] at (0.6603404340288876, 0.10143185992346557) {\contour{white}{$\y$}};

        \end{scope}
    \end{tikzpicture}
    }
\end{subfigure}~\ \
\begin{subfigure}{.325\linewidth}
    \caption{}
    \label{fig:smhr}
    \vspace{-3\baselineskip}
    \resizebox{\columnwidth}{!}
    {
    \begin{tikzpicture}
        \node at (1,0) {\includegraphics[width=\columnwidth]{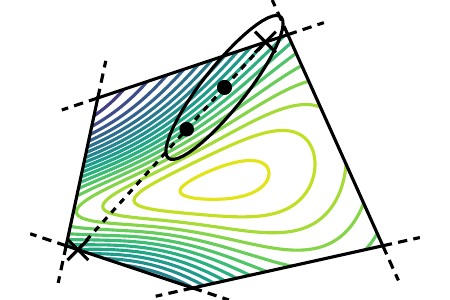}};
        \begin{scope}[cm={3.9,0,0,3.9,(-.95,-1.2)}]
            \node at (-.12, -.2) {};
            \node at ( 1.1,  .9) {};

            \node at (0, .75) {(c)};

\node[right] at (0.5, 0.5) {\contour{white}{$\x$}};
\node[right] at (0.0371032928969739, -0.012367764298991299) {\contour{white}{$\x+\s_a\!\xv$}};
\node[right] at (0.6292758897286683, 0.6430919632428894) {\contour{white}{$\x+\s_b\!\xv$}};
\node[right] at (0.38070130450536366, 0.36795152145947646) {\contour{white}{$\y$}};

        \end{scope}
    \end{tikzpicture}
    }
\end{subfigure}
\caption{
    Visualizations of (a) HR, (b) LHR and (c) smHR on a constrained 2-dimensional toy density. 
    A visulization of smLHR is given in \cref{fig:smlhr}.
}
\end{figure*}

\section{Higher-Order Hit-\&-Run}
\label{sec:higher-order-hit-and-run}

In this section, we present our approach to combine the 
zeroth-order HR algorithm with higher-order derivative information.
Before delving into technical details, we give a more intuitive 
motivation of our approach:
Our ideas are based upon the earlier mentioned observation,
that the zeroth-order HR proposal distribution virtually becomes a
diagonal Gaussian proposal distribution as the constraints \emph{``move further away''}, %
though sampled in a manner that decomposes the sample into 
direction and magnitude components. %
Now as MALA and smMALA are also just Gaussian proposal distributions, 
we devise HR-like proposal distributions, which replicate the relationship between 
HR and diagonal Gaussian proposal distributions, but for the MALA and smMALA.

\subsection{Langevin Hit-\&-Run}
\label{sec:lhr}

First, we consider the incorporation of gradient information into
the HR proposal distribution.
A key issue with incorporating gradients into constrained sampling
stems from the fact that the gradient, if too large, may quickly
push proposals out of $\polytope$, turning them infeasible.
Thus, we add a simple clipping mechanism 
similar to that already employed in HR,
i.e.\ we compute $\kappa := \mathrm{clp}(\x, \grad\log\udens(\x))$ and
then clip the drift term at $\kappa / 2$, i.e.\
we choose $\hat{\varepsilon} = \min\{\varepsilon^2/2, \kappa/2\}$
as the gradient step size.
Doing so, the drifted mean $\x+\hat{\varepsilon}\grad\log\udens(\x)$ of our proposal distribution can go at most
halfway up until it hits the closest constraint. %
As in \cref{eq:intersect}, we set $\kappa := \infty$ and, thus, $\hat{\varepsilon} = \varepsilon^2/2$
if the domain is unbounded in direction of the gradient $\grad\log\udens(\x)$.

Using this new clipped step size $\hat{\varepsilon}$, we
sample our proposal as
\begin{align}
    \label{eq:lhr}
    \y \sim \LHR(\x, \varepsilon^2) := \HR\!\left(
        \x + \hat{\varepsilon}\grad\log\udens(\x), \ 
        \varepsilon^2
    \right),
\end{align}
which we refer to as the \emph{Langevin Hit-\&-Run} ($\LHR$) proposal distribution.
We decide to clip at $\kappa / 2$ as getting exactly onto the constraints, 
i.e.\ choosing to clip at $\kappa$, can have unfavorable effects when computing the intersection $\smax$.
However, choosing exactly the halfway point $\kappa / 2$ is purely heuristically motivated.
A numerical ablation for the fraction of $\kappa$ is provided in \cref{fig:clipping} in the appendix.

As intended, in the limit of $\kappa, \smax \to \infty$, i.e.\ when 
the constraints are \emph{``far away''} from $\x$, and for
$p = \chi_d$, the LHR proposal distribution simplifies to MALA.

\subsection{Simplified Manifold Langevin Hit-\&-Run}
\label{sec:smlhr}

Next up, we incorporate curvature information.
Ignoring the drift term in smMALA for a moment, %
smMALA mainly preconditions the Gaussian proposal locally,
i.e.\ let $\LL\!\LL^{\top} = \G(\x)^{-1}$ be the lower triangular Cholesky
factorization of the inverse metric tensor, then ---
ignoring the drift term --- smMALA samples 
\begin{align}
    \y = \varepsilon\!\LL\!\rand = \varepsilon\|\!\rand\!\|_2\!\LL\!\frac{\rand}{\|\!\rand\!\|_2}
\end{align}
for $\rand \sim \N(0, \eye)$. 
Decomposing $\rand$ into its direction $\rand / \|\!\rand\!\|$ and magnitude $\|\!\rand\!\|$, %
the direction follows a uniform distribution on the $d-1$ hypersphere, 
i.e.~$\rand / \|\!\rand\!\|_2 \sim \mathcal{U}_{S^{d-1}}$, and the magnitude
follows a $\chi_d$-distribution, i.e.~$\|\!\rand\!\|_2 \sim \chi_d$.
Thus, we can sample $\y$ using a slight modification of the $\HR$ algorithm
by preconditioning the directional component 
$\rand / \|\!\rand\!\|$ by $\varepsilon\!\LL$, which --- intuitively speaking ---
results in an $\HR$ algorithm that draws the directional component from an ellipsoid
rather than a hypersphere.
We call this the \emph{elliptical Hit-\&-Run} (EHR) proposal algorithm,
which we state for some general positive definite covariance matrix $\cov$:
\begin{algorithm}[H]
  \caption{Elliptical Hit-\&-Run Proposal}
  \begin{algorithmic}[1]
    \STATE Draw a point $\xu$ uniformly at random from the 
        $d-1$ dimensional
        hypersphere, i.e. $\|\!\xu\!\|_2 = 1$,
    {\color{magenta} \STATE compute the update direction $\xv = \LL\!\xu$,}
    \STATE compute the step size 
        $\smax$ for which $\x + \smax\!\xv$ intersects with the 
        constraints,
    \STATE draw a step $\s \sim p_{[0, \smax]}$ from 
        the step distribution truncated to $[0, \smax]$, and
    \STATE compute the update $\y = \x + \s\!\xv$.
  \end{algorithmic}
  \label{alg:ehr}
\end{algorithm}
We denote \cref{{alg:ehr}} in short as $\y \sim \EHR(\x, \cov)$, 
for some positive definite matrix $\cov$ with lower triangular Cholesky
decomposition $\cov = \LL\!\LL^{\top}$ and assume $\A, \bb,p$ to be given.

Elliptical HR proposals are well-known in the polytope sampling literature 
\citep{lovasz_hit-and-run_2004,haraldsdottir_chrr_2017,theorell_polyround_2022,jadebeck_practical_2023, yallup2026nestedslicesamplingvectorized},
where they are prominently used to apply rounding transformations, which
make the polytope more isotropic, but can also be simply considered as
a preconditioner to the HR proposal.
Unlike in our intended use case, the rounding 
preconditioner is constant and its contribution 
to the proposal density cancels out due to symmetry in the MH filter (cf.\ \cref{eq:mh-filter}).
However, MH chains using the EHR proposal distribution with non-constant $\cov(\x)$
require the proposal density $q_{_{\EHR}}(\y|\x,\cov(\x))$
in order to compute the MH filter from \cref{eq:mh-filter}.
To this end, we derive its proposal density, given in the following Lem.~\ref{lemma:ehr-pdf}. 
The \proofref{lemma:ehr-pdf} is provided in \cref{sec:proofs} of the appendix.
\begin{lemma}
    \label{lemma:ehr-pdf}
    Assume $\x \in \polytope$ and $\cov$ \spd{}, then
    for $\y \sim \EHR(\x, \cov)$, $\y \in \polytope$ and has density
    \begin{align}
        \label{eq:ehr-pdf}
        q_{_{\EHR}}(\y|\x, \cov) \propto 
            \frac{p_{[0, \smax]}\!\left(\|\LL^{-1}(\y-\x)\|_2\right)}{\|\LL^{-1}(\y-\x)\|_2^{d-1} |\det\LL|}.
    \end{align}
\end{lemma}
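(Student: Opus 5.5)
The plan is to reduce to the standard Hit-\&-Run density in \cref{eq:hr-pdf} with the linear change of variables $\z = \LL^{-1}(\y - \x)$. Feasibility is the easy part. Since $\s \in [0,\smax]$ and $\smax = \mathrm{clp}(\x,\xv)$ is by \cref{eq:intersect} the largest $\s$ with $\x + \s\xv \in \polytope$, convexity of $\polytope$ together with $\x \in \polytope$ gives $\y = \x + \s\xv \in \polytope$. Equivalently, each constraint $\makebold{a}_i^\top(\x+\s\xv)\le b_i$ holds: for $\makebold{a}_i^\top\xv\le 0$ this is trivial, and for $\makebold{a}_i^\top\xv>0$ it holds by the minimum in \cref{eq:intersect}.

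For the density, I will first describe the law of $\z := \s\xu = \LL^{-1}(\y-\x)$. Here $\xu \sim \mathcal{U}_{S^{d-1}}$ is independent of how we parametrize the line. The step $\s$, conditional on $\xu$, has density $p_{[0,\smax(\xu)]}$, where $\smax(\xu) = \mathrm{clp}(\x, \LL\xu)$. This depends on $\xu$, and hence on $\z$ only through the direction $\z/\|\z\|_2$. Since $\z \mapsto (\|\z\|_2, \z/\|\z\|_2)$ is polar coordinates, the joint density of $(\s,\xu)$ with respect to $d\s\, d\sigma(\xu)$ is $p_{[0,\smax]}(\s)\cdot \Gamma(d/2)/(2\pi^{d/2})$. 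The polar Jacobian $d\z = \s^{d-1}\, d\s\, d\sigma(\xu)$ then yields
\[
q_{\z}(\z) = \frac{p_{[0,\smax]}(\|\z\|_2)}{\|\z\|_2^{d-1}}\cdot\frac{\Gamma(d/2)}{2\pi^{d/2}}.
\]
This is exactly the computation behind \cref{eq:hr-pdf} with $\varepsilon=1$, so I will cite that formula rather than redo it.

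Next, the affine map $\y = \x + \LL\z$ has Jacobian $|\det\LL|$. Because $\cov$ is s.p.d., $\LL$ is invertible and $|\det \LL| = \det(\cov)^{1/2} > 0$. Therefore $q_{_{\EHR}}(\y|\x,\cov) = q_{\z}(\LL^{-1}(\y-\x))/|\det\LL|$, which is \cref{eq:ehr-pdf} up to the constant $\Gamma(d/2)/(2\pi^{d/2})$. Here $\smax$ is evaluated at the direction $\LL\,\z/\|\z\|_2$, which is a positive multiple of $\y-\x$.

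The main point requiring care is the well-definedness of $\smax$ as a function of $(\x,\y)$ rather than of $\xu$. Since $\mathrm{clp}(\x, c\xv) = \mathrm{clp}(\x,\xv)/c$ for $c>0$, I need to state that $\smax$ is taken with respect to the normalized direction $\LL\xu$, $\xu = \z/\|\z\|_2$. This makes the truncation in $p_{[0,\smax]}$ consistent with the argument $\|\z\|_2$.

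Two further details need attention. The case $\smax=\infty$ is handled by the convention $F_p(\infty)=1$. The point $\y=\x$ is a null set and can be ignored. The density is also supported only on $\polytope$, consistent with the feasibility claim.
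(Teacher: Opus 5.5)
Your proposal is correct and is essentially the paper's proof: the paper performs the single change of variables $\y-\x=\s\LL\xu$ with Jacobian $|\det\LL|\,\s^{d-1}$, direction density $\Gamma(d/2)/(2\pi^{d/2})$ and conditional step density $p_{[0,\smax]}(\s)$, which is exactly your polar-coordinate step composed with the linear map $\z\mapsto\x+\LL\z$. Your explicit feasibility argument and your observation that $\smax$ must be read as $\mathrm{clp}(\x,\LL\xu)$ with $\xu=\LL^{-1}(\y-\x)/\|\LL^{-1}(\y-\x)\|_2$ make precise two points the paper leaves implicit; citing \cref{eq:hr-pdf} only at $\varepsilon=1$ is also the safe choice, since for general $\varepsilon$ that display omits a factor $\varepsilon^{-d}$.
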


With the $\EHR$ and $\LHR$ algorithms now at hand, 
we state the \emph{simplified manifold Langevin
Hit-\&-Run} ($\smLHR$) proposal as 
\begin{align}
    \y \sim 
        & \smLHR(\x, \varepsilon^2) := \\
        & \EHR\!\left(\x + \,\hat{\varepsilon}\G(\x)^{-1} \grad\log\udens(\x), \
            \varepsilon^2\!\G(\x)^{-1}\right), \nonumber
\end{align}
where $\hat{\varepsilon}$ is the clipped step size as in $\LHR$, 
but for the natural gradient $\G(\x)^{-1}\grad\log\udens(\x)$.

We further consider the
\emph{simplified manifold Hit-\&-Run} ($\smHR$),
which omits the gradient step of $\LHR$ and only retains the 
elliptical preconditioning component of $\smLHR$
\begin{align}
    \begin{array}{rl}
    \y \sim \smHR(\x, \varepsilon^2) := \EHR\!\left(\x, \
            \varepsilon^2\!\G(\x)^{-1}\right).
    \end{array}
\end{align}
This serves to empirically investigate the influence of 
the introduced changes to the HR algorithm.

\subsubsection{Choice of Metric Tensors}
\label{sec:metric}

The choice of an appropriate metric is not unique \citep{girolami_riemann_2011_}.
While the Hessian $\hessian(\x) := \grad^2\log\udens(\x)$ of the 
log-density is a natural choice if the latter happens
to be concave, it may not be positive definite in the general case,
disqualifying it as a metric tensor.
In Bayesian inference problems, the sum of the Fisher information matrix and
the Hessian of a strictly log-concave 
prior yield a meaningful metric \citep{girolami_riemann_2011_}. 
For Riemannian sampling of general densities, the \emph{SoftAbs} \citep{betancourt2013softabs}
and the \emph{Monge} metric \citep{hartmann2022lagrangian} have been proposed.

In this work, we additionally test two novel metrics based on
the Hessian of the log-density.
The first new metric that we introduce is the squared Hessian metric
\begin{align}
    \label{eq:metric-gc}
    \G_{\text{sq}}(\x) := \hessian(\x)^\top\!\hessian(\x) + \delta\!\eye,
\end{align}
which has the favorable property of being positive definite for 
$\delta > 0$,
as the squared Hessian term being a Gram matrix is positive semi-definite.
However, if the Hessian happens to be locally positive definite, 
the squared Hessian will apply an undesired second scaling. 
For this reason, we propose a heuristically motivated approach to %
adjust the scaling of the squared Hessian metric by taking the element-wise
square root of its non-zero entries, i.e.\
let $\makebold{M} := \hessian(\x)^\top\!\hessian(\x) \in \R^{d \times d}$
with entries $m_{ij}$ and $\mathrm{sgn}$ the sign function, then
the new scaled squared Hessian metric becomes
\begin{align}
    \label{eq:metric}
    \G_{\text{sc}}(\x) := \left(\mathrm{sgn}(m_{ij}) \cdot \sqrt{|m_{ij}|}\right)_{ij} + \delta\!\eye.
\end{align} %
As the new metric may not be positive definite, we 
use the diagonal of \cref{eq:metric} as a surrogate metric tensor, 
if the Cholesky decomposition of the former fails.

\paragraph{Controlling the Step Size}

In practice, the $\delta$ parameter is not known a priori and typically
needs to be tuned using e.g.\ grid search approaches, as a too large $\delta$
makes $\G(\x)$ more and more isotropic, effectively reverting the 
contribution of the Hessian.
If $\delta$ is too small, this leads to numerical issues when
$\G(\x)$ is not positive definite.
Hence, $\delta$ needs to be tuned for each individual problem, similarly to the step size.
However, noting that $\delta$ also affects the step size and 
in order to maintain a single tunable hyperparameter, we propose
a novel kind of step size parametrization, where we keep $\varepsilon=1$ fixed
and instead only tune $\delta$.
In order to keep this approach of controlling the step size comparable
to the \emph{``classical''} one, we reparametrize $\delta = \lambda^{-2}$ and
consider $\lambda$ the tunable hyperparameter.
To distinguish the two approaches, we refer to the algorithms with fixed $\delta$
as \smMALA$_{\varepsilon}$, smHR$_{\varepsilon}$, and smLHR$_{\varepsilon}$,
and those that fix $\varepsilon=1$ as \smMALA$_{\delta}$, smHR$_{\delta}$ and smLHR$_{\delta}$.
Note that we can apply the $\delta$-parametrization to general 
metrics $\G(\x)$ using $\hat{\G}(\x) := \G(\x) + \delta\!\eye$ as metric instead.
We evaluate both approaches experimentally 
in \cref{sec:experiments}.

\begin{figure*}
\begin{subfigure}{.3\linewidth}
    \centering
    \caption{}
    \label{fig:polytopes}
    \vspace{-2\baselineskip}
    \begin{tikzpicture}[x=75px,y=75px]
        \node at (0, 0) {\includegraphics[height=75px]{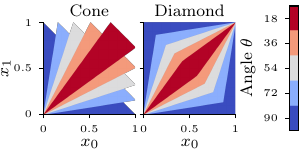}};
        \node at (-.97, .4) {\small (a)};
    \end{tikzpicture}
\end{subfigure}~\quad~
\begin{subfigure}{.6\linewidth}
    \centering
    \caption{}
    \label{fig:densities} %
    \vspace{-2\baselineskip}
    \begin{tikzpicture}[x=75px,y=75px]
        \node at (0, 0) {\includegraphics[height=75px]{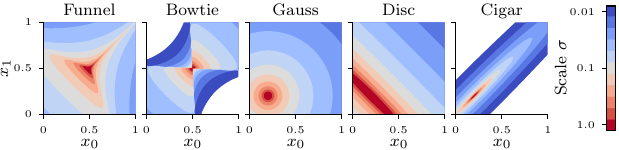}};
        \node at (-2.03, .4) {\small (b)};
        
        \node at (0*.688-.275,-.138) {\footnotesize .};
        \node at (1*.688-.275,-.138) {\footnotesize .};
        \node at (2*.688-.275,-.138) {\footnotesize .};
        
        \node at (0*.688-.2,-.138) {\scriptsize $\mu$};
        \node at (1*.688-.2,-.138) {\scriptsize $\mu$};
        \node at (2*.688-.2,-.138) {\scriptsize $\mu$};
    \end{tikzpicture}
\end{subfigure}
\caption{
    Two-dimensional, exemplary visualizations of the (a) polytopes and (b) densities under consideration,
    as well as the effect of the respective scale $\sigma$ and angle $\theta$
    parameters.
    The mode of the Gaussian densities is controlled by $\mu$ along the $(1, \ldots, 1)$ axis.
}
\label{fig:toy-problems}
\end{figure*}

\subsection{Convergence Analysis}
\label{sec:theory}

We state the convergence of our algorithms under the following mild assumption in \cref{thm:convergence}:
\begin{enumerate}[leftmargin=1cm]
    \item[$\mathit{(A1)}$] \textit{The state space $\polytope \subseteq \R^d$ is convex,} %
    \item[$\mathit{(A2)}$] \textit{$p$ is positive and continuous on $\R^{+}$, and}  %
    \item[$\mathit{(A3)}$] \textit{$\dens$ is finite and twice continuously differentiable on $\polytope$.} %
\end{enumerate}
\begin{theorem}
    \label{thm:convergence}
    Let $P_{\LHR}, P_{\smHR}, P_{\smLHR}$ be the MH kernels 
    (cf.~\cref{eq:mh-kernel}) with proposal
    distributions $\LHR$, $\smHR$, $\smLHR$, respectively.
    Under assumptions $\mathit{A1\text{--}3}$, 
    for $\dens$-almost all $\x$
    \begin{align}
        \| P^n(\x, \, \cdot \, ) - \pi(\,\cdot\,) \|_{_{\mathrm{TV}}} \to 0, \quad \text{ as } n \to \infty,
    \end{align}
    and for $P$ being any of $P_{\LHR}, P_{\smHR}, P_{\smLHR}$.
\end{theorem}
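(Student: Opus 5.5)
The plan is to use the standard sufficient condition recalled in \cref{sec:mh}. Following \citet{roberts_general_2004}, an MH chain whose proposal density is positive, $q(\y|\x)>0$, for all $\x,\y \in \supp(\dens)$ is $\dens$-irreducible and aperiodic. It is also $\dens$-invariant, since the MH filter enforces detailed balance. Hence $\|P^n(\x,\cdot)-\dens\|_{\mathrm{TV}}\to 0$ for $\dens$-almost all $\x$. The proof therefore reduces to two tasks. First, each proposal must produce feasible points and have a well-defined density that can be used in \cref{eq:mh-filter}. Second, that density must be strictly positive on $\polytope\times\polytope$, apart from null sets that do not matter.

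For feasibility and the density, I would first handle the shifted center. In all three samplers the proposal is $\EHR(\z,\cov(\x))$, where
\[
\z=\x+\hat{\varepsilon}\,\G(\x)^{-1}\grad\log\udens(\x).
\]
For LHR take $\G=\eye$; for smHR take $\hat\varepsilon=0$. By the clipping rule $\hat\varepsilon\le\kappa/2$, with $\kappa=\mathrm{clp}(\x,\cdot)$, and by convexity (A1), the center $\z$ lies on the segment from $\x$ to the boundary point $\x+\kappa\,\G^{-1}\grad\log\udens$. So $\z\in\polytope$; in fact $\z$ is interior to the line segment when $\kappa<\infty$. The metric $\cov(\x)=\varepsilon^2\G(\x)^{-1}$ is s.p.d. by construction: either $\delta>0$, or the diagonal surrogate is used, which I would require to be positive. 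Assumption (A3) guarantees that the gradient and Hessian exist, so $\z$ and $\G(\x)$ are well defined. \cref{lemma:ehr-pdf} then gives $\y\in\polytope$ and the explicit density $q_{\EHR}(\y|\z,\cov)$. The MH acceptance ratio is therefore computable.

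For positivity, fix $\x$ and any $\y\in\polytope$ with $\y\neq\z$. Set $\xu=\LL^{-1}(\y-\z)/\|\LL^{-1}(\y-\z)\|$ and $\s=\|\LL^{-1}(\y-\z)\|$. By convexity, the segment from $\z$ to $\y$ lies in $\polytope$, so $\s\le\smax(\z,\LL\xu)$. By (A2), $p_{[0,\smax]}(\s)>0$. In \cref{eq:ehr-pdf} the denominator is finite and nonzero, since $\LL$ is invertible. Hence $q(\y|\x)>0$ for all $\y\in\polytope\setminus\{\z\}$, a set of full $\dens$-measure. This yields $\dens$-irreducibility. Aperiodicity follows because positivity of the density gives a positive chance of moving into any set of positive measure within one step.

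The main obstacle is the boundary and degenerate cases. If $\x$ lies on $\partial\polytope$ and the gradient points outward, then $\kappa=0$ and $\z=\x$. The set of such $\x$ is $\dens$-null, which is exactly why the theorem is stated for almost all $\x$. The surrogate-metric fallback in \cref{sec:metric} also needs care: I would add the standing requirement that the diagonal surrogate is positive, enforced through $\delta$. Measurability of $\x\mapsto\G(\x)$ is needed so that the kernel is a proper Markov kernel, and it follows from the continuity in (A3). Finally, whenever the acceptance ratio has $q(\y|\x)>0$, I need $q(\x|\y)>0$ as well. This holds by the same positivity argument applied from $\y$, for $\x\neq\z(\y)$. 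So the acceptance probability is positive off null sets, and irreducibility transfers from the proposal to the MH kernel.
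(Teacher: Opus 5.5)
Your proposal is correct and follows essentially the same route as the paper. In both arguments, $\dens$-invariance comes from the MH construction, and $\dens$-irreducibility and aperiodicity come from positivity of the proposal density. Feasibility of the clipped center $\z$ follows from $\hat{\varepsilon}\le\kappa/2$ plus convexity, and positivity of the $\EHR$ density on $\polytope$ follows from convexity, (A2), and invertibility of $\LL$. Your extra checks go somewhat beyond what the paper writes out without changing the structure: positivity of the reverse density $q(\x|\y)$, the $\dens$-null boundary, the s.p.d.\ fallback metric, and measurability.
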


A proof is provided in \Cref{sec:proofs} in the appendix, here we give a sketch of our argument.
Following \citet[][Theorem 3.7(i) \& Prop.\ 6.3]{nummelin_general_1984} 
and \citet[][Theorem 1]{tierney_markov_1994},
any $\dens$-irreducible, aperiodic Markov chain with invariant measure
$\dens$ converges in total variation for $\dens$-almost all $\x \in \polytope$.
However, since the MH chain has $\dens$ as its invariant measure by construction
\citep[][Prop.\ 1 \& 2]{roberts_general_2004}, it remains to show $\dens$-irreducibility
and aperiodicity.
Aperiodicity follows from the finiteness of $\dens$ and $\dens$-irreducibility
of the MH chain \citep[][Running Example]{roberts_general_2004}.
A sufficient condition for $\dens$-irreducibility is 
positivity and continuity of the proposal density on $\supp(\dens)=\polytope$ 
\citep[][Eq.~(7.5)]{robert_monte_2004}.
For \cref{eq:ehr-pdf}, 
both requirements follow from assumptions $\mathit{(A1\text{--}2)}$ and an additional
assumption of $\cov$ being \spd{}
Note that $\cov$ being \spd{}\ lies in the responsibility of the practitioner.
Since our clipped gradient step does not violate 
the assumption of $\x \in \polytope$ by construction, 
\cref{lemma:positivity} remains valid.

\section{Related Work}
\label{sec:related-work}

Most work on linearly constrained sampling focuses on uniform densities. 
\citet{smith_efficient_1984} introduced polytope-adapted HR. %
\citet{Berbee1987} combined HR with Gibbs sampling, 
also known as \emph{Coordinate \HRshortname{}}, which was shown to be fast and reliable 
for uniform sampling, both empirically \citep{emiris_efficient_2014} and 
theoretically \citep{Laddha2023}.
Moreover, preconditioning \citep{haraldsdottir_chrr_2017} and thinning \citep{jadebeck_practical_2023} to minimize cost per sample were shown to further improve performance. 
Alternatives to \HRshortname{} incorporate linear constraints using local preconditioning 
based on the Hessian of a barrier function. 
For example, \emph{Dikin} walk uses the log-barrier \citep{kannan_random_2012} and \emph{Vaidya} walk
uses volumetric log-barrier \citep{chen_vaidya_2017}.
Reflection-based approaches also exist \citep{GRYAZINA2014497}.

Apart from uniform sampling, research has mainly explored log-concave distributions. 
Existing uniform samplers have been transformed into non-uniform ones using the MH filter
and were used to sample Gaussian distributions for polytope volume estimation 
\citep{cousins_bypassing_2015, Emiris2018}.
Implementations for MH-based constrained sampling were provided by \citet{jadebeck_hops_2021} \&
\citet{paul_hopsy_2024}.
The first methods to incorporate first-order information into non-uniform constrained sampling were HMC methods, 
which use the gradient of their target distribution.
To address constraints, these methods modify the underlying Hamiltonian dynamics: 
\citet{Kook2022} and \citet{NEURIPS2023_6745cb98} integrated penalty terms to 
slow down particles in proximity to the boundary, while \citet{Chalkis2023} introduced a reflection mechanism on the boundaries.
These approaches were benchmarked numerically only for truncated Gaussian
or uniform target densities.
Based on \citet{girolami_riemann_2011_} and \citet{kook24b}, 
\citet{srinivasan_high-accuracy_2024} propose the 
MAPLA, a first-order sampler for general 
distributions on convex domains. 
MAPLA can be understood as a a \smMALA{} variant, which uses the log-barrier's Hessian as metric tensor, 
preventing second-order information of the log-density from being easily integrated.

\section{Experiments}
\label{sec:experiments}

We benchmark our algorithms experimentally on 
synthetic constrained densities with varying curvature
or ill-conditioned covariance, and %
on real-world
examples from Bayesian \Ciso{} metabolic flux analysis \citep{theorell_be_2017_}.
All algorithms were implemented in \blackjax~\citep{cabezas_blackjax_2024}.
We used \jax{}~\citep{jax2018github} to compute gradients and Hessians of the synthetic densities.
For the \Ciso-MFA problems, we used the simulation software \xcflux{} \citep{stratmann202513cflux}, 
which also provides gradient and Fisher information computations.
Our \blackjax{} fork is publicly available at 
\href{https://github.com/ripaul/blackjax}{\url{github.com/ripaul/blackjax}} %
and the code for our experiments and figures at 
\href{https://github.com/ripaul/manifold-hit-and-run}{\url{github.com/ripaul/manifold-hit-and-run}}.
All experiments are replicated using four different random seeds.

\subsection{Synthetic Densities}
\label{sec:synthetic}

\paragraph{Problem Setup}

For our constrained densities, we \emph{``placed''} unconstrained densities
in two different kinds of polytopes, which we refer to as \emph{cone} and \emph{diamond}.
Both polytopes are parametrized by an angle parameter $\theta \in (0^{\circ}, 90^{\circ}]$ 
controlling its narrowness.
In all cases, the constructed polytope is contained within the $[0, 1]^d$ box.
We visualize this in \cref{fig:polytopes}, more details 
including the generalization of our polytopes to higher dimensions are provided in 
\cref{sec:toy} in the appendix.

As densities we consider a slight modification of Neal's \emph{funnel} \citep{neal_slice_2003},
a \emph{bowtie} distribution and a total of six Gaussians with isotropic and anisotropic
covariance structures and two different locations of the mean, such that the mode is 
once contained within the polytope $(\mu = 0.5)$ and once located on the polytope's border $(\mu=0)$.
Bowtie and funnel are shifted such that~their mode is contained within the polytope.
A visualization is given in \cref{fig:densities}, more details are provided in 
\cref{sec:toy} in the appendix.
Moreover, we introduce a scaling parameter $\sigma$,
which allows to control how close the densities'
high-probability region is to the constraints.
That is, if the mode is not located at the polytope's border, 
then for small $\sigma$ the target distribution may be far 
enough from the constraints to consider the problem effectively unconstrained,
whereas for larger $\sigma$ the target distribution approaches a uniform
distribution on $\polytope$.

Finally, for our benchmark, we build our set of problems as 
the Cartesian product of our eight different densities at
log-uniformly chosen scales $\sigma=10^{-2}, 10^{-1.5}, \ldots, 10^{1}$
restricted to our two kinds of polytopes, parametrized with angles
$\theta=9^{\circ}, 19^{\circ}, 45^{\circ}, 90^{\circ}$ and for dimensions $d=2, 4, 8, 16, 32$,
resulting in 2240 different test problems.

\paragraph{Evaluation}

\begin{figure}
    \centering
    \begin{subfigure}{.32\linewidth}
    \caption{}
    \label{fig:ess_angle}
    \vspace{-3\baselineskip}
    \end{subfigure}
    \begin{subfigure}{.32\linewidth}
    \caption{}
    \label{fig:ess_scale}
    \vspace{-3\baselineskip}
    \end{subfigure}
    \vspace*{-.8mm}
    
    \begin{tikzpicture}[x=70px,y=80]
        \node[anchor=north] at (-.5, .63) {\footnotesize $\mu = 0$};
        \node[anchor=north] at ( .5, .63) {\footnotesize $\mu = 0.5$};
        \node[anchor=north] at (0, .77) {\footnotesize Cigar / Disc / Gauss};
        \node[anchor=north] at (1.5, .77) {\footnotesize Bowtie / Funnel};
        
        \node[anchor=east] at (0, 0) {\includegraphics[height=75px]{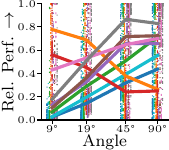}};
        \node[anchor=east] at (1, 0) {\includegraphics[height=75px]{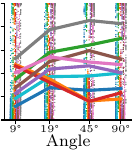}};
        \node[anchor=east] at (2, 0) {\includegraphics[height=75px]{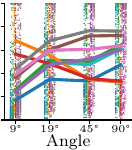}};
        
        \node[anchor=east] at (0,-1-.1) {\includegraphics[height=75px]{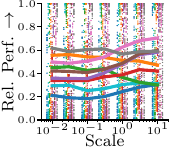}};
        \node[anchor=east] at (1,-1-.1) {\includegraphics[height=75px]{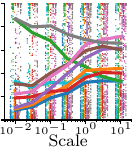}};
        \node[anchor=east] at (2,-1-.1) {\includegraphics[height=75px]{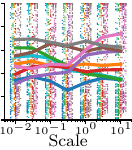}};

        \node at (-1.2, .55) {\small (a)};
        \node at (-1.2,-.45-.1) {\small (b)};
    \end{tikzpicture}
    \caption{
        Relative performance on the synthetic densities grouped
        by (a) the polytope angle and (b) the densities' scale parameter.
        Individual dots show relative performance of individual sampling runs,
        the solid lines show average relative performance across all problems.
        Algorithms shown are \protect\RWMHm{}, \protect\MALAm{}, \protect\smMALAdm{}, 
        \protect\Dikinm{}, \protect\MAPLAm{}, \protect\HRm{}, \protect\LHRm{}, 
        \protect\smHRdm{}, and \protect\smLHRdm{}.
    }
    \label{fig:rel-eff}
\end{figure}

Our main criteria for the evaluation of our algorithms are 
the L1 error of the joint marginal distribution across the first two dimensions
and
the minimal marginal effective sample size (\minESS{}),
similar to \citet{biron-lattes_automala_2024}.
For the former, we ran large-scale MCMC simulations using samplers
from the toolbox \texttt{hopsy} \citep{paul_hopsy_2024} to generate \emph{``ground-truth''} marginal distributions.
L1 error is then simply computed across the 2D histograms. 
We give more details in \cref{app:eval} in the appendix.

Since the achievable \minESS{} depends strongly on the problem at hand, we 
consider the relative performance of an algorithm for 
every problem as the ratio between its achieved average \minESS{} and the highest achieved
average \minESS{} across replicates on the same problem.%
We perform
a log-uniform grid search of the step size parameter for each algorithm under consideration.
and always report results for the step size achieving the smallest L1 error.
We test a Gaussian random walk MH sampler (RWMH), MALA, smMALA,
Dikin, MAPLA, \HRshortname{} and our introduced algorithms 
LHR$_{\varepsilon}$, smHR$_{\varepsilon}$ and smLHR$_{\varepsilon}$, 
as well as
LHR$_{\delta}$, smHR$_{\delta}$ and smLHR$_{\delta}$.
For all HR variants, we choose $p$ to be a half-normal distribution which moment-matches
a $\chi_d$ distribution, as 
using a $\chi_d$ distribution
is numerically expensive (cf.\ \cref{fig:chi}).
For the manifold samplers, we use the Hessian as metric tensor for all Gaussian targets
and test the SoftAbs \citep{betancourt2013softabs}, Monge \citep{hartmann2022lagrangian}, 
squared and scaled squared Hessian metrics 
from \cref{eq:metric-gc} \& \cref{eq:metric}, respectively, for the bowtie and funnel densities.
For each experiment we run four chains, 
drawing $20\,000\cdot d$ samples each and thinning them down to $20\,000$ to reduce the memory footprint.

\begin{figure}
    \centering
    \hspace*{-3mm}~
    \begin{subfigure}{.78\columnwidth}
        \caption{}
        \label{fig:metrics}
        \vspace{-2\baselineskip}
        \begin{tikzpicture}[x=80px, y=80px]
            \node at (0, 0) {\includegraphics[height=80px]{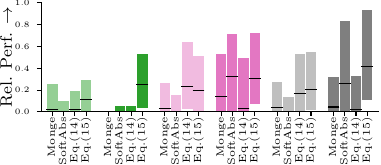}};
            \node at (-.8,.45) {\small (a)};
        \end{tikzpicture}
    \end{subfigure}~
    \begin{subfigure}{.195\columnwidth}
        \caption{}
        \label{fig:stepsizes}
        \vspace{-2\baselineskip}
        \begin{tikzpicture}[x=80px, y=80px]
            \node at (0, 0) {\includegraphics[height=80px]{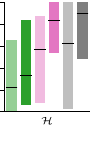}};
            \node at (-.14,.45) {\small (b)};
        \end{tikzpicture}
    \end{subfigure}
    \caption{
        Relative performance of \protect\smMALAem{}, \protect\smMALAdm{}, 
        \protect\smHRem{}, \protect\smHRdm{}, \protect\smLHRem{} \& \protect\smLHRdm{} using 
        (a) different choices of metric for the bowtie and funnel, and 
        (b) on the Gaussian targets.
        Colored bars show the 25\%
    }
    \label{fig:param}
\end{figure}

\begin{table*}[!ht]
    \centering
    \caption{
        Mean L1 error of joint marginals of first and second dimension.
        Best entries per column are marked in bold.
    }
    \resizebox{!}{3.7cm}{
    \begin{tikzpicture}
        \node at (0, 0) {\begin{tabular}{r|S[table-format=1.3]S[table-format=1.3]S[table-format=1.3]S[table-format=1.3]|S[table-format=1.3]S[table-format=1.3]S[table-format=1.3]S[table-format=1.3]|S[table-format=1.3]S[table-format=1.3]S[table-format=1.3]S[table-format=1.3]}
            \hline
            \multirow{2}{*}{Sampler}
                & \multicolumn{4}{c|}{Gauss / Cigar / Disc ($\mu = 0$)}
                & \multicolumn{4}{c|}{Gauss / Cigar / Disc ($\mu = 0.5$)}
                & \multicolumn{4}{c}{Bowtie / Funnel}       \\
                & {\footnotesize 9°} & {\footnotesize 19°} & {\footnotesize 45°} & {\footnotesize 90°} 
                & {\footnotesize 9°} & {\footnotesize 19°} & {\footnotesize 45°} & {\footnotesize 90°} 
                & {\footnotesize 9°} & {\footnotesize 19°} & {\footnotesize 45°} & {\footnotesize 90°} \\
            \hline
\RWMHm{}    & \cs 0.373          & \cs 0.094          & \cs 0.046          & \cs 0.047          &  \cs 0.045          & \cs 0.043          & \cs 0.049          & \cs 0.051          &  \cs 0.05           & \cs 0.045          & \cs 0.05           & \cs 0.052          \\
\MALAm{}    & \cs 0.374          & \cs 0.093          & \cs 0.045          & \cs 0.045          &  \cs 0.043          & \cs 0.042          & \cs 0.047          & \cs 0.049          &  \cs 0.046          & \cs 0.043          & \cs 0.048          & \cs 0.051          \\
\smMALAdm{} & \cs 0.366          & \cs 0.088          & \cs 0.043          & \cs 0.044          &  \cs 0.041          & \cs 0.04           & \cs 0.045          & \cs 0.046          &  \cs 0.043          & \cs 0.04           & \cs 0.045          & \cs 0.047          \\
\Dikinm{}   & \cs \textbf{0.028} & \cs \textbf{0.034} & \cs 0.045          & \cs 0.049          &  \cs 0.042          & \cs 0.045          & \cs 0.051          & \cs 0.053          &  \cs \textbf{0.037} & \cs 0.043          & \cs 0.05           & \cs 0.054          \\
\MAPLAm{}   & \cs 0.03           & \cs 0.036          & \cs 0.047          & \cs 0.051          &  \cs 0.05           & \cs 0.05           & \cs 0.052          & \cs 0.051          &  \cs 0.04           & \cs 0.045          & \cs 0.051          & \cs 0.055          \\
\HRm{}      & \cs 0.109          & \cs 0.039          & \cs \textbf{0.041} & \cs 0.044          &  \cs 0.039          & \cs 0.042          & \cs 0.048          & \cs 0.049          &  \cs 0.041          & \cs 0.043          & \cs 0.049          & \cs 0.051          \\
\LHRm{}     & \cs 0.109          & \cs 0.038          & \cs \textbf{0.041} & \cs \textbf{0.043} &  \cs 0.038          & \cs 0.04           & \cs 0.047          & \cs 0.048          &  \cs 0.04           & \cs 0.041          & \cs 0.048          & \cs 0.05           \\
\smHRdm{}   & \cs 0.078          & \cs 0.036          & \cs \textbf{0.041} & \cs 0.044          &  \cs \textbf{0.036} & \cs 0.04           & \cs 0.046          & \cs 0.047          &  \cs 0.038          & \cs 0.04           & \cs 0.046          & \cs 0.047          \\
\smLHRdm{}  & \cs 0.103          & \cs 0.037          & \cs \textbf{0.041} & \cs \textbf{0.043} &  \cs \textbf{0.036} & \cs \textbf{0.038} & \cs \textbf{0.044} & \cs \textbf{0.045} &  \cs \textbf{0.037} & \cs \textbf{0.038} & \cs \textbf{0.044} & \cs \textbf{0.046} \\
            \hline
        \end{tabular}};
    \end{tikzpicture}
    }
    \label{tab:err}
\end{table*}
\paragraph{Results}

In \cref{fig:ess_angle} and \cref{fig:ess_scale}, we show relative performance of the samplers
as functions of the polytopes' angle parameter
and the densities' scale parameter, respectively.
From \cref{fig:ess_angle}, we observe that \Dikinm{} and \MAPLAm{}
perform particularly well for the more narrow polytopes ($\theta=9^{\circ}, 19^{\circ}$) 
and for the Gaussian target's with $\mu = 0$.
For the other problems, we observe notable improvement in performance
achieved by our methods \LHRm{}, \smHRdm{}, and \smLHRdm{}.
This also becomes apparent in \cref{tab:err}, where \Dikinm{} and \MAPLAm{} 
achieve lowest sampling error for the Gaussian target's at $\mu=0$ and $\theta=9^{\circ}, 19^{\circ}$,
but are outperformed by our HR variants on all other problems.
Our observations are in line with the known issues of \HRshortname{} in ill-conditioned 
polytopes, namely getting stuck in sharp \emph{``corners''} \citep{lovasz_hit-and-run_2004}.
Further, as expected, the unconstrained samplers \RWMHm{}, \MALAm{}, \smMALAdm{} 
are consistently outperformed by their HR variants \HRm{}, \LHRm{} and \smLHRdm{}.
The decrease in performance of the unconstrained samplers is particularly noticeable 
in \cref{fig:ess_scale}, where the constraints become more influential as the scale
parameter increases.
\cref{fig:ess_scale} also reveals how gradient information becomes
less effective in the more uniform setting, as \HRm{} and \smHRdm{} become
competitive and even surpass their gradient-based counterparts \LHRm{} and \smLHRdm{}
as the scale parameter increases.

\begin{table}
    \centering
    \caption{
        Results on \Ciso-MFA problems
        Best entries per column are marked in bold.
    }
    \resizebox{!}{3.44cm}{
    \begin{tikzpicture}
        \node at (0, 0) {\begin{tabular}{r|S[table-format=1.4]S[table-format=3.4]S[table-format=1.4]|S[table-format=1.4]S[table-format=1.4]S[table-format=1.4]}
            \hline
            \multirow{2}{*}{Sampler}
                & \multicolumn{3}{c|}{Stationary \emph{Spiralus}}
                & \multicolumn{3}{c }{Instationary \emph{Spiralus}}
            \\
            & \multicolumn{1}{c}{\footnotesize \minESS} & \multicolumn{1}{c}{\footnotesize \minESS/s} & \multicolumn{1}{c|}{\footnotesize $\hat{R}$}
            & \multicolumn{1}{c}{\footnotesize \minESS} & \multicolumn{1}{c}{\footnotesize \minESS/s} & \multicolumn{1}{c }{\footnotesize $\hat{R}$}
            \\
            \hline
\RWMHm{}   &\cs 0.0337          & \cs 19.891            &  \cs 1.0052              & \cs 0.0             & \cs 0.0             & \cs 2.4501          \\
\MALAm{}   &\cs 0.0154          & \cs 4.1845            &  \cs 1.0061              & \cs 0.0             & \cs 0.0             & \cs {$\infty$}      \\
\smMALAdm{}&\cs \textbf{0.8667} & \cs \textbf{149.5027} &  \cs \textbf{1.0\p{000}} & \cs 0.0014          & \cs 0.0009          & \cs 1.0278          \\
\Dikinm{}  &\cs 0.0938          & \cs 55.5012           &  \cs 1.0011              & \cs 0.0             & \cs 0.0             & \cs 2.5191          \\
\MAPLAm{}  &\cs 0.0231          & \cs 6.0852            &  \cs 1.0029              & \cs 0.0             & \cs 0.0             & \cs {$\infty$}      \\
\HRm{}     &\cs 0.0321          & \cs 13.6633           &  \cs 1.0031              & \cs 0.0             & \cs 0.0             & \cs 2.4559          \\
\LHRm{}    &\cs 0.0177          & \cs 3.982             &  \cs 1.0075              & \cs 0.0             & \cs 0.0             & \cs {$\infty$}      \\
\smHRdm{}  &\cs 0.2663          & \cs 62.3391           &  \cs 1.0001              & \cs 0.0005          & \cs 0.0006          & \cs 1.0335          \\
\smLHRdm{} &\cs 0.7936          & \cs 126.1378          &  \cs 1.0001              & \cs \textbf{0.0091} & \cs \textbf{0.0061} & \cs \textbf{1.0015} \\
            \hline
        \end{tabular}};
    \end{tikzpicture}
    }
    \label{tab:mfa}
\end{table}

Next,
we analyze the effect of our proposed metrics and step size parametrization.
For the bowtie and funnel targets, 
the combination of 
our scaled squared Hessian metric
and the $\delta$-parametrization from \cref{sec:metric}
achieves highest average relative performance and lowest average
L1 error (cf.\ \cref{fig:metrics,fig:metrics-err}, respectively)
across \smMALAm{}, \smHRm{}, and \smLHRm{}.
For the Gaussian targets, we also find that our proposed $\delta$-parametrization
works better than the classical $\varepsilon$-parametrization (cf.\ \cref{fig:stepsizes,fig:stepsizes-err}).
We suspect that despite the Hessian metric being arguably an optimal proposal covariance in the
unconstrained Gaussian case, it fails to account for the polytope constraints.
However, it remains unclear to us as to why making the proposal covariance more isotropic
using the $\delta$-parametrization apparently improves sampling efficiency
in the constrained case.

\subsection{\Ciso{} Metabolic Flux Analysis}
\label{sec:13c-mfa}

We further test our algorithms on two test problems,
a stationary and non-stationary \emph{Spiralus}~\citep{wiechert2015primer}, 
from Bayesian \Ciso-MFA~\citep{theorell_be_2017_}, where the likelihood 
computation requires the 
solution of a set of ordinary differential equations,
which reduces to solving a set 
of algebraic equations for the stationary case.
Moreover, in the stationary case, so-called \emph{pool size} parameters
vanish, reducing the sampling problem's dimensionality.
Where required, we use the
Fisher information as
metric tensor.
As before, we run four chains, each drawing $10\,000\cdot d$ and $40\,000\cdot d$ samples
for stationary and non-stationary \emph{Spiralus}, respectively,
and apply thinning by a factor $d$.

Our results are presented in \cref{tab:mfa}.
For the simpler, 2-dimensional isotopically stationary \emph{Spiralus}, 
we observe that \smMALAdm{} yields the best performance,
closely followed by \smLHRdm{}.
Interestingly, \LHRm{} decreases the performance over \HRm, whereas \smHRdm{} improves it,
suggesting that curvature information is crucial for this problem.
For the 9-dimensional isotopically non-stationary \emph{Spiralus},
which entails non-identifiable as well as non-linearly correlated parameters (cf.~\cref{fig:ispiralus}),
curvature information seems crucial in order to successfully sample this problem,
as only \smMALAdm{}, \smHRdm{}, and \smLHRdm{} are able to converge within
the given budget.
Among these, \smLHRdm{} performs considerably better 
than its competitors.

\section{Conclusion}
\label{sec:conclusion}

We propose higher-order \HRshortname{} samplers, which work well for linearly
constrained, non-uniform densities, often outperforming the Dikin walk and recently
introduced MAPLA in our numerical experiments.
This suggests, that the combination of both the higher-order 
information as well as the strictly constrained \HRshortname{} proposal mechanism
are key to improve the sampling efficiency for such problems.
Notably, our methods seem more robust to \emph{``how far away''} the constraints are
from the target distribution's high-density region,
which a priori may not be clear for any given constrained sampling problem.
Similarly to the unconstrained case, we observe that incorporating higher-order information 
typically improves sampling efficiency. 
Despite \HRshortname{} being slower than its unconstrained counterparts (cf.\ \cref{fig:runtimes})
due to added complexity, our isotopically non-stationary \Ciso-MFA example underlines that higher sampling efficiency translates into faster sampling when
likelihood evaluations are expensive.

\paragraph{Limitations \& Future Work}

Given the good performance of MAPLA and the Dikin walk in the \emph{``narrow''} polytopes
and \HRshortname{}s weakness in such cases,
engineering second-order methods that combine rounding transformations or barrier-based
metrics with curvature estimates might strike a balance for even
more robust constrained samplers.

While our introduced scaled squared Hessian metric
and $\delta$-parametrization of the step size for 
the second-order methods show empirically improved 
performance, future work needs to investigate whether
this is more generally applicable.

%%\newpage

\section*{Impact Statement}

This paper presents work whose goal is to advance the field
of Machine Learning. There are many potential societal
consequences of our work, none which we feel must be
specifically highlighted here.

\section*{Acknowledgements}

RDP's \& AS' research was partly funded by the Helmholtz School for Data Science
in Life, Earth, and Energy (HDS-LEE). 
AS, JFJ, MB \& KN are thankful to Wolfgang Wiechert for excellent 
working conditions.
The authors gratefully acknowledge computing time on the 
supercomputer JURECA \citep{jureca} at Forschungszentrum J\"ulich 
under grant no. \texttt{hpcmfa} and \texttt{loki}.
We also thank anonymous reviewers at AISTATS for critical, 
yet helpful comments on an earlier version of this work.

\bibliography{references,more_references,even_more_references}

\newpage
\appendix
\onecolumn

\title{Higher-Order Hit-\&-Run Samplers for Linearly Constrained Densities\\(Supplementary Material)}
\maketitle

\section{Missing Proofs}
\label{sec:proofs}

In this section we present missing proofs for \cref{lemma:ehr-pdf} and \cref{thm:convergence}.
We begin with a proof of the $\EHR$ density (cf.~\cref{eq:ehr-pdf}):

\begin{delayedproof}{lemma:ehr-pdf}
    Let $\Delta:=\y-\x=\s\!\xv$, then step size and direction on the 
    ellipsoid are recovered as 
    $\|\LL^{-1}\!\Delta\|_2 = \gamma \|\LL^{-1}\!\LL\!\xu\|_2 = \gamma$,
    since $\|\xu\|_2=1$ and $\xv = \Delta/\!\s$.
    This transformation is a change of basis from Cartesian to spherical coordinates
    with differential volume $\dd\Delta = |\det\LL| \cdot \s^{d-1}\cdot\dd\!\s\dd\omega$,
    thus, the Jacobian determinant of the bijection $\Delta = \gamma\!\LL\!\xu$ is
    \begin{align}
        \left|\pdv{\Delta}{(\s,\xu)}\right| = |\det \LL|\cdot\s^{d-1}
    \end{align}
    and by the inverse function theorem, the Jacobian determinant of the inverse map 
    is the inverse of the Jacobian determinant of the forward map, i.e.
    \begin{align}
        \left|\pdv{(\s, \xu)}{\Delta}\right| 
            = \left|\pdv{\Delta}{(\s, \xu)}\right|^{-1} 
            = \frac{1}{|\det \LL|\cdot\s^{d-1}}.
    \end{align}
    Now by transform of random variables, we have that
    \begin{align}
        q_{_{\HR}}(\y|\x,\cov) 
            = p(\s, \xu | \x, \cov) \left|\pdv{(\s, \xu)}{\Delta}\right|
            = \frac{p(\s | \xu, \x, \cov)p(\xu) }{|\det \LL|\cdot\s^{d-1}}
            = \frac{p(\s | \xu, \x, \cov) }{|\det \LL|\cdot\s^{d-1}} 
                \frac{\Gamma(d/2)}{2\pi^{d/2}},
    \end{align}
    where the last equality follows from $p(\xu) = 1 / A_{d-1}$ being 
    the uniform distribution on the surface of the $d-1$ hypersphere, 
    which has area $A_{d-1} = 2\pi^{d/2}/\,\Gamma(d/2)$.
    Finally, $p(\s | \xu, \x, \cov)$ is just the truncated step 
    density $p_{[0, \smax]}(\s)$ as defined in \cref{alg:ehr}.
    Recalling our earlier definitions, we have that 
    $\gamma = \|\LL^{-1}(\y-\x)\|_2$,
    thus recovering \cref{eq:ehr-pdf}.
\end{delayedproof}

Next, we fill out the technical details of our sketched convergence argument
from \cref{sec:theory}.
Let
\begin{align}
    \label{eq:mh-kernel}
    P(\x, \y) := \alpha(\y | \x)\,q(\y | \x) + \left(1 - r(\x)\right)\delta_{\x}(\y),
    \quad \text{ with } \quad
    r(\x) := \int_{\polytope}\!\alpha(\z | \x)\, q(\z | \x) \,\dd\!\z
\end{align}
be the \emph{Metropolis kernel},
where $\delta_{\x}(\y)$
is the Dirac measure on $\x$, $\alpha(\y|\x)$ is the MH filter (cf.~\cref{eq:mh-filter})
and $q(\,\cdot\,|\x)$ is some proposal distribution conditionally on $\x$.
Moreover,
\begin{align}
    P^n(\x, \y) = \int_{\polytope} P^{n-1}(\x, \z)P(\z, \y)\,\dd\!\z
\end{align}
is the recursively defined
$n$-step kernel with $P^0(\x, \y) := \delta_{\x}(\y)$.

As discussed in \cref{sec:theory}, for a positive proposal distribution, 
i.e.~$q(\y|\x) > 0$ for any two $\x, \y \in \polytope$,
convergence of the $n$-step kernel to the desired target distribution
is given by standard arguments under assumption $\mathit{(A3)}$, 
as positivity then implies $\dens$-irreducibility and aperiodicity,
and thus convergence~\citep{roberts_general_2004}.
Therefore, it only remains to show that our algorithms' proposal density
is indeed positive.
To show this, we first prove that $q_{_{\EHR}}(\,\cdot\,|\x)$ is positive
for any $\x \in \polytope$ under assumptions $\mathit{A1\text{--}2}$ 
and additionally if $\cov$ is positive definite.

\begin{lemma}
\label{lemma:positivity}
Let $p$ be positive and continuous on $\R$, $\cov = \LL\!\LL^\top$ \spd{} Then,
$$q_{_{\EHR}}(\y|\x, \cov) = 
            \frac{p_{[0, \smax]}\!\left(\|\LL^{-1}(\y-\x)\|_2\right)}{\|\LL^{-1}(\y-\x)\|_2^{d-1} |\det\LL|}
            \cdot
            \frac{\Gamma(d/2)}{2\pi^{d/2}}$$
is positive and continuous for all  $\x, \y \in \polytope$ with $\x \neq \y$.
\end{lemma}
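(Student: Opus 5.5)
The plan is to treat the expression for $q_{_{\EHR}}$ as a product of three factors and show each is positive and continuous on the set $\{(\x,\y)\in\polytope\times\polytope : \x\neq\y\}$. The factors are the constant $\Gamma(d/2)/(2\pi^{d/2})$, the reciprocal $1/(\|\LL^{-1}(\y-\x)\|_2^{d-1}|\det\LL|)$, and the truncated step density $p_{[0,\smax]}(\gamma)$ evaluated at $\gamma=\|\LL^{-1}(\y-\x)\|_2$.

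The constant is trivially positive. Because $\cov$ is \spd{}, its Cholesky factor $\LL$ is invertible, so $|\det\LL|>0$. Since $\x\neq\y$, we also have $\LL^{-1}(\y-\x)\neq 0$, so $\gamma>0$. The map $\y\mapsto\|\LL^{-1}(\y-\x)\|_2$ is a composition of a linear map and a norm, hence continuous. Therefore the middle factor is positive and continuous away from the diagonal $\x=\y$.

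The main work is the truncated step density. Set $\xu=\LL^{-1}(\y-\x)/\gamma$ and $\xv=\LL\xu$, so that $\y=\x+\gamma\xv$. We need two facts:
\begin{enumerate}
\item $\gamma$ lies in $[0,\smax]$ with $\smax=\mathrm{clp}(\x,\xv)$.
\item The normalizer $F_p(\smax)$ is positive.
\end{enumerate}
For the first, convexity of $\polytope$ (A1) gives that the whole segment from $\x$ to $\y$ lies in $\polytope$. Since $\smax$ is by definition the largest $\s$ with $\x+\s\xv\in\polytope$, we get $\gamma\le\smax$. For the second, take $\s\in(0,\gamma]$, where $p$ is positive by (A2). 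Then $F_p(\smax)\ge F_p(\gamma)=\int_0^{\gamma}p>0$, because $p$ is positive and continuous on this interval; this also covers the case $\smax=\infty$. Together these show $p_{[0,\smax]}(\gamma)=p(\gamma)/F_p(\smax)>0$.

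I expect continuity of this factor to be the main obstacle, because $\smax$ depends on $\y$ through the direction $\xv$. The numerator $p(\gamma)$ is continuous by (A2). For the denominator, I would write
\[
\mathrm{clp}(\x,\xv)=\min_{i:\,\makebold{a}_i^\top\xv>0}\frac{b_i-\makebold{a}_i^\top\x}{\makebold{a}_i^\top\xv}
\]
and argue that it is continuous as an extended-real function of $\xv$ on the unit ellipsoid. Each ratio is continuous, and whenever $\makebold{a}_i^\top\xv\downarrow 0$ the corresponding term tends to $+\infty$, so it drops out of the minimum without causing a jump.

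One subtlety is a constraint that is active at $\x$, meaning $b_i=\makebold{a}_i^\top\x$. Then the ratio is $0$ for every direction with $\makebold{a}_i^\top\xv>0$. Those directions, however, do not point into $\polytope$, so they are never attained by $\y\in\polytope$. On the feasible directions only nonnegative numerators enter, and continuity holds there. Since $F_p$ is continuous and monotone with $F_p(\infty)=1$, the composition $F_p(\smax)$ is continuous and positive, and the quotient is continuous. Combining the three factors gives the claim.
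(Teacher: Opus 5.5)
Your proposal is correct and follows the paper's own route: write $\y=\x+\gamma\LL\!\xu$ with $\gamma=\|\LL^{-1}(\y-\x)\|_2$, use feasibility of $\y$ (via convexity) to get $\gamma\in[0,\smax]$, deduce positivity from $p>0$, and obtain continuity by composition. Your version is more careful than the paper's one-line appeal to a ``chain of continuous functions'': you check that $F_p(\smax)>0$, and you show that $\smax$ depends continuously on $\y$ for fixed $\x$ by observing that constraints active at $\x$ never bind on feasible directions. Continuity in $\y$ for fixed $\x$ is the right reading here, since joint continuity in $(\x,\y)$ can fail when $\x$ and $\y$ share an active constraint.
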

\begin{proof}
We prove positivity by contradiction.
Assume $\y \in \polytope$ with $q_{_{\EHR}}(\y|\x, \cov) = 0$.
By regularity of $\cov = \LL\!\LL^{\top}$, we always find $\xu = \LL^{-1}(\y-\x) / \s$ such that\
$\y = \s\!\LL\!\xu + \x$ for some $\s$.
By convexity, we have that $\s \in [0, \smax]$, however by positivity of $p$ on $\R$,
it follows that $p_{[\s_a, \s_b]}(\s) > 0$ and thus $q_{_{\EHR}}(\y|\x, \cov) > 0$,
which is a contradiction.
Finally, 
$q_{_{\EHR}}(\y|\x, \cov)$ is a chain of continuous functions and thereby continuous itself.
\end{proof}

Given \cref{lemma:positivity}, we now state the proof of \cref{thm:convergence}.

\begin{delayedproof}{thm:convergence}
    By \cref{lemma:positivity} and the argument provided in \cref{sec:theory},
    convergence of any of our three algorithms follows if we show that $\cov$
    is \spd{} and the clipped gradient step remains within $\polytope$.
    \begin{itemize}[leftmargin=2cm]
        \item[LHR:] Since $\LHR(\x, \varepsilon^2) = \EHR(\,\cdot\,|\x+\hat{\varepsilon}\grad\log\udens(\x), \varepsilon^2 I)$,
            we have that $\varepsilon^2 I$ is \spd{}\ for $\varepsilon > 0$ and it only remains to show that 
            $\x+\hat{\varepsilon}\grad\log\udens(\x) \in \polytope$, which however follows from convexity since
            $\x, \x+\kappa\grad\log\udens(\x) \in \polytope$ and $\hat{\varepsilon} \in (0, \kappa/2]$.
            Therefore, $P_{\LHR}$ converges in total variation distance to the desired target distribution.
        \item[smHR:] For $\smHR(\x, \varepsilon^2) = \EHR(\,\cdot\,|\x, \varepsilon^2\G(\x)^{-1})$
            it is only necessary to show that $\G(\x)$ is \spd{}, as the inverse of an \spd{}\ matrix is 
            itself \spd{}, which however follows from construction in \cref{eq:metric} and for $\delta$ large enough.
            Therefore, $P_{\smHR}$ converges in total variation distance to the desired target distribution.
        \item[smLHR:] Convergence of $P_{\smLHR}$ follows directly from that of
            $P_{\LHR}$ and $P_{\smHR}$. \\[-3.35\baselineskip]
    \end{itemize}
\end{delayedproof}

\section{Experiment Details}
\label{sec:toy}

\subsection{Ground Truth Sampling}
\label{app:eval}

We collect ``ground truth'' samples from our synthetic densities 
by running large MCMC simulations using the HR sampler from the toolbox \texttt{hopsy}.
For each problem, we run four parallel chains.
For the bowtie and funnel, we draw $1\,000\,000\cdot d \log_2 d$ samples using the HR sampler
with a Gaussian step distribution with step size being the scale parameter $\sigma$ of the respective
target.
For the Gaussian targets, we draw $100\,000\cdot d \log_2 d$ samples using the truncated Gaussian
sampler \citep{Li2015}.

\subsection{Synthetic Densities}

\paragraph{Densities}

As densities we consider Neal's funnel \citep{neal_slice_2003},
with $x_1 \sim \N(0, 9), \ x_i \sim \N(0, e^{x_1}), \quad i=2,\ldots,d,$
a \emph{bowtie} distribution
with $x_1 \sim \N(0, 1), \ x_i \sim \N\left(0, x_1^2/4 + 0.1\right), \quad i=2,\ldots,d,$
and six Gaussians, parametrized with an isotropic covariance matrix, disc-like covariance matrix
with marginal variances $\sigma_1^2 = 1/100$ and $\sigma_i = 1, i=1,\ldots,d$,
and a cigar-like covariance matrix with marginal variances $\sigma_1^2 = 1$ and $\sigma_i = 1/100, i=1,\ldots,d$.
Further, the mode of the Gaussian is either located at the origin 
or on the midpoint between origin and the intersection of the $(1, \ldots, 1)$ vector with the polytope borders,
which guarantees that the mode will be contained in the polytope $\polytope$.
Moreover, we apply a multivariate location-scale transformation
\begin{align}
    \udens(\x) := \udens_0\left(\makebold{Q}^{\top}\!\left(\frac{\x-m}{\sigma}\right)\!\right),
\end{align}
where $\makebold{Q}^{\top}$ rotates the $(1,0, \ldots, 0)$ vector to
$(1,\ldots,1)$. 
For the funnel and bowtie distributions, $m$ is determined as the halfway point 
between the origin and the intersection of the $(1,\ldots,1)$ vector with the polytope,
which is supposed to keep the ``characteristic'' regions of the density inside the polytope.
For the Gaussian distributions, we set $m=0$.

\paragraph{Polytopes}

The \emph{cone} is constructed as a simplex with tilted sides and a scaled constraint in
\cref{eq:cap}, which bounds the polytope within the $[0, 1]^d$ box:
\begin{align}
    \label{eq:cone}
    \cos(\theta) \cdot x_i - \sin(\theta) \cdot x_j &\leq 0
        & \text{ for any two different } i,j=1,\ldots,d, \\
    \sum_{i=1}^d x_i & \leq \frac{\cos(\theta)}{\sin(\theta)} + 1. \label{eq:cap}
\end{align}

The \emph{diamond} is constructed as a $[0, 1]^d$ whose sides are also tilted
and scaled such that\ the resulting parallelogram is contained in the $[0, 1]^d$ box:
\begin{align}
    \label{eq:diamond}
    \cos(\theta) \cdot x_i - \sin(\theta) \cdot x_j &\leq 0
        & \text{ for any two different } i,j=1,\ldots,d, \\
    -\cos(\theta) \cdot x_i + \sin(\theta) \cdot x_j &\leq \sin(\theta) - \cos(\theta)
        & \text{ for any two different } i,j=1,\ldots,d.
\end{align}

\section{Additional Figures \& Tables}

We provide additional figures.
In \cref{fig:metrics-err} \& \ref{fig:stepsizes-err}, we provide L1 error results based on the used metric.
In \cref{fig:spiralus} \& \ref{fig:ispiralus}, we provide pair plots of the two-dimensional marginal parameter posterior
distributions, sampled using our smLHR$_\delta$ sampler.
\cref{fig:ess1}-\ref{fig:acc2} further dissect the performance of the various 
tested samplers. 
For better accessibility, we provide these results also in tabular format.

\begin{figure}[!ht]
    \centering
    \begin{minipage}{.59\textwidth}
    \centering
    
    \begin{subfigure}{.9\textwidth}
    \hspace*{-3mm}~
    \begin{subfigure}{.78\textwidth}
        \caption{}
        \label{fig:metrics-err}
        \vspace{-2\baselineskip}
        \begin{tikzpicture}[x=80px, y=80px]
            \node at (0, 0) {\includegraphics[height=80px]{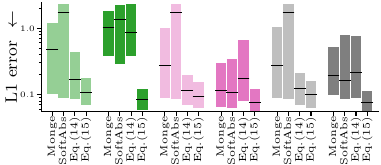}};
            \node at (-1.2,.45) {\small (a)};
        \end{tikzpicture}
    \end{subfigure}~
    \begin{subfigure}{.195\columnwidth}
        \caption{}
        \label{fig:stepsizes-err}
        \vspace{-2\baselineskip}
        \begin{tikzpicture}[x=80px, y=80px]
            \node at (0, 0) {\includegraphics[height=80px]{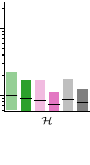}};
            \node at (-.4,.45) {\small (b)};
        \end{tikzpicture}
    \end{subfigure}
    \end{subfigure}
    \caption{
        L1 error of \protect\smMALAem{}, \protect\smMALAdm{}, 
        \protect\smHRem{}, \protect\smHRdm{}, \protect\smLHRem{} \& \protect\smLHRdm{} using 
        (a) different choices of metric for the bowtie and funnel, and 
        (b) on the Gaussian targets.
        Colored bars show the 25\%
    }
    \label{fig:param-err}
    \end{minipage}~\quad ~
    \begin{minipage}{.39\textwidth}
    \centering
    \includegraphics[height=80px]{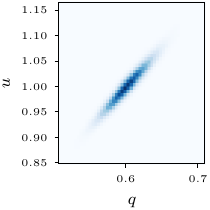}
    \caption{
        Pair plot of parameter posterior distribution of the isotopically stationary \emph{Spiralus} model,
        sampled using our smLHR$_{\delta}$ method.
        Parameter names follow \citet{wiechert2015primer}.
    }
    \label{fig:spiralus}
    \end{minipage}
\end{figure}

\begin{figure}[!ht]
    \centering
    \includegraphics[width=.8\textwidth]{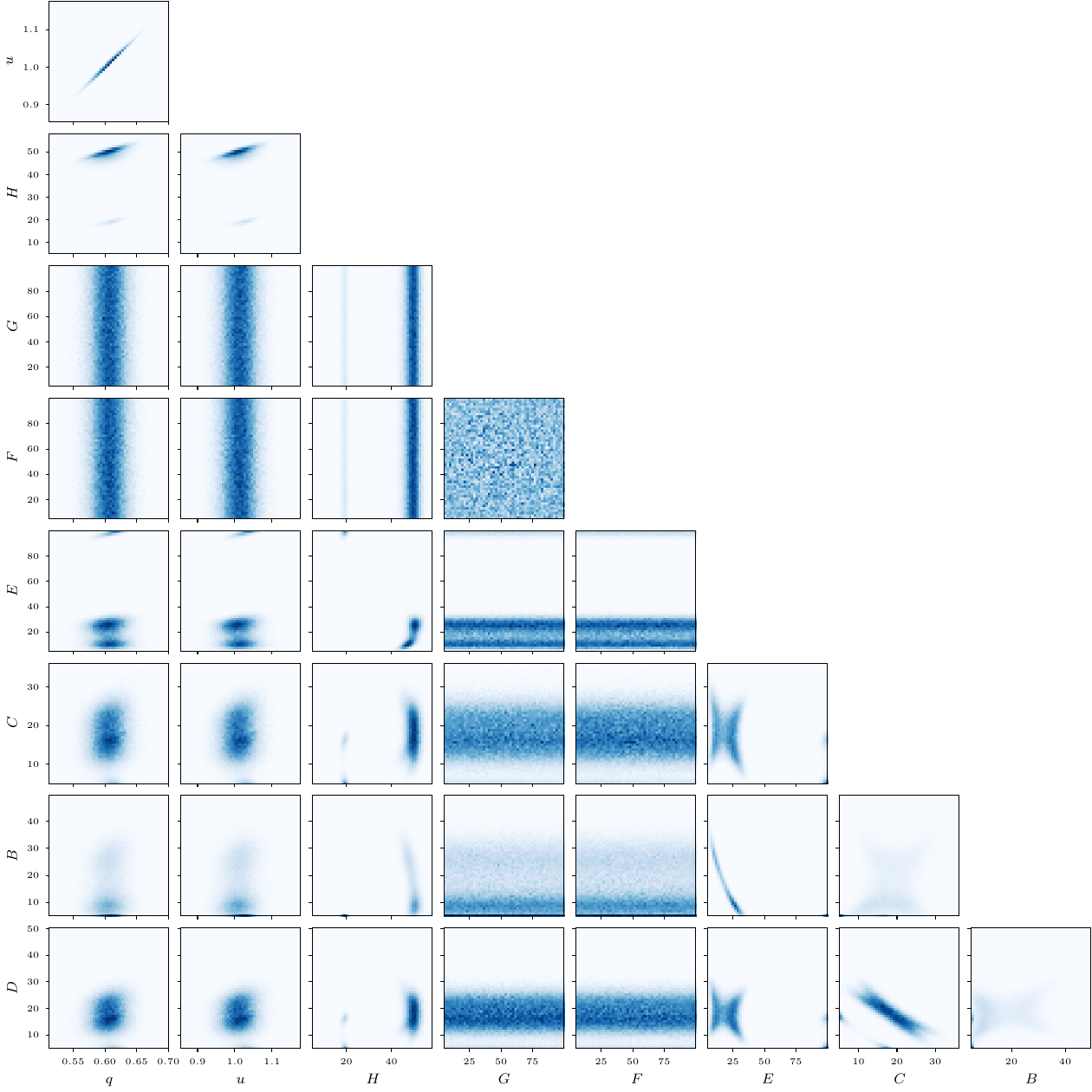}
    \caption{
        Pair plot of parameter posterior distribution of the isotopically non-stationary \emph{Spiralus} model,
        sampled using our smLHR$_{\delta}$ method.
        Parameter names follow \citet{wiechert2015primer}.
    }
    \label{fig:ispiralus}
\end{figure}

\begin{figure}[h!]
    \centering
    \resizebox{\columnwidth}{!}{
    \begin{tikzpicture}[x=70px,y=80]
        \node at (-0.5, .55) {\footnotesize Gauss $(\mu = 0)$};
        \node at ( 0.5, .55) {\footnotesize Disc $(\mu = 0)$};
        \node at ( 1.5, .55) {\footnotesize Cigar $(\mu = 0)$};
        \node at ( 2.5, .55) {\footnotesize Gauss $(\mu = 0.5)$};
        \node at ( 3.5, .55) {\footnotesize Disc $(\mu = 0.5)$};
        \node at ( 4.5, .55) {\footnotesize Cigar $(\mu = 0.5)$};
        \node at ( 5.5, .55) {\footnotesize Bowtie};
        \node at ( 6.5, .55) {\footnotesize Funnel};

        \node[anchor=east] at (0, 0) {\includegraphics[height=70px]{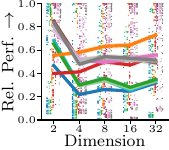}};
        \node[anchor=east] at (1, 0) {\includegraphics[height=70px]{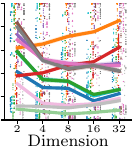}};
        \node[anchor=east] at (2, 0) {\includegraphics[height=70px]{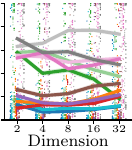}};
        \node[anchor=east] at (3, 0) {\includegraphics[height=70px]{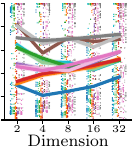}};
        \node[anchor=east] at (4, 0) {\includegraphics[height=70px]{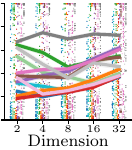}};
        \node[anchor=east] at (5, 0) {\includegraphics[height=70px]{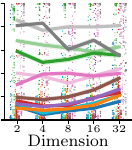}};
        \node[anchor=east] at (6, 0) {\includegraphics[height=70px]{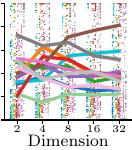}};
        \node[anchor=east] at (7, 0) {\includegraphics[height=70px]{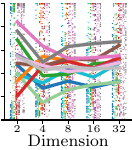}};
        
        \node[anchor=east] at (0,-1) {\includegraphics[height=70px]{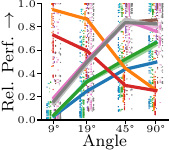}};
        \node[anchor=east] at (1,-1) {\includegraphics[height=70px]{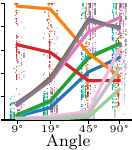}};
        \node[anchor=east] at (2,-1) {\includegraphics[height=70px]{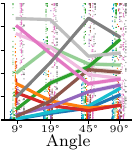}};
        \node[anchor=east] at (3,-1) {\includegraphics[height=70px]{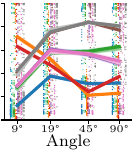}};
        \node[anchor=east] at (4,-1) {\includegraphics[height=70px]{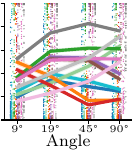}};
        \node[anchor=east] at (5,-1) {\includegraphics[height=70px]{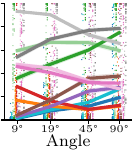}};
        \node[anchor=east] at (6,-1) {\includegraphics[height=70px]{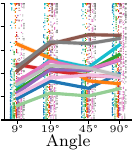}};
        \node[anchor=east] at (7,-1) {\includegraphics[height=70px]{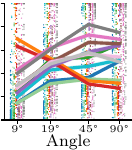}};
        
        \node[anchor=east] at (0,-2) {\includegraphics[height=70px]{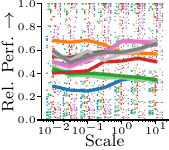}};
        \node[anchor=east] at (1,-2) {\includegraphics[height=70px]{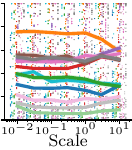}};
        \node[anchor=east] at (2,-2) {\includegraphics[height=70px]{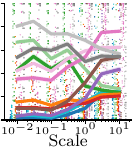}};
        \node[anchor=east] at (3,-2) {\includegraphics[height=70px]{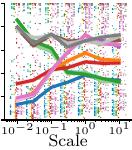}};
        \node[anchor=east] at (4,-2) {\includegraphics[height=70px]{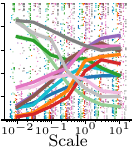}};
        \node[anchor=east] at (5,-2) {\includegraphics[height=70px]{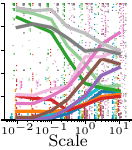}};
        \node[anchor=east] at (6,-2) {\includegraphics[height=70px]{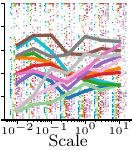}};
        \node[anchor=east] at (7,-2) {\includegraphics[height=70px]{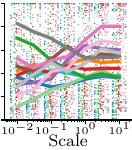}};
    \end{tikzpicture}
    }
    \caption{
        Detailed presentation of relative performance of the tested samplers on our benchmark problems as described in \cref{sec:experiments}.
        Individual dots show relative performance of individual sampling runs,
        the solid lines show average relative performance across all problems.
        Algorithms shown are \protect\RWMHm{}, \protect\MALAm{}, \protect\smMALAem{} \protect\smMALAdm{}, 
        \protect\Dikinm{}, \protect\MAPLAm{}, \protect\HRm{}, \protect\LHRm{}, 
        \protect\smHRem{}, \protect\smHRdm{},
        \protect\smLHRem{}, and \protect\smLHRdm{}.
    }
    \label{fig:ess1}
\end{figure}

\begin{figure}[h!]
    \centering
    \resizebox{\columnwidth}{!}{
    \begin{tikzpicture}[x=70px,y=80]
        \node at (-0.5, .55) {\footnotesize Gauss $(\mu = 0)$};
        \node at ( 0.5, .55) {\footnotesize Disc $(\mu = 0)$};
        \node at ( 1.5, .55) {\footnotesize Cigar $(\mu = 0)$};
        \node at ( 2.5, .55) {\footnotesize Gauss $(\mu = 0.5)$};
        \node at ( 3.5, .55) {\footnotesize Disc $(\mu = 0.5)$};
        \node at ( 4.5, .55) {\footnotesize Cigar $(\mu = 0.5)$};
        \node at ( 5.5, .55) {\footnotesize Bowtie};
        \node at ( 6.5, .55) {\footnotesize Funnel};

        \node[anchor=east] at (0, 0) {\includegraphics[height=70px]{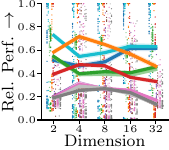}};
        \node[anchor=east] at (1, 0) {\includegraphics[height=70px]{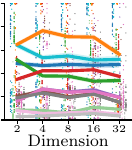}};
        \node[anchor=east] at (2, 0) {\includegraphics[height=70px]{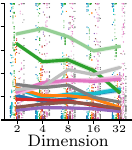}};
        \node[anchor=east] at (3, 0) {\includegraphics[height=70px]{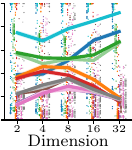}};
        \node[anchor=east] at (4, 0) {\includegraphics[height=70px]{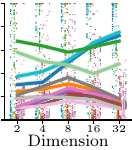}};
        \node[anchor=east] at (5, 0) {\includegraphics[height=70px]{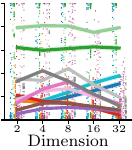}};
        \node[anchor=east] at (6, 0) {\includegraphics[height=70px]{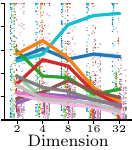}};
        \node[anchor=east] at (7, 0) {\includegraphics[height=70px]{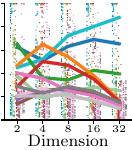}};
        
        \node[anchor=east] at (0,-1) {\includegraphics[height=70px]{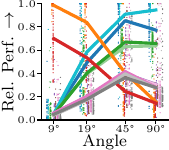}};
        \node[anchor=east] at (1,-1) {\includegraphics[height=70px]{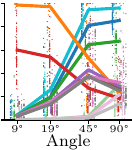}};
        \node[anchor=east] at (2,-1) {\includegraphics[height=70px]{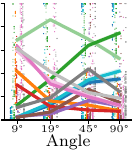}};
        \node[anchor=east] at (3,-1) {\includegraphics[height=70px]{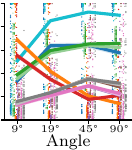}};
        \node[anchor=east] at (4,-1) {\includegraphics[height=70px]{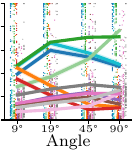}};
        \node[anchor=east] at (5,-1) {\includegraphics[height=70px]{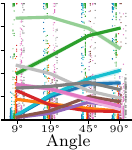}};
        \node[anchor=east] at (6,-1) {\includegraphics[height=70px]{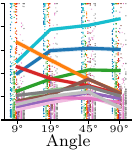}};
        \node[anchor=east] at (7,-1) {\includegraphics[height=70px]{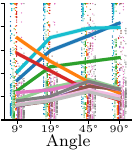}};
        
        \node[anchor=east] at (0,-2) {\includegraphics[height=70px]{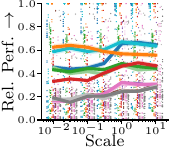}};
        \node[anchor=east] at (1,-2) {\includegraphics[height=70px]{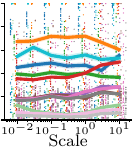}};
        \node[anchor=east] at (2,-2) {\includegraphics[height=70px]{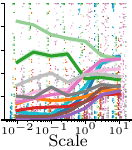}};
        \node[anchor=east] at (3,-2) {\includegraphics[height=70px]{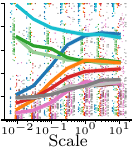}};
        \node[anchor=east] at (4,-2) {\includegraphics[height=70px]{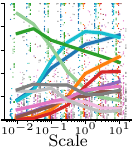}};
        \node[anchor=east] at (5,-2) {\includegraphics[height=70px]{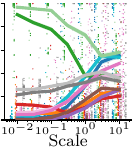}};
        \node[anchor=east] at (6,-2) {\includegraphics[height=70px]{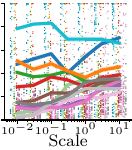}};
        \node[anchor=east] at (7,-2) {\includegraphics[height=70px]{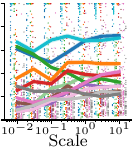}};
    \end{tikzpicture}
    }
    \caption{
        Detailed presentation of relative performance based on \minESS/s
        of the tested samplers on our benchmark problems as described in \cref{sec:experiments}.
        Individual dots show relative performance of individual sampling runs,
        the solid lines show average relative performance across all problems.
        Algorithms shown are \protect\RWMHm{}, \protect\MALAm{}, \protect\smMALAem{} \protect\smMALAdm{}, 
        \protect\Dikinm{}, \protect\MAPLAm{}, \protect\HRm{}, \protect\LHRm{}, 
        \protect\smHRem{}, \protect\smHRdm{},
        \protect\smLHRem{}, and \protect\smLHRdm{}.
    }
    \label{fig:esst1}
\end{figure}

\begin{figure}[h!]
    \centering
    \resizebox{\columnwidth}{!}{
    \begin{tikzpicture}[x=70px,y=80]
        \node at (-0.5, .55) {\footnotesize Gauss $(\mu = 0)$};
        \node at ( 0.5, .55) {\footnotesize Disc $(\mu = 0)$};
        \node at ( 1.5, .55) {\footnotesize Cigar $(\mu = 0)$};
        \node at ( 2.5, .55) {\footnotesize Gauss $(\mu = 0.5)$};
        \node at ( 3.5, .55) {\footnotesize Disc $(\mu = 0.5)$};
        \node at ( 4.5, .55) {\footnotesize Cigar $(\mu = 0.5)$};
        \node at ( 5.5, .55) {\footnotesize Bowtie};
        \node at ( 6.5, .55) {\footnotesize Funnel};

        \node[anchor=east] at (0, 0) {\includegraphics[height=70px]{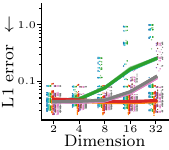}};
        \node[anchor=east] at (1, 0) {\includegraphics[height=70px]{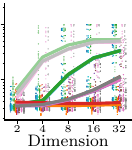}};
        \node[anchor=east] at (2, 0) {\includegraphics[height=70px]{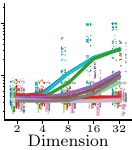}};
        \node[anchor=east] at (3, 0) {\includegraphics[height=70px]{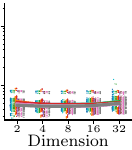}};
        \node[anchor=east] at (4, 0) {\includegraphics[height=70px]{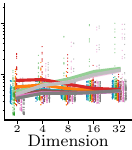}};
        \node[anchor=east] at (5, 0) {\includegraphics[height=70px]{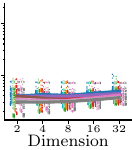}};
        \node[anchor=east] at (6, 0) {\includegraphics[height=70px]{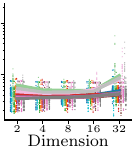}};
        \node[anchor=east] at (7, 0) {\includegraphics[height=70px]{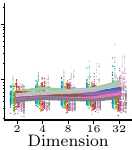}};
        
        \node[anchor=east] at (0,-1) {\includegraphics[height=70px]{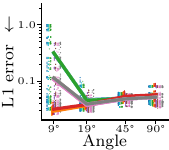}};
        \node[anchor=east] at (1,-1) {\includegraphics[height=70px]{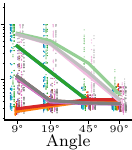}};
        \node[anchor=east] at (2,-1) {\includegraphics[height=70px]{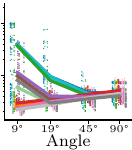}};
        \node[anchor=east] at (3,-1) {\includegraphics[height=70px]{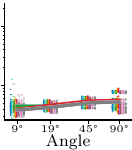}};
        \node[anchor=east] at (4,-1) {\includegraphics[height=70px]{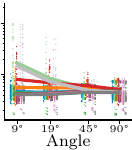}};
        \node[anchor=east] at (5,-1) {\includegraphics[height=70px]{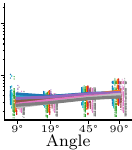}};
        \node[anchor=east] at (6,-1) {\includegraphics[height=70px]{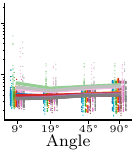}};
        \node[anchor=east] at (7,-1) {\includegraphics[height=70px]{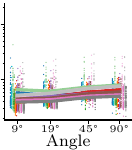}};
        
        \node[anchor=east] at (0,-2) {\includegraphics[height=70px]{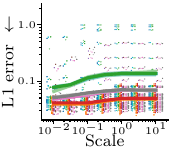}};
        \node[anchor=east] at (1,-2) {\includegraphics[height=70px]{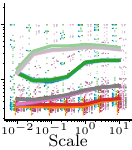}};
        \node[anchor=east] at (2,-2) {\includegraphics[height=70px]{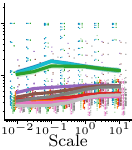}};
        \node[anchor=east] at (3,-2) {\includegraphics[height=70px]{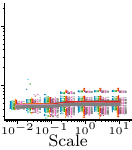}};
        \node[anchor=east] at (4,-2) {\includegraphics[height=70px]{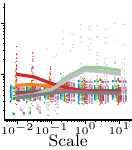}};
        \node[anchor=east] at (5,-2) {\includegraphics[height=70px]{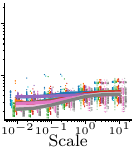}};
        \node[anchor=east] at (6,-2) {\includegraphics[height=70px]{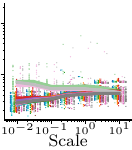}};
        \node[anchor=east] at (7,-2) {\includegraphics[height=70px]{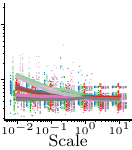}};
    \end{tikzpicture}
    }
    \caption{
        Detailed presentation of average L1 error
        of the tested samplers on our benchmark problems as described in \cref{sec:experiments}.
        Individual dots show relative performance of individual sampling runs,
        the solid lines show average relative performance across all problems.
        Algorithms shown are \protect\RWMHm{}, \protect\MALAm{}, \protect\smMALAem{} \protect\smMALAdm{}, 
        \protect\Dikinm{}, \protect\MAPLAm{}, \protect\HRm{}, \protect\LHRm{}, 
        \protect\smHRem{}, \protect\smHRdm{},
        \protect\smLHRem{}, and \protect\smLHRdm{}.
    }
    \label{fig:meanerr1}
\end{figure}

\begin{figure}[h!]
    \centering
    \resizebox{\columnwidth}{!}{
    \begin{tikzpicture}[x=70px,y=80]
        \node at (-0.5, .55) {\footnotesize Gauss $(\mu = 0)$};
        \node at ( 0.5, .55) {\footnotesize Disc $(\mu = 0)$};
        \node at ( 1.5, .55) {\footnotesize Cigar $(\mu = 0)$};
        \node at ( 2.5, .55) {\footnotesize Gauss $(\mu = 0.5)$};
        \node at ( 3.5, .55) {\footnotesize Disc $(\mu = 0.5)$};
        \node at ( 4.5, .55) {\footnotesize Cigar $(\mu = 0.5)$};
        \node at ( 5.5, .55) {\footnotesize Bowtie};
        \node at ( 6.5, .55) {\footnotesize Funnel};

        \node[anchor=east] at (0, 0) {\includegraphics[height=70px]{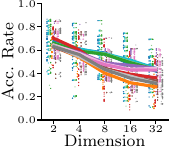}};
        \node[anchor=east] at (1, 0) {\includegraphics[height=70px]{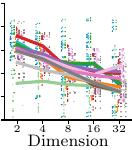}};
        \node[anchor=east] at (2, 0) {\includegraphics[height=70px]{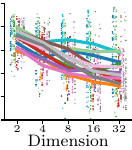}};
        \node[anchor=east] at (3, 0) {\includegraphics[height=70px]{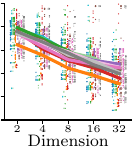}};
        \node[anchor=east] at (4, 0) {\includegraphics[height=70px]{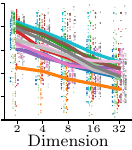}};
        \node[anchor=east] at (5, 0) {\includegraphics[height=70px]{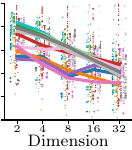}};
        \node[anchor=east] at (6, 0) {\includegraphics[height=70px]{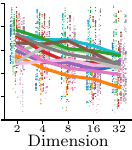}};
        \node[anchor=east] at (7, 0) {\includegraphics[height=70px]{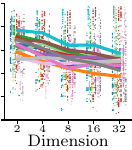}};
        
        \node[anchor=east] at (0,-1) {\includegraphics[height=70px]{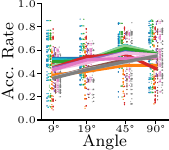}};
        \node[anchor=east] at (1,-1) {\includegraphics[height=70px]{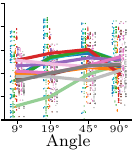}};
        \node[anchor=east] at (2,-1) {\includegraphics[height=70px]{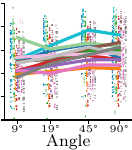}};
        \node[anchor=east] at (3,-1) {\includegraphics[height=70px]{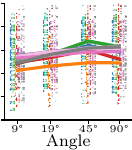}};
        \node[anchor=east] at (4,-1) {\includegraphics[height=70px]{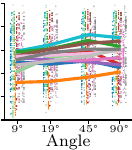}};
        \node[anchor=east] at (5,-1) {\includegraphics[height=70px]{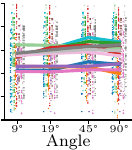}};
        \node[anchor=east] at (6,-1) {\includegraphics[height=70px]{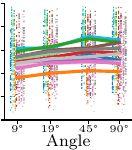}};
        \node[anchor=east] at (7,-1) {\includegraphics[height=70px]{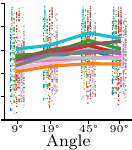}};
        
        \node[anchor=east] at (0,-2) {\includegraphics[height=70px]{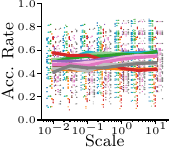}};
        \node[anchor=east] at (1,-2) {\includegraphics[height=70px]{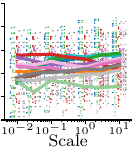}};
        \node[anchor=east] at (2,-2) {\includegraphics[height=70px]{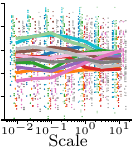}};
        \node[anchor=east] at (3,-2) {\includegraphics[height=70px]{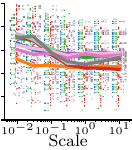}};
        \node[anchor=east] at (4,-2) {\includegraphics[height=70px]{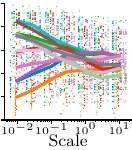}};
        \node[anchor=east] at (5,-2) {\includegraphics[height=70px]{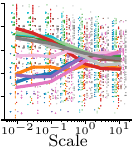}};
        \node[anchor=east] at (6,-2) {\includegraphics[height=70px]{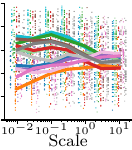}};
        \node[anchor=east] at (7,-2) {\includegraphics[height=70px]{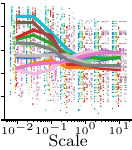}};
    \end{tikzpicture}
    }
    \caption{
        Detailed presentation of average acceptance rate
        of the tested samplers on our benchmark problems as described in \cref{sec:experiments}.
        Individual dots show relative performance of individual sampling runs,
        the solid lines show average relative performance across all problems.
        Algorithms shown are \protect\RWMHm{}, \protect\MALAm{}, \protect\smMALAem{} \protect\smMALAdm{}, 
        \protect\Dikinm{}, \protect\MAPLAm{}, \protect\HRm{}, \protect\LHRm{}, 
        \protect\smHRem{}, \protect\smHRdm{},
        \protect\smLHRem{}, and \protect\smLHRdm{}.
    }
    \label{fig:acc1}
\end{figure}

\begin{figure}[h!]
    \centering
    \resizebox{\columnwidth}{!}{
    \begin{tikzpicture}[x=70px,y=80]
        \node at (-0.5, .55) {\footnotesize Gauss $(\mu = 0)$};
        \node at ( 0.5, .55) {\footnotesize Disc $(\mu = 0)$};
        \node at ( 1.5, .55) {\footnotesize Cigar $(\mu = 0)$};
        \node at ( 2.5, .55) {\footnotesize Gauss $(\mu = 0.5)$};
        \node at ( 3.5, .55) {\footnotesize Disc $(\mu = 0.5)$};
        \node at ( 4.5, .55) {\footnotesize Cigar $(\mu = 0.5)$};
        \node at ( 5.5, .55) {\footnotesize Bowtie};
        \node at ( 6.5, .55) {\footnotesize Funnel};

        \node[anchor=east] at (0, 0) {\includegraphics[height=70px]{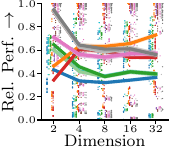}};
        \node[anchor=east] at (1, 0) {\includegraphics[height=70px]{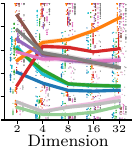}};
        \node[anchor=east] at (2, 0) {\includegraphics[height=70px]{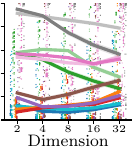}};
        \node[anchor=east] at (3, 0) {\includegraphics[height=70px]{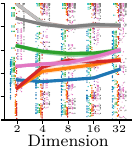}};
        \node[anchor=east] at (4, 0) {\includegraphics[height=70px]{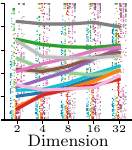}};
        \node[anchor=east] at (5, 0) {\includegraphics[height=70px]{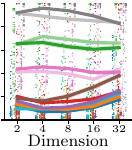}};
        \node[anchor=east] at (6, 0) {\includegraphics[height=70px]{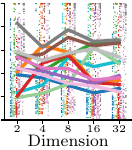}};
        \node[anchor=east] at (7, 0) {\includegraphics[height=70px]{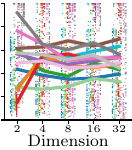}};
        
        \node[anchor=east] at (0,-1) {\includegraphics[height=70px]{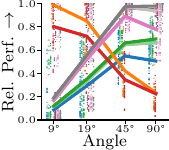}};
        \node[anchor=east] at (1,-1) {\includegraphics[height=70px]{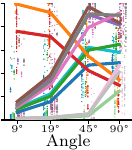}};
        \node[anchor=east] at (2,-1) {\includegraphics[height=70px]{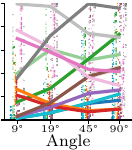}};
        \node[anchor=east] at (3,-1) {\includegraphics[height=70px]{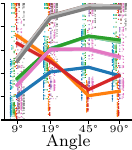}};
        \node[anchor=east] at (4,-1) {\includegraphics[height=70px]{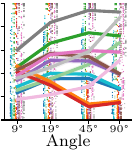}};
        \node[anchor=east] at (5,-1) {\includegraphics[height=70px]{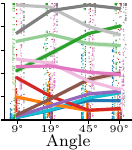}};
        \node[anchor=east] at (6,-1) {\includegraphics[height=70px]{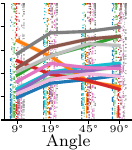}};
        \node[anchor=east] at (7,-1) {\includegraphics[height=70px]{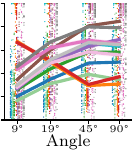}};
        
        \node[anchor=east] at (0,-2) {\includegraphics[height=70px]{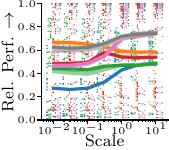}};
        \node[anchor=east] at (1,-2) {\includegraphics[height=70px]{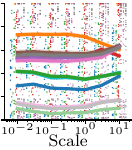}};
        \node[anchor=east] at (2,-2) {\includegraphics[height=70px]{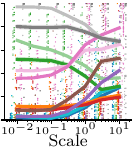}};
        \node[anchor=east] at (3,-2) {\includegraphics[height=70px]{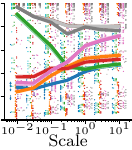}};
        \node[anchor=east] at (4,-2) {\includegraphics[height=70px]{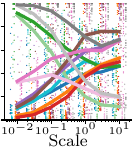}};
        \node[anchor=east] at (5,-2) {\includegraphics[height=70px]{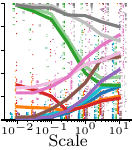}};
        \node[anchor=east] at (6,-2) {\includegraphics[height=70px]{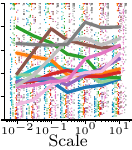}};
        \node[anchor=east] at (7,-2) {\includegraphics[height=70px]{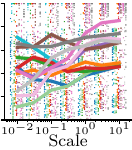}};
    \end{tikzpicture}
    }
    \caption{
        Detailed presentation of relative performance 
        of the tested samplers on our benchmark problems as described in \cref{sec:experiments},
        where optimal hyperparameters were chosen based on best achieved \minESS{}.
        Individual dots show relative performance of individual sampling runs,
        the solid lines show average relative performance across all problems.
        Algorithms shown are \protect\RWMHm{}, \protect\MALAm{}, \protect\smMALAem{} \protect\smMALAdm{}, 
        \protect\Dikinm{}, \protect\MAPLAm{}, \protect\HRm{}, \protect\LHRm{}, 
        \protect\smHRem{}, \protect\smHRdm{},
        \protect\smLHRem{}, and \protect\smLHRdm{}.
    }
    \label{fig:ess2}
\end{figure}

\begin{figure}[h!]
    \centering
    \resizebox{\columnwidth}{!}{
    \begin{tikzpicture}[x=70px,y=80]
        \node at (-0.5, .55) {\footnotesize Gauss $(\mu = 0)$};
        \node at ( 0.5, .55) {\footnotesize Disc $(\mu = 0)$};
        \node at ( 1.5, .55) {\footnotesize Cigar $(\mu = 0)$};
        \node at ( 2.5, .55) {\footnotesize Gauss $(\mu = 0.5)$};
        \node at ( 3.5, .55) {\footnotesize Disc $(\mu = 0.5)$};
        \node at ( 4.5, .55) {\footnotesize Cigar $(\mu = 0.5)$};
        \node at ( 5.5, .55) {\footnotesize Bowtie};
        \node at ( 6.5, .55) {\footnotesize Funnel};

        \node[anchor=east] at (0, 0) {\includegraphics[height=70px]{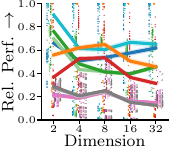}};
        \node[anchor=east] at (1, 0) {\includegraphics[height=70px]{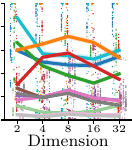}};
        \node[anchor=east] at (2, 0) {\includegraphics[height=70px]{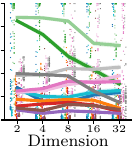}};
        \node[anchor=east] at (3, 0) {\includegraphics[height=70px]{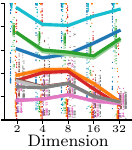}};
        \node[anchor=east] at (4, 0) {\includegraphics[height=70px]{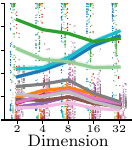}};
        \node[anchor=east] at (5, 0) {\includegraphics[height=70px]{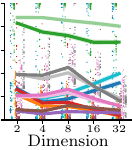}};
        \node[anchor=east] at (6, 0) {\includegraphics[height=70px]{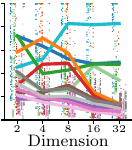}};
        \node[anchor=east] at (7, 0) {\includegraphics[height=70px]{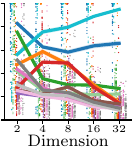}};
        
        \node[anchor=east] at (0,-1) {\includegraphics[height=70px]{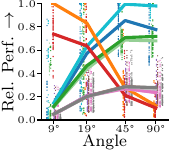}};
        \node[anchor=east] at (1,-1) {\includegraphics[height=70px]{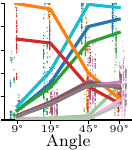}};
        \node[anchor=east] at (2,-1) {\includegraphics[height=70px]{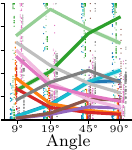}};
        \node[anchor=east] at (3,-1) {\includegraphics[height=70px]{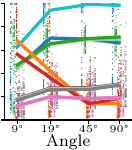}};
        \node[anchor=east] at (4,-1) {\includegraphics[height=70px]{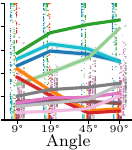}};
        \node[anchor=east] at (5,-1) {\includegraphics[height=70px]{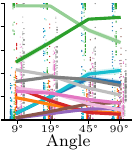}};
        \node[anchor=east] at (6,-1) {\includegraphics[height=70px]{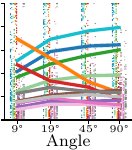}};
        \node[anchor=east] at (7,-1) {\includegraphics[height=70px]{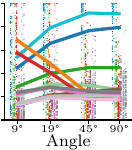}};
        
        \node[anchor=east] at (0,-2) {\includegraphics[height=70px]{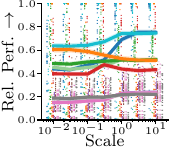}};
        \node[anchor=east] at (1,-2) {\includegraphics[height=70px]{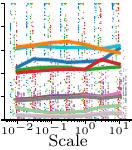}};
        \node[anchor=east] at (2,-2) {\includegraphics[height=70px]{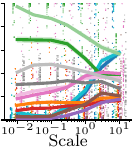}};
        \node[anchor=east] at (3,-2) {\includegraphics[height=70px]{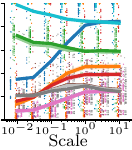}};
        \node[anchor=east] at (4,-2) {\includegraphics[height=70px]{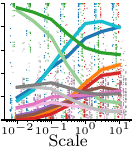}};
        \node[anchor=east] at (5,-2) {\includegraphics[height=70px]{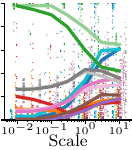}};
        \node[anchor=east] at (6,-2) {\includegraphics[height=70px]{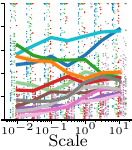}};
        \node[anchor=east] at (7,-2) {\includegraphics[height=70px]{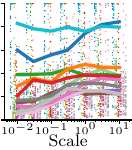}};
    \end{tikzpicture}
    }
    \caption{
        Detailed presentation of relative performance based on \minESS/s
        of the tested samplers on our benchmark problems as described in \cref{sec:experiments},
        where optimal hyperparameters were chosen based on best achieved \minESS{}.
        Individual dots show relative performance of individual sampling runs,
        the solid lines show average relative performance across all problems.
        Algorithms shown are \protect\RWMHm{}, \protect\MALAm{}, \protect\smMALAem{} \protect\smMALAdm{}, 
        \protect\Dikinm{}, \protect\MAPLAm{}, \protect\HRm{}, \protect\LHRm{}, 
        \protect\smHRem{}, \protect\smHRdm{},
        \protect\smLHRem{}, and \protect\smLHRdm{}.
    }
    \label{fig:esst2}
\end{figure}

\begin{figure}[h!]
    \centering
    \resizebox{\columnwidth}{!}{
    \begin{tikzpicture}[x=70px,y=80]
        \node at (-0.5, .55) {\footnotesize Gauss $(\mu = 0)$};
        \node at ( 0.5, .55) {\footnotesize Disc $(\mu = 0)$};
        \node at ( 1.5, .55) {\footnotesize Cigar $(\mu = 0)$};
        \node at ( 2.5, .55) {\footnotesize Gauss $(\mu = 0.5)$};
        \node at ( 3.5, .55) {\footnotesize Disc $(\mu = 0.5)$};
        \node at ( 4.5, .55) {\footnotesize Cigar $(\mu = 0.5)$};
        \node at ( 5.5, .55) {\footnotesize Bowtie};
        \node at ( 6.5, .55) {\footnotesize Funnel};

        \node[anchor=east] at (0, 0) {\includegraphics[height=70px]{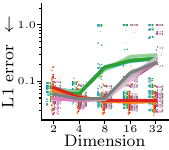}};
        \node[anchor=east] at (1, 0) {\includegraphics[height=70px]{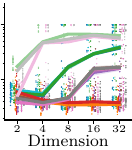}};
        \node[anchor=east] at (2, 0) {\includegraphics[height=70px]{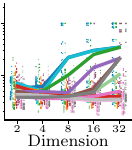}};
        \node[anchor=east] at (3, 0) {\includegraphics[height=70px]{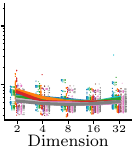}};
        \node[anchor=east] at (4, 0) {\includegraphics[height=70px]{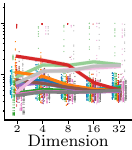}};
        \node[anchor=east] at (5, 0) {\includegraphics[height=70px]{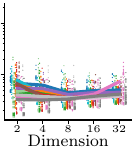}};
        \node[anchor=east] at (6, 0) {\includegraphics[height=70px]{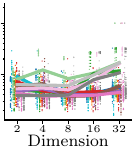}};
        \node[anchor=east] at (7, 0) {\includegraphics[height=70px]{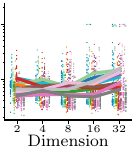}};
        
        \node[anchor=east] at (0,-1) {\includegraphics[height=70px]{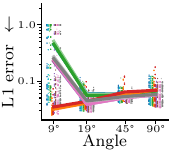}};
        \node[anchor=east] at (1,-1) {\includegraphics[height=70px]{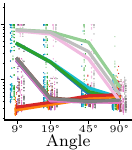}};
        \node[anchor=east] at (2,-1) {\includegraphics[height=70px]{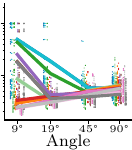}};
        \node[anchor=east] at (3,-1) {\includegraphics[height=70px]{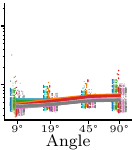}};
        \node[anchor=east] at (4,-1) {\includegraphics[height=70px]{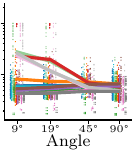}};
        \node[anchor=east] at (5,-1) {\includegraphics[height=70px]{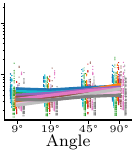}};
        \node[anchor=east] at (6,-1) {\includegraphics[height=70px]{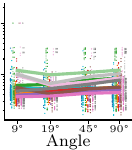}};
        \node[anchor=east] at (7,-1) {\includegraphics[height=70px]{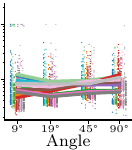}};
        
        \node[anchor=east] at (0,-2) {\includegraphics[height=70px]{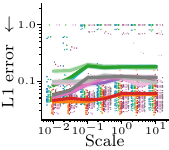}};
        \node[anchor=east] at (1,-2) {\includegraphics[height=70px]{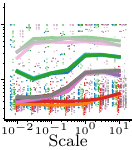}};
        \node[anchor=east] at (2,-2) {\includegraphics[height=70px]{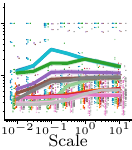}};
        \node[anchor=east] at (3,-2) {\includegraphics[height=70px]{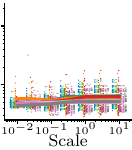}};
        \node[anchor=east] at (4,-2) {\includegraphics[height=70px]{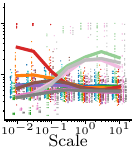}};
        \node[anchor=east] at (5,-2) {\includegraphics[height=70px]{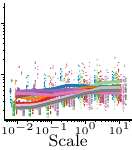}};
        \node[anchor=east] at (6,-2) {\includegraphics[height=70px]{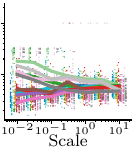}};
        \node[anchor=east] at (7,-2) {\includegraphics[height=70px]{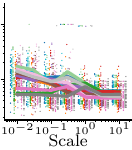}};
    \end{tikzpicture}
    }
    \caption{
        Detailed presentation of average L1 error
        of the tested samplers on our benchmark problems as described in \cref{sec:experiments},
        where optimal hyperparameters were chosen based on best achieved \minESS{}.
        Individual dots show relative performance of individual sampling runs,
        the solid lines show average relative performance across all problems.
        Algorithms shown are \protect\RWMHm{}, \protect\MALAm{}, \protect\smMALAem{} \protect\smMALAdm{}, 
        \protect\Dikinm{}, \protect\MAPLAm{}, \protect\HRm{}, \protect\LHRm{}, 
        \protect\smHRem{}, \protect\smHRdm{},
        \protect\smLHRem{}, and \protect\smLHRdm{}.
    }
    \label{fig:meanerr2}
\end{figure}

\begin{figure}[h!]
    \centering
    \resizebox{\columnwidth}{!}{
    \begin{tikzpicture}[x=70px,y=80]
        \node at (-0.5, .55) {\footnotesize Gauss $(\mu = 0)$};
        \node at ( 0.5, .55) {\footnotesize Disc $(\mu = 0)$};
        \node at ( 1.5, .55) {\footnotesize Cigar $(\mu = 0)$};
        \node at ( 2.5, .55) {\footnotesize Gauss $(\mu = 0.5)$};
        \node at ( 3.5, .55) {\footnotesize Disc $(\mu = 0.5)$};
        \node at ( 4.5, .55) {\footnotesize Cigar $(\mu = 0.5)$};
        \node at ( 5.5, .55) {\footnotesize Bowtie};
        \node at ( 6.5, .55) {\footnotesize Funnel};

        \node[anchor=east] at (0, 0) {\includegraphics[height=70px]{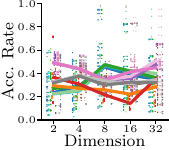}};
        \node[anchor=east] at (1, 0) {\includegraphics[height=70px]{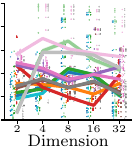}};
        \node[anchor=east] at (2, 0) {\includegraphics[height=70px]{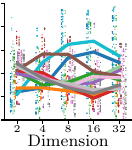}};
        \node[anchor=east] at (3, 0) {\includegraphics[height=70px]{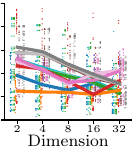}};
        \node[anchor=east] at (4, 0) {\includegraphics[height=70px]{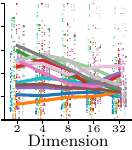}};
        \node[anchor=east] at (5, 0) {\includegraphics[height=70px]{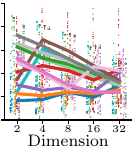}};
        \node[anchor=east] at (6, 0) {\includegraphics[height=70px]{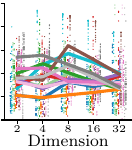}};
        \node[anchor=east] at (7, 0) {\includegraphics[height=70px]{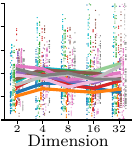}};
        
        \node[anchor=east] at (0,-1) {\includegraphics[height=70px]{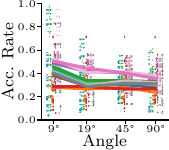}};
        \node[anchor=east] at (1,-1) {\includegraphics[height=70px]{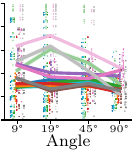}};
        \node[anchor=east] at (2,-1) {\includegraphics[height=70px]{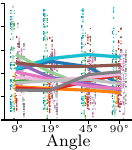}};
        \node[anchor=east] at (3,-1) {\includegraphics[height=70px]{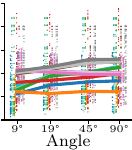}};
        \node[anchor=east] at (4,-1) {\includegraphics[height=70px]{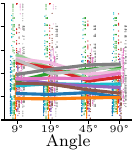}};
        \node[anchor=east] at (5,-1) {\includegraphics[height=70px]{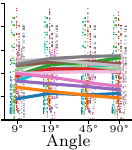}};
        \node[anchor=east] at (6,-1) {\includegraphics[height=70px]{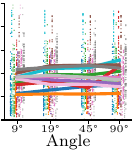}};
        \node[anchor=east] at (7,-1) {\includegraphics[height=70px]{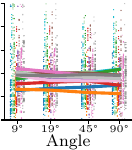}};
        
        \node[anchor=east] at (0,-2) {\includegraphics[height=70px]{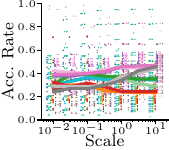}};
        \node[anchor=east] at (1,-2) {\includegraphics[height=70px]{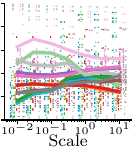}};
        \node[anchor=east] at (2,-2) {\includegraphics[height=70px]{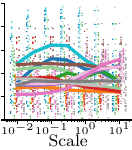}};
        \node[anchor=east] at (3,-2) {\includegraphics[height=70px]{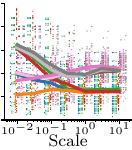}};
        \node[anchor=east] at (4,-2) {\includegraphics[height=70px]{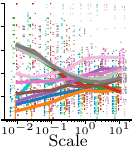}};
        \node[anchor=east] at (5,-2) {\includegraphics[height=70px]{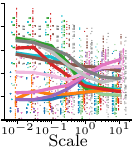}};
        \node[anchor=east] at (6,-2) {\includegraphics[height=70px]{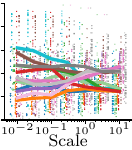}};
        \node[anchor=east] at (7,-2) {\includegraphics[height=70px]{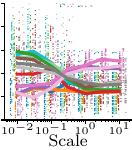}};
    \end{tikzpicture}
    }
    \caption{
        Detailed presentation of average acceptance rate
        of the tested samplers on our benchmark problems as described in \cref{sec:experiments},
        where optimal hyperparameters were chosen based on best achieved \minESS{}.
        Individual dots show relative performance of individual sampling runs,
        the solid lines show average relative performance across all problems.
        Algorithms shown are \protect\RWMHm{}, \protect\MALAm{}, \protect\smMALAem{}, \protect\smMALAdm{}, 
        \protect\Dikinm{}, \protect\MAPLAm{}, \protect\HRm{}, \protect\LHRm{}, 
        \protect\smHRem{}, \protect\smHRdm{},
        \protect\smLHRem{}, and \protect\smLHRdm{}.
    }
    \label{fig:acc2}
\end{figure}

\begin{figure}
    \centering
    \includegraphics[width=0.5\linewidth]{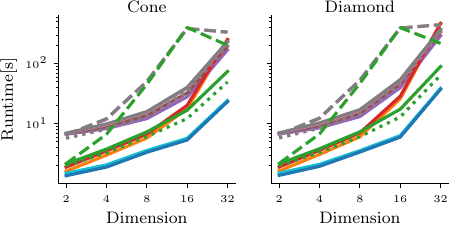}
    \caption{
        Average runtime of the tested samplers on the funnel and bowtie targets.
        Algorithms shown are \protect\RWMHm{}, \protect\MALAm{}, \protect\smMALAdm{}, 
        \protect\Dikinm{}, \protect\MAPLAm{}, \protect\HRm{}, \protect\LHRm{}, 
        \protect\smHRdm{},
        and \protect\smLHRdm{}.
        For the metric-based samplers, we show average runtimes using our metric from Eq.~(\ref{eq:metric}, solid),
        the SoftAbs metric (dashed) and the Monge metric (dotted).
    }
    \label{fig:runtimes}
\end{figure}

\begin{figure}
    \centering
    \includegraphics[width=0.5\linewidth]{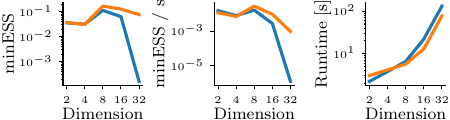}
    \caption{
        Comparison of using an exact\protect\cb{tabblu} vs. our moment-matched\protect\cb{taborg} $\chi_d$ distribution,
        evaluated using HR the sampler on the funnel distribution ($\sigma=0.1$) in a cone-shaped polytope ($\theta=19^{\circ}$).
        A fixed step size of $\varepsilon=0.1$ was used.
    }
    \label{fig:chi}
\end{figure}

\begin{figure}
    \centering
    \includegraphics[width=0.5\linewidth]{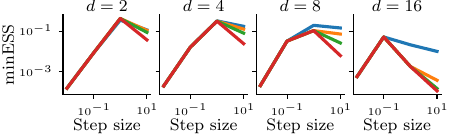}
    \caption{
        Sensitivity analysis of the gradient clipping factor from \cref{sec:lhr}.
        Tested fractions of $\kappa$ for clipping are 
        0.25\protect\cb{tabblu},
        0.5\protect\cb{taborg},
        0.75\protect\cb{tabgre}, and
        0.99\protect\cb{tabred}.
        Evaluated using the LHR sampler on the isotropic Gaussian with $\mu = 0$
        in a cone-shaped polytope ($\theta=90^{\circ}$) and across step sizes $\varepsilon=10^{-2}, 10^{-1}, 10^0, 10^1$.
    }
    \label{fig:clipping}
\end{figure}

\newpage

\begin{landscape}

\begin{table}[]
    \centering
    \caption{
        Average relative performance of the tested samplers across dimensions as presented as solid lines in \cref{fig:ess1}.
    }
    \resizebox{\linewidth}{!}{
    \begin{tikzpicture}
        \node at (0, 0) {%
            \begin{tabular}{r|
                S[table-format=1.4]S[table-format=1.4]S[table-format=1.4]S[table-format=1.4]S[table-format=1.4]|
                S[table-format=1.4]S[table-format=1.4]S[table-format=1.4]S[table-format=1.4]S[table-format=1.4]|
                S[table-format=1.4]S[table-format=1.4]S[table-format=1.4]S[table-format=1.4]S[table-format=1.4]|
                S[table-format=1.4]S[table-format=1.4]S[table-format=1.4]S[table-format=1.4]S[table-format=1.4]|
                S[table-format=1.4]S[table-format=1.4]S[table-format=1.4]S[table-format=1.4]S[table-format=1.4]|
                S[table-format=1.4]S[table-format=1.4]S[table-format=1.4]S[table-format=1.4]S[table-format=1.4]|
                S[table-format=1.4]S[table-format=1.4]S[table-format=1.4]S[table-format=1.4]S[table-format=1.4]|
                S[table-format=1.4]S[table-format=1.4]S[table-format=1.4]S[table-format=1.4]S[table-format=1.4]
            }
            \hline
            \multirow{2}{*}{Sampler ${\vcenter{\hbox{\begin{tikzpicture}[x=1mm,y=3.5mm]
                \draw[-] (1,-1) -- (-1,1);
            \end{tikzpicture}}}}$ \!\!\!\!\!\! $\begin{array}{c}\text{Target} \\d\end{array}$ \!\!\!\!\!\! }
                & \multicolumn{5}{c|}{Gauss $(\mu = 0)$}
                & \multicolumn{5}{c|}{Disc $(\mu = 0)$}
                & \multicolumn{5}{c|}{Cigar $(\mu = 0)$}
                & \multicolumn{5}{c|}{Gauss $(\mu = 0.5)$}
                & \multicolumn{5}{c|}{Disc $(\mu = 0.5)$}
                & \multicolumn{5}{c|}{Cigar $(\mu = 0.5)$}
                & \multicolumn{5}{c|}{Bowtie}
                & \multicolumn{5}{c}{Funnel}
            \\
            & \multicolumn{1}{c}{\footnotesize $2$} & \multicolumn{1}{c}{\footnotesize $4$} & \multicolumn{1}{c}{\footnotesize $8$} & \multicolumn{1}{c}{\footnotesize $16$} & \multicolumn{1}{c|}{\footnotesize $32$}
            & \multicolumn{1}{c}{\footnotesize $2$} & \multicolumn{1}{c}{\footnotesize $4$} & \multicolumn{1}{c}{\footnotesize $8$} & \multicolumn{1}{c}{\footnotesize $16$} & \multicolumn{1}{c|}{\footnotesize $32$}
            & \multicolumn{1}{c}{\footnotesize $2$} & \multicolumn{1}{c}{\footnotesize $4$} & \multicolumn{1}{c}{\footnotesize $8$} & \multicolumn{1}{c}{\footnotesize $16$} & \multicolumn{1}{c|}{\footnotesize $32$}
            & \multicolumn{1}{c}{\footnotesize $2$} & \multicolumn{1}{c}{\footnotesize $4$} & \multicolumn{1}{c}{\footnotesize $8$} & \multicolumn{1}{c}{\footnotesize $16$} & \multicolumn{1}{c|}{\footnotesize $32$}
            & \multicolumn{1}{c}{\footnotesize $2$} & \multicolumn{1}{c}{\footnotesize $4$} & \multicolumn{1}{c}{\footnotesize $8$} & \multicolumn{1}{c}{\footnotesize $16$} & \multicolumn{1}{c|}{\footnotesize $32$}
            & \multicolumn{1}{c}{\footnotesize $2$} & \multicolumn{1}{c}{\footnotesize $4$} & \multicolumn{1}{c}{\footnotesize $8$} & \multicolumn{1}{c}{\footnotesize $16$} & \multicolumn{1}{c|}{\footnotesize $32$}
            & \multicolumn{1}{c}{\footnotesize $2$} & \multicolumn{1}{c}{\footnotesize $4$} & \multicolumn{1}{c}{\footnotesize $8$} & \multicolumn{1}{c}{\footnotesize $16$} & \multicolumn{1}{c|}{\footnotesize $32$}
            & \multicolumn{1}{c}{\footnotesize $2$} & \multicolumn{1}{c}{\footnotesize $4$} & \multicolumn{1}{c}{\footnotesize $8$} & \multicolumn{1}{c}{\footnotesize $16$} & \multicolumn{1}{c }{\footnotesize $32$}
            \\
            \hline
 \RWMHm{}   & \cs 0.13           & \cs 0.067          & \cs 0.074          & \cs 0.09           & \cs 0.116          & \cs 0.414          & \cs 0.279          & \cs 0.272          & \cs 0.164          & \cs 0.226          & \cs 0.467         & \cs 0.219          & \cs 0.263          & \cs 0.247          & \cs 0.318          & \cs 0.118          & \cs 0.047          & \cs 0.077          & \cs 0.101        & \cs 0.159         & \cs 0.246          & \cs 0.2           & \cs 0.24           & \cs 0.29           & \cs 0.368          & \cs 0.297          & \cs 0.211          & \cs 0.247          & \cs 0.29           & \cs 0.369          & \cs 0.308          & \cs 0.374          & \cs 0.291          & \cs 0.253          & \cs 0.259          & \cs 0.471          & \cs 0.273          & \cs 0.334          & \cs 0.282          & \cs 0.333          \\
 \MALAm{}   & \cs 0.176          & \cs 0.089          & \cs 0.101          & \cs 0.096          & \cs 0.122          & \cs 0.582          & \cs 0.35           & \cs 0.332          & \cs 0.207          & \cs 0.261          & \cs 0.683         & \cs 0.3            & \cs 0.343          & \cs 0.254          & \cs 0.337          & \cs 0.137          & \cs 0.06           & \cs 0.1            & \cs 0.143        & \cs 0.212         & \cs 0.317          & \cs 0.29          & \cs 0.251          & \cs 0.307          & \cs 0.39           & \cs 0.646          & \cs 0.497          & \cs 0.495          & \cs 0.43           & \cs 0.51           & \cs 0.328          & \cs 0.341          & \cs 0.55           & \cs 0.573          & \cs 0.591          & \cs 0.392          & \cs 0.318          & \cs 0.437          & \cs 0.368          & \cs 0.49           \\
 \smMALAem{}& \cs 0.58           & \cs 0.593          & \cs 0.509          & \cs 0.432          & \cs 0.376          & \cs 0.094          & \cs 0.073          & \cs 0.053          & \cs 0.065          & \cs 0.084          & \cs 0.65          & \cs 0.295          & \cs 0.338          & \cs 0.269          & \cs 0.334          & \cs 0.68           & \cs 0.645          & \cs 0.619          & \cs 0.585        & \cs 0.628         & \cs 0.544          & \cs 0.41          & \cs 0.371          & \cs 0.315          & \cs 0.399          & \cs 0.625          & \cs 0.494          & \cs 0.471          & \cs 0.437          & \cs 0.515          & \cs 0.224          & \cs 0.152          & \cs 0.223          & \cs 0.215          & \cs 0.224          & \cs 0.398          & \cs 0.15           & \cs 0.26           & \cs 0.303          & \cs 0.328          \\
 \smMALAdm{}& \cs 0.58           & \cs 0.401          & \cs 0.43           & \cs 0.352          & \cs 0.183          & \cs 0.588          & \cs 0.346          & \cs 0.333          & \cs 0.22           & \cs 0.26           & \cs 0.653         & \cs 0.301          & \cs 0.358          & \cs 0.281          & \cs 0.348          & \cs 0.59           & \cs 0.495          & \cs 0.52           & \cs 0.575        & \cs 0.557         & \cs 0.645          & \cs 0.58          & \cs 0.455          & \cs 0.508          & \cs 0.55           & \cs 0.622          & \cs 0.53           & \cs 0.46           & \cs 0.477          & \cs 0.512          & \cs 0.495          & \cs 0.445          & \cs 0.463          & \cs 0.33           & \cs 0.366          & \cs 0.688          & \cs 0.38           & \cs 0.393          & \cs 0.483          & \cs 0.499          \\
 \Dikinm{}  & \cs 0.158          & \cs 0.148          & \cs 0.137          & \cs 0.179          & \cs 0.253          & \cs 0.619          & \cs \textbf{0.653} & \cs \textbf{0.719} & \cs \textbf{0.758} & \cs \textbf{0.854} & \cs 0.554         & \cs \textbf{0.581} & \cs \textbf{0.633} & \cs \textbf{0.645} & \cs \textbf{0.731} & \cs 0.102          & \cs 0.083          & \cs 0.095          & \cs 0.128        & \cs 0.211         & \cs 0.206          & \cs 0.253         & \cs 0.275          & \cs 0.347          & \cs 0.436          & \cs 0.312          & \cs 0.386          & \cs 0.419          & \cs 0.457          & \cs 0.527          & \cs 0.377          & \cs 0.628          & \cs 0.523          & \cs 0.38           & \cs 0.38           & \cs 0.337          & \cs 0.545          & \cs 0.512          & \cs 0.438          & \cs 0.517          \\
 \MAPLAm{}  & \cs 0.103          & \cs 0.13           & \cs 0.127          & \cs 0.142          & \cs 0.201          & \cs 0.38           & \cs 0.405          & \cs 0.471          & \cs 0.501          & \cs 0.624          & \cs 0.4           & \cs 0.413          & \cs 0.5            & \cs 0.472          & \cs 0.554          & \cs 0.163          & \cs 0.121          & \cs 0.124          & \cs 0.192        & \cs 0.235         & \cs 0.186          & \cs 0.214         & \cs 0.239          & \cs 0.282          & \cs 0.375          & \cs 0.332          & \cs 0.398          & \cs 0.429          & \cs 0.47           & \cs 0.52           & \cs 0.208          & \cs 0.544          & \cs 0.521          & \cs 0.407          & \cs 0.391          & \cs 0.207          & \cs 0.466          & \cs 0.503          & \cs 0.447          & \cs 0.534          \\
 \HRm{}     & \cs 0.207          & \cs 0.147          & \cs 0.137          & \cs 0.171          & \cs 0.216          & \cs 0.734          & \cs 0.513          & \cs 0.485          & \cs 0.466          & \cs 0.485          & \cs 0.804         & \cs 0.515          & \cs 0.499          & \cs 0.522          & \cs 0.549          & \cs 0.139          & \cs 0.114          & \cs 0.137          & \cs 0.186        & \cs 0.264         & \cs 0.329          & \cs 0.381         & \cs 0.418          & \cs 0.528          & \cs 0.619          & \cs 0.481          & \cs 0.438          & \cs 0.429          & \cs 0.496          & \cs 0.584          & \cs 0.367          & \cs 0.475          & \cs 0.374          & \cs 0.377          & \cs 0.416          & \cs 0.56           & \cs 0.446          & \cs 0.446          & \cs 0.443          & \cs 0.499          \\
 \LHRm{}    & \cs 0.261          & \cs 0.208          & \cs 0.226          & \cs 0.291          & \cs 0.313          & \cs \textbf{0.833} & \cs 0.522          & \cs 0.435          & \cs 0.47           & \cs 0.419          & \cs \textbf{0.85} & \cs 0.512          & \cs 0.539          & \cs 0.534          & \cs 0.491          & \cs 0.195          & \cs 0.175          & \cs 0.199          & \cs 0.258        & \cs 0.358         & \cs 0.38           & \cs 0.37          & \cs 0.366          & \cs 0.513          & \cs 0.539          & \cs \textbf{0.817} & \cs 0.575          & \cs \textbf{0.708} & \cs 0.674          & \cs 0.693          & \cs 0.404          & \cs 0.47           & \cs \textbf{0.713} & \cs \textbf{0.769} & \cs \textbf{0.813} & \cs 0.487          & \cs 0.462          & \cs 0.529          & \cs 0.514          & \cs 0.64           \\
 \smHRem{}  & \cs 0.557          & \cs 0.542          & \cs 0.496          & \cs 0.515          & \cs 0.53           & \cs 0.311          & \cs 0.145          & \cs 0.126          & \cs 0.14           & \cs 0.185          & \cs 0.77          & \cs 0.534          & \cs 0.498          & \cs 0.534          & \cs 0.545          & \cs 0.402          & \cs 0.375          & \cs 0.317          & \cs 0.379        & \cs 0.439         & \cs 0.361          & \cs 0.222         & \cs 0.253          & \cs 0.306          & \cs 0.38           & \cs 0.463          & \cs 0.443          & \cs 0.422          & \cs 0.506          & \cs 0.588          & \cs 0.449          & \cs 0.362          & \cs 0.286          & \cs 0.298          & \cs 0.267          & \cs 0.536          & \cs 0.468          & \cs 0.414          & \cs 0.439          & \cs 0.492          \\
 \smHRdm{}  & \cs 0.533          & \cs 0.578          & \cs 0.44           & \cs 0.472          & \cs 0.524          & \cs 0.688          & \cs 0.521          & \cs 0.486          & \cs 0.465          & \cs 0.469          & \cs 0.804         & \cs 0.518          & \cs 0.499          & \cs 0.526          & \cs 0.559          & \cs 0.305          & \cs 0.394          & \cs 0.402          & \cs 0.377        & \cs 0.4           & \cs 0.404          & \cs 0.432         & \cs 0.44           & \cs 0.509          & \cs 0.604          & \cs 0.458          & \cs 0.424          & \cs 0.47           & \cs 0.527          & \cs 0.586          & \cs 0.525          & \cs 0.499          & \cs 0.434          & \cs 0.404          & \cs 0.376          & \cs \textbf{0.844} & \cs \textbf{0.635} & \cs 0.528          & \cs 0.534          & \cs 0.553          \\
 \smLHRem{} & \cs 0.633          & \cs \textbf{0.659} & \cs \textbf{0.769} & \cs \textbf{0.772} & \cs \textbf{0.738} & \cs 0.121          & \cs 0.117          & \cs 0.104          & \cs 0.129          & \cs 0.165          & \cs 0.801         & \cs 0.513          & \cs 0.541          & \cs 0.512          & \cs 0.507          & \cs \textbf{0.855} & \cs 0.77           & \cs \textbf{0.801} & \cs \textbf{0.8} & \cs \textbf{0.81} & \cs 0.618          & \cs 0.48          & \cs 0.36           & \cs 0.49           & \cs 0.458          & \cs 0.808          & \cs 0.655          & \cs 0.704          & \cs 0.627          & \cs 0.72           & \cs 0.509          & \cs 0.346          & \cs 0.399          & \cs 0.417          & \cs 0.335          & \cs 0.549          & \cs 0.428          & \cs 0.458          & \cs 0.439          & \cs 0.449          \\
 \smLHRdm{} & \cs \textbf{0.699} & \cs 0.579          & \cs 0.586          & \cs 0.52           & \cs 0.475          & \cs 0.799          & \cs 0.511          & \cs 0.442          & \cs 0.468          & \cs 0.422          & \cs 0.835         & \cs 0.478          & \cs 0.545          & \cs 0.529          & \cs 0.515          & \cs 0.811          & \cs \textbf{0.832} & \cs 0.606          & \cs 0.686        & \cs 0.557         & \cs \textbf{0.673} & \cs \textbf{0.75} & \cs \textbf{0.697} & \cs \textbf{0.739} & \cs \textbf{0.729} & \cs 0.706          & \cs \textbf{0.683} & \cs 0.694          & \cs \textbf{0.714} & \cs \textbf{0.736} & \cs \textbf{0.729} & \cs \textbf{0.638} & \cs 0.666          & \cs 0.552          & \cs 0.521          & \cs 0.765          & \cs 0.497          & \cs \textbf{0.631} & \cs \textbf{0.657} & \cs \textbf{0.674} \\
\hline
\end{tabular}};
    \end{tikzpicture}
    }
\end{table}

\begin{table}[]
    \centering
    \caption{
        Average relative performance of the tested samplers across polytope angles as presented as solid lines in \cref{fig:ess1}.
    }
    \resizebox{\linewidth}{!}{
    \begin{tikzpicture}
        \node at (0, 0) {%
            \begin{tabular}{r|
                S[table-format=1.4]S[table-format=1.4]S[table-format=1.4]S[table-format=1.4]|
                S[table-format=1.4]S[table-format=1.4]S[table-format=1.4]S[table-format=1.4]|
                S[table-format=1.4]S[table-format=1.4]S[table-format=1.4]S[table-format=1.4]|
                S[table-format=1.4]S[table-format=1.4]S[table-format=1.4]S[table-format=1.4]|
                S[table-format=1.4]S[table-format=1.4]S[table-format=1.4]S[table-format=1.4]|
                S[table-format=1.4]S[table-format=1.4]S[table-format=1.4]S[table-format=1.4]|
                S[table-format=1.4]S[table-format=1.4]S[table-format=1.4]S[table-format=1.4]|
                S[table-format=1.4]S[table-format=1.4]S[table-format=1.4]S[table-format=1.4]
            }
            \hline
            \multirow{2}{*}{Sampler ${\vcenter{\hbox{\begin{tikzpicture}[x=1mm,y=3.5mm]
                \draw[-] (1,-1) -- (-1,1);
            \end{tikzpicture}}}}$ \!\!\!\!\!\! $\begin{array}{c}\text{Target} \\\theta\end{array}$ \!\!\!\!\!\! }
                & \multicolumn{4}{c|}{Gauss $(\mu = 0)$}
                & \multicolumn{4}{c|}{Disc $(\mu = 0)$}
                & \multicolumn{4}{c|}{Cigar $(\mu = 0)$}
                & \multicolumn{4}{c|}{Gauss $(\mu = 0.5)$}
                & \multicolumn{4}{c|}{Disc $(\mu = 0.5)$}
                & \multicolumn{4}{c|}{Cigar $(\mu = 0.5)$}
                & \multicolumn{4}{c|}{Bowtie}
                & \multicolumn{4}{c}{Funnel}
            \\
            & \multicolumn{1}{c}{\footnotesize $9^{\circ}$} & \multicolumn{1}{c}{\footnotesize $19^{\circ}$} & \multicolumn{1}{c}{\footnotesize $45^{\circ}$} & \multicolumn{1}{c|}{\footnotesize $90^{\circ}$}
            & \multicolumn{1}{c}{\footnotesize $9^{\circ}$} & \multicolumn{1}{c}{\footnotesize $19^{\circ}$} & \multicolumn{1}{c}{\footnotesize $45^{\circ}$} & \multicolumn{1}{c|}{\footnotesize $90^{\circ}$}
            & \multicolumn{1}{c}{\footnotesize $9^{\circ}$} & \multicolumn{1}{c}{\footnotesize $19^{\circ}$} & \multicolumn{1}{c}{\footnotesize $45^{\circ}$} & \multicolumn{1}{c|}{\footnotesize $90^{\circ}$}
            & \multicolumn{1}{c}{\footnotesize $9^{\circ}$} & \multicolumn{1}{c}{\footnotesize $19^{\circ}$} & \multicolumn{1}{c}{\footnotesize $45^{\circ}$} & \multicolumn{1}{c|}{\footnotesize $90^{\circ}$}
            & \multicolumn{1}{c}{\footnotesize $9^{\circ}$} & \multicolumn{1}{c}{\footnotesize $19^{\circ}$} & \multicolumn{1}{c}{\footnotesize $45^{\circ}$} & \multicolumn{1}{c|}{\footnotesize $90^{\circ}$}
            & \multicolumn{1}{c}{\footnotesize $9^{\circ}$} & \multicolumn{1}{c}{\footnotesize $19^{\circ}$} & \multicolumn{1}{c}{\footnotesize $45^{\circ}$} & \multicolumn{1}{c|}{\footnotesize $90^{\circ}$}
            & \multicolumn{1}{c}{\footnotesize $9^{\circ}$} & \multicolumn{1}{c}{\footnotesize $19^{\circ}$} & \multicolumn{1}{c}{\footnotesize $45^{\circ}$} & \multicolumn{1}{c|}{\footnotesize $90^{\circ}$}
            & \multicolumn{1}{c}{\footnotesize $9^{\circ}$} & \multicolumn{1}{c}{\footnotesize $19^{\circ}$} & \multicolumn{1}{c}{\footnotesize $45^{\circ}$} & \multicolumn{1}{c }{\footnotesize $90^{\circ}$}
            \\
            \hline
  \RWMHm{}   & \cs 0.01           & \cs 0.057          & \cs 0.102          & \cs 0.213          & \cs 0.031          & \cs 0.106          & \cs 0.413          & \cs 0.535          & \cs 0.033          & \cs 0.248          & \cs 0.435          & \cs 0.496         & \cs 0.013          & \cs 0.078          & \cs 0.119         & \cs 0.192          & \cs 0.198          & \cs 0.343          & \cs 0.287          & \cs 0.247          & \cs 0.124          & \cs 0.377          & \cs 0.312          & \cs 0.318          & \cs 0.185          & \cs 0.334          & \cs 0.272          & \cs 0.398          & \cs 0.149          & \cs 0.337          & \cs 0.366          & \cs 0.503          \\
  \MALAm{}   & \cs 0.012          & \cs 0.068          & \cs 0.12           & \cs 0.267          & \cs 0.035          & \cs 0.156          & \cs 0.547          & \cs 0.647          & \cs 0.036          & \cs 0.334          & \cs 0.496          & \cs 0.668         & \cs 0.017          & \cs 0.087          & \cs 0.152         & \cs 0.266          & \cs 0.247          & \cs 0.398          & \cs 0.335          & \cs 0.265          & \cs 0.268          & \cs 0.589          & \cs 0.587          & \cs 0.619          & \cs 0.29           & \cs 0.522          & \cs 0.452          & \cs 0.642          & \cs 0.198          & \cs 0.429          & \cs 0.461          & \cs 0.516          \\
  \smMALAem{}& \cs 0.41           & \cs 0.628          & \cs 0.482          & \cs 0.473          & \cs 0.002          & \cs 0.007          & \cs 0.024          & \cs 0.262          & \cs 0.043          & \cs 0.327          & \cs 0.495          & \cs 0.644         & \cs 0.596          & \cs 0.668          & \cs 0.664         & \cs 0.597          & \cs 0.22           & \cs 0.317          & \cs 0.462          & \cs 0.632          & \cs 0.266          & \cs 0.575          & \cs 0.58           & \cs 0.612          & \cs 0.124          & \cs 0.231          & \cs 0.207          & \cs 0.268          & \cs 0.147          & \cs 0.312          & \cs 0.351          & \cs 0.342          \\
  \smMALAdm{}& \cs 0.099          & \cs 0.317          & \cs 0.457          & \cs 0.685          & \cs 0.044          & \cs 0.166          & \cs 0.532          & \cs 0.657          & \cs 0.041          & \cs 0.328          & \cs 0.521          & \cs 0.663         & \cs 0.36           & \cs 0.479          & \cs 0.601         & \cs 0.75           & \cs 0.369          & \cs 0.59           & \cs 0.617          & \cs 0.612          & \cs 0.283          & \cs 0.578          & \cs 0.59           & \cs 0.631          & \cs 0.26           & \cs 0.437          & \cs 0.421          & \cs 0.561          & \cs 0.242          & \cs 0.52           & \cs 0.539          & \cs 0.654          \\
  \Dikinm{}  & \cs 0.309          & \cs 0.165          & \cs 0.103          & \cs 0.123          & \cs \textbf{0.978} & \cs \textbf{0.954} & \cs 0.573          & \cs 0.377          & \cs \textbf{0.946} & \cs \textbf{0.866} & \cs 0.446          & \cs 0.257         & \cs 0.168          & \cs 0.119          & \cs 0.089         & \cs 0.119          & \cs 0.496          & \cs 0.378          & \cs 0.169          & \cs 0.171          & \cs \textbf{0.715} & \cs 0.524          & \cs 0.213          & \cs 0.23           & \cs \textbf{0.652} & \cs 0.531          & \cs 0.336          & \cs 0.312          & \cs \textbf{0.697} & \cs 0.53           & \cs 0.336          & \cs 0.316          \\
  \MAPLAm{}  & \cs 0.232          & \cs 0.124          & \cs 0.081          & \cs 0.126          & \cs 0.647          & \cs 0.581          & \cs 0.337          & \cs 0.34           & \cs 0.728          & \cs 0.594          & \cs 0.298          & \cs 0.252         & \cs 0.287          & \cs 0.165          & \cs 0.091         & \cs 0.126          & \cs 0.431          & \cs 0.299          & \cs 0.129          & \cs 0.178          & \cs 0.631          & \cs 0.472          & \cs 0.243          & \cs 0.373          & \cs 0.477          & \cs 0.425          & \cs 0.381          & \cs 0.374          & \cs 0.633          & \cs 0.51           & \cs 0.309          & \cs 0.273          \\
  \HRm{}     & \cs 0.048          & \cs 0.113          & \cs 0.243          & \cs 0.299          & \cs 0.15           & \cs 0.34           & \cs \textbf{0.871} & \cs 0.786          & \cs 0.175          & \cs 0.495          & \cs 0.831          & \cs 0.809         & \cs 0.03           & \cs 0.116          & \cs 0.248         & \cs 0.279          & \cs 0.346          & \cs 0.542          & \cs 0.532          & \cs 0.401          & \cs 0.29           & \cs 0.61           & \cs 0.56           & \cs 0.483          & \cs 0.235          & \cs 0.432          & \cs 0.427          & \cs 0.513          & \cs 0.233          & \cs 0.458          & \cs 0.582          & \cs 0.641          \\
  \LHRm{}    & \cs 0.035          & \cs 0.161          & \cs 0.433          & \cs 0.41           & \cs 0.128          & \cs 0.363          & \cs 0.855          & \cs 0.798          & \cs 0.141          & \cs 0.503          & \cs 0.837          & \cs \textbf{0.86} & \cs 0.043          & \cs 0.171          & \cs 0.358         & \cs 0.377          & \cs 0.346          & \cs 0.525          & \cs 0.511          & \cs 0.353          & \cs 0.406          & \cs 0.753          & \cs \textbf{0.842} & \cs 0.773          & \cs 0.421          & \cs 0.654          & \cs \textbf{0.733} & \cs \textbf{0.728} & \cs 0.292          & \cs 0.493          & \cs 0.666          & \cs 0.655          \\
  \smHRem{}  & \cs 0.737          & \cs 0.625          & \cs 0.39           & \cs 0.359          & \cs 0.013          & \cs 0.029          & \cs 0.075          & \cs 0.608          & \cs 0.173          & \cs 0.502          & \cs 0.829          & \cs 0.8           & \cs 0.478          & \cs 0.446          & \cs 0.337         & \cs 0.268          & \cs 0.161          & \cs 0.23           & \cs 0.333          & \cs 0.494          & \cs 0.295          & \cs 0.583          & \cs 0.57           & \cs 0.49           & \cs 0.213          & \cs 0.362          & \cs 0.386          & \cs 0.367          & \cs 0.274          & \cs 0.498          & \cs 0.565          & \cs 0.541          \\
  \smHRdm{}  & \cs 0.804          & \cs 0.572          & \cs 0.311          & \cs 0.35           & \cs 0.129          & \cs 0.338          & \cs 0.758          & \cs \textbf{0.877} & \cs 0.192          & \cs 0.512          & \cs 0.858          & \cs 0.763         & \cs 0.459          & \cs 0.407          & \cs 0.321         & \cs 0.317          & \cs 0.346          & \cs 0.514          & \cs 0.528          & \cs 0.524          & \cs 0.297          & \cs 0.6            & \cs 0.566          & \cs 0.508          & \cs 0.333          & \cs 0.506          & \cs 0.435          & \cs 0.517          & \cs 0.432          & \cs \textbf{0.619} & \cs 0.73           & \cs 0.694          \\
  \smLHRem{} & \cs \textbf{0.872} & \cs \textbf{0.862} & \cs 0.548          & \cs 0.575          & \cs 0.005          & \cs 0.013          & \cs 0.05           & \cs 0.441          & \cs 0.138          & \cs 0.479          & \cs \textbf{0.864} & \cs 0.818         & \cs \textbf{0.933} & \cs \textbf{0.905} & \cs 0.732         & \cs 0.659          & \cs 0.283          & \cs 0.398          & \cs 0.531          & \cs 0.712          & \cs 0.386          & \cs \textbf{0.777} & \cs 0.838          & \cs \textbf{0.809} & \cs 0.307          & \cs 0.469          & \cs 0.446          & \cs 0.382          & \cs 0.299          & \cs 0.481          & \cs 0.579          & \cs 0.5            \\
  \smLHRdm{} & \cs 0.204          & \cs 0.493          & \cs \textbf{0.874} & \cs \textbf{0.717} & \cs 0.129          & \cs 0.342          & \cs 0.85           & \cs 0.792          & \cs 0.16           & \cs 0.498          & \cs 0.836          & \cs 0.827         & \cs 0.542          & \cs 0.695          & \cs \textbf{0.77} & \cs \textbf{0.786} & \cs \textbf{0.528} & \cs \textbf{0.748} & \cs \textbf{0.829} & \cs \textbf{0.766} & \cs 0.423          & \cs 0.756          & \cs 0.839          & \cs 0.808          & \cs 0.457          & \cs \textbf{0.671} & \cs 0.656          & \cs 0.699          & \cs 0.4            & \cs 0.606          & \cs \textbf{0.821} & \cs \textbf{0.752} \\
 \hline
\end{tabular}};
    \end{tikzpicture}
    }
\end{table}

\begin{table}[]
    \centering
    \caption{
        Average relative performance of the tested samplers across the densities' scale parameters
        as presented as solid lines in \cref{fig:ess1}.
    }
    \resizebox{.9\linewidth}{!}{
    \begin{tikzpicture}
        \node at (0, 0) {%
            \begin{tabular}{r|
                S[table-format=1.4]S[table-format=1.4]S[table-format=1.4]S[table-format=1.4]S[table-format=1.4]S[table-format=1.4]S[table-format=1.4]|
                S[table-format=1.4]S[table-format=1.4]S[table-format=1.4]S[table-format=1.4]S[table-format=1.4]S[table-format=1.4]S[table-format=1.4]|
                S[table-format=1.4]S[table-format=1.4]S[table-format=1.4]S[table-format=1.4]S[table-format=1.4]S[table-format=1.4]S[table-format=1.4]|
                S[table-format=1.4]S[table-format=1.4]S[table-format=1.4]S[table-format=1.4]S[table-format=1.4]S[table-format=1.4]S[table-format=1.4]
            }
            \hline
            \multirow{2}{*}{Sampler ${\vcenter{\hbox{\begin{tikzpicture}[x=1mm,y=3.5mm]
                \draw[-] (1,-1) -- (-1,1);
            \end{tikzpicture}}}}$ \!\!\!\!\!\! $\begin{array}{c}\text{Target} \\\sigma\end{array}$ \!\!\!\!\!\! }
                & \multicolumn{7}{c|}{Gauss $(\mu = 0)$}
                & \multicolumn{7}{c|}{Disc $(\mu = 0)$}
                & \multicolumn{7}{c|}{Cigar $(\mu = 0)$}
                & \multicolumn{7}{c }{Gauss $(\mu = 0.5)$}
            \\
            & \multicolumn{1}{c}{\footnotesize $10^{-2}$} & \multicolumn{1}{c}{\footnotesize $10^{-1.5}$} & \multicolumn{1}{c}{\footnotesize $10^{-1}$} & \multicolumn{1}{c}{\footnotesize $10^{-0.5}$} & \multicolumn{1}{c}{\footnotesize $10^{0}$} & \multicolumn{1}{c}{\footnotesize $10^{0.5}$} & \multicolumn{1}{c|}{\footnotesize $10^{1}$}
            & \multicolumn{1}{c}{\footnotesize $10^{-2}$} & \multicolumn{1}{c}{\footnotesize $10^{-1.5}$} & \multicolumn{1}{c}{\footnotesize $10^{-1}$} & \multicolumn{1}{c}{\footnotesize $10^{-0.5}$} & \multicolumn{1}{c}{\footnotesize $10^{0}$} & \multicolumn{1}{c}{\footnotesize $10^{0.5}$} & \multicolumn{1}{c|}{\footnotesize $10^{1}$}
            & \multicolumn{1}{c}{\footnotesize $10^{-2}$} & \multicolumn{1}{c}{\footnotesize $10^{-1.5}$} & \multicolumn{1}{c}{\footnotesize $10^{-1}$} & \multicolumn{1}{c}{\footnotesize $10^{-0.5}$} & \multicolumn{1}{c}{\footnotesize $10^{0}$} & \multicolumn{1}{c}{\footnotesize $10^{0.5}$} & \multicolumn{1}{c|}{\footnotesize $10^{1}$}
            & \multicolumn{1}{c}{\footnotesize $10^{-2}$} & \multicolumn{1}{c}{\footnotesize $10^{-1.5}$} & \multicolumn{1}{c}{\footnotesize $10^{-1}$} & \multicolumn{1}{c}{\footnotesize $10^{-0.5}$} & \multicolumn{1}{c}{\footnotesize $10^{0}$} & \multicolumn{1}{c}{\footnotesize $10^{0.5}$} & \multicolumn{1}{c }{\footnotesize $10^{1}$}
            \\
            \hline
  \RWMHm{}   & \cs 0.018          & \cs 0.024          & \cs 0.015          & \cs 0.044          & \cs 0.101          & \cs 0.226          & \cs 0.24           & \cs 0.305          & \cs 0.274          & \cs 0.272          & \cs 0.278         & \cs 0.255          & \cs 0.217          & \cs 0.296          & \cs 0.29           & \cs 0.257          & \cs 0.249          & \cs 0.278          & \cs 0.344          & \cs 0.366          & \cs 0.338          & \cs 0.008          & \cs 0.008         & \cs 0.017          & \cs 0.053        & \cs 0.132          & \cs 0.244          & \cs 0.239           \\
  \MALAm{}   & \cs 0.04           & \cs 0.038          & \cs 0.025          & \cs 0.074          & \cs 0.156          & \cs 0.237          & \cs 0.247          & \cs 0.386          & \cs 0.415          & \cs 0.33           & \cs 0.343         & \cs 0.329          & \cs 0.315          & \cs 0.306          & \cs 0.426          & \cs 0.415          & \cs 0.385          & \cs 0.386          & \cs 0.363          & \cs 0.366          & \cs 0.343          & \cs 0.022          & \cs 0.021         & \cs 0.043          & \cs 0.101        & \cs 0.192          & \cs 0.291          & \cs 0.242           \\
  \smMALAem{}& \cs 0.669          & \cs 0.686          & \cs 0.527          & \cs 0.601          & \cs 0.438          & \cs 0.296          & \cs 0.269          & \cs 0.12           & \cs 0.064          & \cs 0.059          & \cs 0.055         & \cs 0.055          & \cs 0.077          & \cs 0.088          & \cs 0.458          & \cs 0.38           & \cs 0.385          & \cs 0.36           & \cs 0.366          & \cs 0.347          & \cs 0.345          & \cs 0.942          & \cs 0.915         & \cs 0.865          & \cs 0.634        & \cs 0.483          & \cs 0.285          & \cs 0.295           \\
  \smMALAdm{}& \cs 0.431          & \cs 0.546          & \cs 0.44           & \cs 0.523          & \cs 0.291          & \cs 0.258          & \cs 0.236          & \cs 0.397          & \cs 0.366          & \cs 0.362          & \cs 0.369         & \cs 0.351          & \cs 0.312          & \cs 0.29           & \cs 0.436          & \cs 0.396          & \cs 0.402          & \cs 0.394          & \cs 0.378          & \cs 0.367          & \cs 0.346          & \cs 0.881          & \cs 0.82          & \cs 0.754          & \cs 0.536        & \cs 0.313          & \cs 0.277          & \cs 0.251           \\
  \Dikinm{}  & \cs 0.155          & \cs 0.171          & \cs 0.126          & \cs 0.176          & \cs 0.158          & \cs 0.216          & \cs 0.223          & \cs \textbf{0.758} & \cs \textbf{0.761} & \cs \textbf{0.751} & \cs \textbf{0.74} & \cs \textbf{0.749} & \cs \textbf{0.694} & \cs 0.59           & \cs \textbf{0.682} & \cs \textbf{0.668} & \cs \textbf{0.684} & \cs \textbf{0.648} & \cs 0.577          & \cs 0.575          & \cs 0.567          & \cs 0.075          & \cs 0.075         & \cs 0.074          & \cs 0.085        & \cs 0.138          & \cs 0.2            & \cs 0.22            \\
  \MAPLAm{}  & \cs 0.105          & \cs 0.108          & \cs 0.095          & \cs 0.131          & \cs 0.143          & \cs 0.196          & \cs 0.207          & \cs 0.459          & \cs 0.444          & \cs 0.444          & \cs 0.442         & \cs 0.482          & \cs 0.542          & \cs 0.522          & \cs 0.403          & \cs 0.411          & \cs 0.423          & \cs 0.5            & \cs 0.506          & \cs 0.531          & \cs 0.501          & \cs 0.207          & \cs 0.186         & \cs 0.174          & \cs 0.082        & \cs 0.131          & \cs 0.185          & \cs 0.203           \\
  \HRm{}     & \cs 0.033          & \cs 0.053          & \cs 0.038          & \cs 0.09           & \cs 0.185          & \cs 0.395          & \cs 0.435          & \cs 0.52           & \cs 0.508          & \cs 0.506          & \cs 0.535         & \cs 0.504          & \cs 0.566          & \cs \textbf{0.617} & \cs 0.498          & \cs 0.469          & \cs 0.501          & \cs 0.578          & \cs \textbf{0.678} & \cs \textbf{0.676} & \cs 0.644          & \cs 0.008          & \cs 0.009         & \cs 0.021          & \cs 0.097        & \cs 0.209          & \cs 0.383          & \cs 0.45            \\
  \LHRm{}    & \cs 0.101          & \cs 0.099          & \cs 0.085          & \cs 0.16           & \cs 0.322          & \cs 0.512          & \cs 0.542          & \cs 0.56           & \cs 0.556          & \cs 0.533          & \cs 0.551         & \cs 0.484          & \cs 0.556          & \cs 0.513          & \cs 0.543          & \cs 0.538          & \cs 0.572          & \cs 0.59           & \cs 0.575          & \cs 0.593          & \cs \textbf{0.685} & \cs 0.023          & \cs 0.022         & \cs 0.079          & \cs 0.175        & \cs 0.353          & \cs 0.526          & \cs 0.482           \\
  \smHRem{}  & \cs 0.425          & \cs 0.488          & \cs 0.419          & \cs 0.526          & \cs 0.594          & \cs 0.62           & \cs 0.624          & \cs 0.215          & \cs 0.17           & \cs 0.173          & \cs 0.153         & \cs 0.185          & \cs 0.175          & \cs 0.2            & \cs 0.489          & \cs 0.448          & \cs 0.513          & \cs 0.554          & \cs \textbf{0.678} & \cs 0.668          & \cs 0.683          & \cs 0.218          & \cs 0.213         & \cs 0.265          & \cs 0.289        & \cs 0.544          & \cs 0.542          & \cs 0.605           \\
  \smHRdm{}  & \cs 0.352          & \cs 0.363          & \cs 0.341          & \cs 0.455          & \cs 0.54           & \cs \textbf{0.751} & \cs \textbf{0.763} & \cs 0.505          & \cs 0.506          & \cs 0.498          & \cs 0.505         & \cs 0.554          & \cs 0.543          & \cs 0.57           & \cs 0.496          & \cs 0.46           & \cs 0.524          & \cs 0.569          & \cs 0.677          & \cs 0.673          & \cs 0.668          & \cs 0.142          & \cs 0.141         & \cs 0.205          & \cs 0.282        & \cs 0.463          & \cs 0.65           & \cs \textbf{0.748}  \\
  \smLHRem{} & \cs \textbf{0.807} & \cs \textbf{0.848} & \cs \textbf{0.739} & \cs \textbf{0.741} & \cs \textbf{0.685} & \cs 0.61           & \cs 0.569          & \cs 0.147          & \cs 0.114          & \cs 0.105          & \cs 0.103         & \cs 0.109          & \cs 0.145          & \cs 0.168          & \cs 0.569          & \cs 0.51           & \cs 0.594          & \cs 0.512          & \cs 0.575          & \cs 0.586          & \cs 0.676          & \cs \textbf{0.963} & \cs \textbf{0.94} & \cs \textbf{0.948} & \cs \textbf{0.8} & \cs \textbf{0.744} & \cs \textbf{0.659} & \cs 0.596           \\
  \smLHRdm{} & \cs 0.539          & \cs 0.617          & \cs 0.599          & \cs 0.653          & \cs 0.502          & \cs 0.55           & \cs 0.543          & \cs 0.542          & \cs 0.518          & \cs 0.54           & \cs 0.522         & \cs 0.505          & \cs 0.551          & \cs 0.521          & \cs 0.592          & \cs 0.494          & \cs 0.546          & \cs 0.582          & \cs 0.582          & \cs 0.611          & \cs 0.654          & \cs 0.795          & \cs 0.826         & \cs 0.795          & \cs 0.705        & \cs 0.592          & \cs 0.607          & \cs 0.568           \\
            \hline
\end{tabular}};
    \end{tikzpicture}
    }
\end{table}

\begin{table}[]
    \centering
    \caption{
        Average relative performance of the tested samplers across the densities' scale parameters
        as presented as solid lines in \cref{fig:ess1}.
    }
    \resizebox{.9\linewidth}{!}{
    \begin{tikzpicture}
        \node at (0, 0) {%
            \begin{tabular}{r|
                S[table-format=1.4]S[table-format=1.4]S[table-format=1.4]S[table-format=1.4]S[table-format=1.4]S[table-format=1.4]S[table-format=1.4]|
                S[table-format=1.4]S[table-format=1.4]S[table-format=1.4]S[table-format=1.4]S[table-format=1.4]S[table-format=1.4]S[table-format=1.4]|
                S[table-format=1.4]S[table-format=1.4]S[table-format=1.4]S[table-format=1.4]S[table-format=1.4]S[table-format=1.4]S[table-format=1.4]|
                S[table-format=1.4]S[table-format=1.4]S[table-format=1.4]S[table-format=1.4]S[table-format=1.4]S[table-format=1.4]S[table-format=1.4]
            }
            \hline
            \multirow{2}{*}{Sampler ${\vcenter{\hbox{\begin{tikzpicture}[x=1mm,y=3.5mm]
                \draw[-] (1,-1) -- (-1,1);
            \end{tikzpicture}}}}$ \!\!\!\!\!\! $\begin{array}{c}\text{Target} \\\sigma\end{array}$ \!\!\!\!\!\! }
                & \multicolumn{7}{c|}{Disc $(\mu = 0.5)$}
                & \multicolumn{7}{c|}{Cigar $(\mu = 0.5)$}
                & \multicolumn{7}{c|}{Bowtie}
                & \multicolumn{7}{c }{Funnel}
            \\
            & \multicolumn{1}{c}{\footnotesize $10^{-2}$} & \multicolumn{1}{c}{\footnotesize $10^{-1.5}$} & \multicolumn{1}{c}{\footnotesize $10^{-1}$} & \multicolumn{1}{c}{\footnotesize $10^{-0.5}$} & \multicolumn{1}{c}{\footnotesize $10^{0}$} & \multicolumn{1}{c}{\footnotesize $10^{0.5}$} & \multicolumn{1}{c|}{\footnotesize $10^{1}$}
            & \multicolumn{1}{c}{\footnotesize $10^{-2}$} & \multicolumn{1}{c}{\footnotesize $10^{-1.5}$} & \multicolumn{1}{c}{\footnotesize $10^{-1}$} & \multicolumn{1}{c}{\footnotesize $10^{-0.5}$} & \multicolumn{1}{c}{\footnotesize $10^{0}$} & \multicolumn{1}{c}{\footnotesize $10^{0.5}$} & \multicolumn{1}{c|}{\footnotesize $10^{1}$}
            & \multicolumn{1}{c}{\footnotesize $10^{-2}$} & \multicolumn{1}{c}{\footnotesize $10^{-1.5}$} & \multicolumn{1}{c}{\footnotesize $10^{-1}$} & \multicolumn{1}{c}{\footnotesize $10^{-0.5}$} & \multicolumn{1}{c}{\footnotesize $10^{0}$} & \multicolumn{1}{c}{\footnotesize $10^{0.5}$} & \multicolumn{1}{c|}{\footnotesize $10^{1}$}
            & \multicolumn{1}{c}{\footnotesize $10^{-2}$} & \multicolumn{1}{c}{\footnotesize $10^{-1.5}$} & \multicolumn{1}{c}{\footnotesize $10^{-1}$} & \multicolumn{1}{c}{\footnotesize $10^{-0.5}$} & \multicolumn{1}{c}{\footnotesize $10^{0}$} & \multicolumn{1}{c}{\footnotesize $10^{0.5}$} & \multicolumn{1}{c }{\footnotesize $10^{1}$}
            \\
            \hline
  \RWMHm{}   & \cs 0.103          & \cs 0.147         & \cs 0.221          & \cs 0.283          & \cs 0.366          & \cs 0.38           & \cs 0.382          & \cs 0.135          & \cs 0.162          & \cs 0.28           & \cs 0.323         & \cs 0.385          & \cs 0.352          & \cs 0.342          & \cs 0.317          & \cs 0.407          & \cs 0.332          & \cs 0.187          & \cs 0.248          & \cs 0.29           & \cs 0.302          & \cs 0.366          & \cs 0.326          & \cs 0.284          & \cs 0.304          & \cs 0.35           & \cs 0.377          & \cs 0.363          \\
  \MALAm{}   & \cs 0.104          & \cs 0.131         & \cs 0.273          & \cs 0.385          & \cs 0.464          & \cs 0.437          & \cs 0.383          & \cs 0.857          & \cs 0.671          & \cs 0.588          & \cs 0.415         & \cs 0.404          & \cs 0.352          & \cs 0.323          & \cs 0.62           & \cs 0.711          & \cs 0.587          & \cs 0.415          & \cs 0.373          & \cs 0.33           & \cs 0.299          & \cs 0.463          & \cs 0.374          & \cs 0.428          & \cs 0.404          & \cs 0.373          & \cs 0.386          & \cs 0.379          \\
  \smMALAem{}& \cs \textbf{0.866} & \cs 0.721         & \cs 0.51           & \cs 0.315          & \cs 0.195          & \cs 0.132          & \cs 0.114          & \cs 0.857          & \cs 0.647          & \cs 0.589          & \cs 0.395         & \cs 0.414          & \cs 0.33           & \cs 0.325          & \cs 0.072          & \cs 0.134          & \cs 0.15           & \cs 0.223          & \cs 0.313          & \cs 0.253          & \cs 0.307          & \cs 0.117          & \cs 0.176          & \cs 0.262          & \cs 0.305          & \cs 0.368          & \cs 0.423          & \cs 0.364          \\
  \smMALAdm{}& \cs 0.716          & \cs 0.683         & \cs 0.557          & \cs 0.517          & \cs 0.528          & \cs 0.448          & \cs 0.383          & \cs \textbf{0.859} & \cs 0.678          & \cs 0.618          & \cs 0.408         & \cs 0.394          & \cs 0.358          & \cs 0.327          & \cs 0.536          & \cs 0.575          & \cs 0.476          & \cs 0.402          & \cs 0.345          & \cs 0.306          & \cs 0.3            & \cs 0.703          & \cs 0.643          & \cs 0.512          & \cs 0.435          & \cs 0.364          & \cs 0.396          & \cs 0.368          \\
  \Dikinm{}  & \cs 0.036          & \cs 0.063         & \cs 0.157          & \cs 0.251          & \cs 0.511          & \cs 0.545          & \cs 0.562          & \cs 0.168          & \cs 0.232          & \cs 0.425          & \cs 0.507         & \cs 0.576          & \cs 0.521          & \cs 0.513          & \cs 0.536          & \cs 0.528          & \cs 0.492          & \cs 0.396          & \cs 0.399          & \cs 0.411          & \cs 0.443          & \cs 0.423          & \cs 0.476          & \cs 0.385          & \cs 0.487          & \cs 0.522          & \cs 0.494          & \cs 0.502          \\
  \MAPLAm{}  & \cs 0.027          & \cs 0.05          & \cs 0.118          & \cs 0.195          & \cs 0.421          & \cs 0.518          & \cs 0.485          & \cs 0.312          & \cs 0.342          & \cs 0.43           & \cs 0.449         & \cs 0.499          & \cs 0.488          & \cs 0.489          & \cs 0.4            & \cs 0.466          & \cs 0.479          & \cs 0.392          & \cs 0.362          & \cs 0.394          & \cs 0.407          & \cs 0.384          & \cs 0.441          & \cs 0.406          & \cs 0.466          & \cs 0.436          & \cs 0.442          & \cs 0.444          \\
  \HRm{}     & \cs 0.13           & \cs 0.213         & \cs 0.331          & \cs 0.467          & \cs 0.616          & \cs \textbf{0.702} & \cs \textbf{0.726} & \cs 0.174          & \cs 0.209          & \cs 0.416          & \cs 0.591         & \cs 0.692          & \cs 0.689          & \cs 0.628          & \cs 0.321          & \cs 0.415          & \cs 0.382          & \cs 0.264          & \cs 0.397          & \cs 0.489          & \cs 0.546          & \cs 0.304          & \cs 0.321          & \cs 0.353          & \cs 0.518          & \cs 0.64           & \cs 0.613          & \cs 0.602          \\
  \LHRm{}    & \cs 0.112          & \cs 0.203         & \cs 0.335          & \cs 0.519          & \cs 0.61           & \cs 0.662          & \cs 0.594          & \cs 0.787          & \cs 0.636          & \cs 0.73           & \cs \textbf{0.68} & \cs 0.682          & \cs 0.658          & \cs 0.683          & \cs \textbf{0.642} & \cs \textbf{0.727} & \cs \textbf{0.728} & \cs \textbf{0.569} & \cs 0.562          & \cs 0.614          & \cs 0.596          & \cs 0.46           & \cs 0.417          & \cs 0.522          & \cs \textbf{0.634} & \cs 0.53           & \cs 0.567          & \cs 0.555          \\
  \smHRem{}  & \cs 0.264          & \cs 0.286         & \cs 0.364          & \cs 0.387          & \cs 0.348          & \cs 0.25           & \cs 0.233          & \cs 0.168          & \cs 0.217          & \cs 0.416          & \cs 0.591         & \cs 0.692          & \cs 0.656          & \cs 0.651          & \cs 0.107          & \cs 0.204          & \cs 0.277          & \cs 0.352          & \cs 0.445          & \cs 0.423          & \cs 0.517          & \cs 0.22           & \cs 0.26           & \cs 0.416          & \cs 0.445          & \cs 0.613          & \cs 0.682          & \cs 0.655          \\
  \smHRdm{}  & \cs 0.283          & \cs 0.325         & \cs 0.394          & \cs 0.448          & \cs 0.628          & \cs 0.634          & \cs 0.635          & \cs 0.191          & \cs 0.222          & \cs 0.427          & \cs 0.576         & \cs \textbf{0.733} & \cs 0.656          & \cs 0.645          & \cs 0.392          & \cs 0.423          & \cs 0.374          & \cs 0.323          & \cs 0.426          & \cs 0.554          & \cs 0.641          & \cs 0.529          & \cs 0.447          & \cs 0.511          & \cs 0.545          & \cs \textbf{0.688} & \cs \textbf{0.805} & \cs \textbf{0.806} \\
  \smLHRem{} & \cs 0.837          & \cs 0.747         & \cs 0.691          & \cs 0.42           & \cs 0.27           & \cs 0.2            & \cs 0.203          & \cs 0.775          & \cs \textbf{0.706} & \cs 0.722          & \cs 0.66          & \cs 0.682          & \cs 0.655          & \cs 0.719          & \cs 0.086          & \cs 0.207          & \cs 0.445          & \cs 0.4            & \cs 0.569          & \cs 0.541          & \cs 0.561          & \cs 0.286          & \cs 0.39           & \cs 0.491          & \cs 0.414          & \cs 0.486          & \cs 0.612          & \cs 0.573          \\
  \smLHRdm{} & \cs 0.854          & \cs \textbf{0.84} & \cs \textbf{0.771} & \cs \textbf{0.703} & \cs \textbf{0.639} & \cs 0.598          & \cs 0.617          & \cs 0.8            & \cs 0.656          & \cs \textbf{0.739} & \cs 0.628         & \cs 0.68           & \cs \textbf{0.711} & \cs \textbf{0.731} & \cs 0.552          & \cs 0.614          & \cs 0.543          & \cs 0.56           & \cs \textbf{0.718} & \cs \textbf{0.686} & \cs \textbf{0.675} & \cs \textbf{0.828} & \cs \textbf{0.759} & \cs \textbf{0.701} & \cs 0.618          & \cs 0.531          & \cs 0.535          & \cs 0.541          \\
            \hline
\end{tabular}};
    \end{tikzpicture}
    }
\end{table}

\end{landscape}
\newpage
\begin{landscape}

\begin{table}[]
    \centering
    \caption{
        Average relative performance based on \minESS{}/s of the tested samplers across dimensions as presented as solid lines in \cref{fig:esst1}.
    }
    \resizebox{\linewidth}{!}{
    \begin{tikzpicture}
        \node at (0, 0) {%
            \begin{tabular}{r|
                S[table-format=1.4]S[table-format=1.4]S[table-format=1.4]S[table-format=1.4]S[table-format=1.4]|
                S[table-format=1.4]S[table-format=1.4]S[table-format=1.4]S[table-format=1.4]S[table-format=1.4]|
                S[table-format=1.4]S[table-format=1.4]S[table-format=1.4]S[table-format=1.4]S[table-format=1.4]|
                S[table-format=1.4]S[table-format=1.4]S[table-format=1.4]S[table-format=1.4]S[table-format=1.4]|
                S[table-format=1.4]S[table-format=1.4]S[table-format=1.4]S[table-format=1.4]S[table-format=1.4]|
                S[table-format=1.4]S[table-format=1.4]S[table-format=1.4]S[table-format=1.4]S[table-format=1.4]|
                S[table-format=1.4]S[table-format=1.4]S[table-format=1.4]S[table-format=1.4]S[table-format=1.4]|
                S[table-format=1.4]S[table-format=1.4]S[table-format=1.4]S[table-format=1.4]S[table-format=1.4]
            }
            \hline
            \multirow{2}{*}{Sampler ${\vcenter{\hbox{\begin{tikzpicture}[x=1mm,y=3.5mm]
                \draw[-] (1,-1) -- (-1,1);
            \end{tikzpicture}}}}$ \!\!\!\!\!\! $\begin{array}{c}\text{Target} \\d\end{array}$ \!\!\!\!\!\! }
                & \multicolumn{5}{c|}{Gauss $(\mu = 0)$}
                & \multicolumn{5}{c|}{Disc $(\mu = 0)$}
                & \multicolumn{5}{c|}{Cigar $(\mu = 0)$}
                & \multicolumn{5}{c|}{Gauss $(\mu = 0.5)$}
                & \multicolumn{5}{c|}{Disc $(\mu = 0.5)$}
                & \multicolumn{5}{c|}{Cigar $(\mu = 0.5)$}
                & \multicolumn{5}{c|}{Bowtie}
                & \multicolumn{5}{c}{Funnel}
            \\
            & \multicolumn{1}{c}{\footnotesize $2$} & \multicolumn{1}{c}{\footnotesize $4$} & \multicolumn{1}{c}{\footnotesize $8$} & \multicolumn{1}{c}{\footnotesize $16$} & \multicolumn{1}{c|}{\footnotesize $32$}
            & \multicolumn{1}{c}{\footnotesize $2$} & \multicolumn{1}{c}{\footnotesize $4$} & \multicolumn{1}{c}{\footnotesize $8$} & \multicolumn{1}{c}{\footnotesize $16$} & \multicolumn{1}{c|}{\footnotesize $32$}
            & \multicolumn{1}{c}{\footnotesize $2$} & \multicolumn{1}{c}{\footnotesize $4$} & \multicolumn{1}{c}{\footnotesize $8$} & \multicolumn{1}{c}{\footnotesize $16$} & \multicolumn{1}{c|}{\footnotesize $32$}
            & \multicolumn{1}{c}{\footnotesize $2$} & \multicolumn{1}{c}{\footnotesize $4$} & \multicolumn{1}{c}{\footnotesize $8$} & \multicolumn{1}{c}{\footnotesize $16$} & \multicolumn{1}{c|}{\footnotesize $32$}
            & \multicolumn{1}{c}{\footnotesize $2$} & \multicolumn{1}{c}{\footnotesize $4$} & \multicolumn{1}{c}{\footnotesize $8$} & \multicolumn{1}{c}{\footnotesize $16$} & \multicolumn{1}{c|}{\footnotesize $32$}
            & \multicolumn{1}{c}{\footnotesize $2$} & \multicolumn{1}{c}{\footnotesize $4$} & \multicolumn{1}{c}{\footnotesize $8$} & \multicolumn{1}{c}{\footnotesize $16$} & \multicolumn{1}{c|}{\footnotesize $32$}
            & \multicolumn{1}{c}{\footnotesize $2$} & \multicolumn{1}{c}{\footnotesize $4$} & \multicolumn{1}{c}{\footnotesize $8$} & \multicolumn{1}{c}{\footnotesize $16$} & \multicolumn{1}{c|}{\footnotesize $32$}
            & \multicolumn{1}{c}{\footnotesize $2$} & \multicolumn{1}{c}{\footnotesize $4$} & \multicolumn{1}{c}{\footnotesize $8$} & \multicolumn{1}{c}{\footnotesize $16$} & \multicolumn{1}{c }{\footnotesize $32$}
            \\
            \hline
 \RWMHm{}   &\cs 0.208         & \cs 0.162          & \cs 0.181          & \cs 0.226          & \cs 0.256          & \cs 0.485         & \cs 0.468          & \cs 0.478          & \cs 0.47           & \cs 0.477          & \cs 0.513          & \cs 0.464          & \cs 0.499          & \cs 0.616          & \cs 0.616         & \cs 0.183          & \cs 0.143          & \cs 0.198          & \cs 0.251          & \cs 0.31           & \cs 0.303          & \cs 0.325          & \cs 0.493          & \cs 0.649          & \cs 0.725          & \cs 0.389          & \cs 0.395         & \cs 0.512          & \cs 0.685          & \cs 0.76           & \cs 0.528          & \cs 0.613          & \cs 0.522          & \cs 0.566          & \cs 0.544          & \cs 0.626          & \cs 0.511         & \cs 0.633          & \cs 0.692          & \cs 0.654          \\
 \MALAm{}   &\cs 0.266         & \cs 0.191          & \cs 0.222          & \cs 0.227          & \cs 0.265          & \cs 0.642         & \cs 0.535          & \cs 0.55           & \cs 0.515          & \cs 0.521          & \cs \textbf{0.727} & \cs 0.546          & \cs 0.579          & \cs \textbf{0.634} & \cs \textbf{0.63} & \cs 0.192          & \cs 0.167          & \cs 0.239          & \cs 0.319          & \cs 0.379          & \cs 0.373          & \cs 0.417          & \cs 0.512          & \cs \textbf{0.665} & \cs \textbf{0.755} & \cs \textbf{0.749} & \cs \textbf{0.68} & \cs \textbf{0.784} & \cs \textbf{0.851} & \cs \textbf{0.921} & \cs 0.507          & \cs 0.565          & \cs \textbf{0.818} & \cs \textbf{0.898} & \cs \textbf{0.916} & \cs 0.452          & \cs 0.521         & \cs \textbf{0.724} & \cs \textbf{0.803} & \cs \textbf{0.881} \\
 \smMALAem{}&\cs \textbf{0.74} & \cs \textbf{0.786} & \cs \textbf{0.715} & \cs \textbf{0.594} & \cs \textbf{0.632} & \cs 0.087         & \cs 0.081          & \cs 0.062          & \cs 0.11           & \cs 0.092          & \cs 0.541          & \cs 0.389          & \cs 0.394          & \cs 0.384          & \cs 0.434         & \cs \textbf{0.793} & \cs \textbf{0.811} & \cs \textbf{0.807} & \cs \textbf{0.753} & \cs \textbf{0.792} & \cs 0.581          & \cs 0.503          & \cs 0.472          & \cs 0.398          & \cs 0.482          & \cs 0.565          & \cs 0.484         & \cs 0.523          & \cs 0.511          & \cs 0.662          & \cs 0.376          & \cs 0.19           & \cs 0.203          & \cs 0.203          & \cs 0.168          & \cs 0.375          & \cs 0.185         & \cs 0.243          & \cs 0.281          & \cs 0.254          \\
 \smMALAdm{}&\cs 0.652         & \cs 0.501          & \cs 0.516          & \cs 0.424          & \cs 0.239          & \cs 0.515         & \cs 0.379          & \cs 0.377          & \cs 0.334          & \cs 0.339          & \cs 0.545          & \cs 0.398          & \cs 0.416          & \cs 0.406          & \cs 0.457         & \cs 0.622          & \cs 0.599          & \cs 0.614          & \cs 0.628          & \cs 0.618          & \cs \textbf{0.677} & \cs \textbf{0.636} & \cs \textbf{0.583} & \cs 0.629          & \cs 0.678          & \cs 0.578          & \cs 0.534         & \cs 0.529          & \cs 0.559          & \cs 0.672          & \cs \textbf{0.592} & \cs 0.378          & \cs 0.364          & \cs 0.222          & \cs 0.262          & \cs \textbf{0.652} & \cs 0.363         & \cs 0.344          & \cs 0.426          & \cs 0.396          \\
 \Dikinm{}  &\cs 0.272         & \cs 0.21           & \cs 0.209          & \cs 0.184          & \cs 0.109          & \cs \textbf{0.65} & \cs \textbf{0.767} & \cs \textbf{0.715} & \cs \textbf{0.712} & \cs \textbf{0.565} & \cs 0.584          & \cs \textbf{0.717} & \cs \textbf{0.649} & \cs 0.567          & \cs 0.462         & \cs 0.164          & \cs 0.155          & \cs 0.152          & \cs 0.1            & \cs 0.044          & \cs 0.244          & \cs 0.264          & \cs 0.311          & \cs 0.259          & \cs 0.159          & \cs 0.37           & \cs 0.459         & \cs 0.449          & \cs 0.348          & \cs 0.194          & \cs 0.568          & \cs \textbf{0.682} & \cs 0.533          & \cs 0.276          & \cs 0.154          & \cs 0.47           & \cs \textbf{0.65} & \cs 0.542          & \cs 0.382          & \cs 0.17           \\
 \MAPLAm{}  &\cs 0.178         & \cs 0.174          & \cs 0.165          & \cs 0.126          & \cs 0.078          & \cs 0.351         & \cs 0.423          & \cs 0.422          & \cs 0.432          & \cs 0.375          & \cs 0.394          & \cs 0.476          & \cs 0.466          & \cs 0.366          & \cs 0.331         & \cs 0.211          & \cs 0.162          & \cs 0.142          & \cs 0.106          & \cs 0.042          & \cs 0.189          & \cs 0.197          & \cs 0.238          & \cs 0.198          & \cs 0.124          & \cs 0.358          & \cs 0.42          & \cs 0.385          & \cs 0.292          & \cs 0.159          & \cs 0.313          & \cs 0.517          & \cs 0.463          & \cs 0.241          & \cs 0.131          & \cs 0.287          & \cs 0.502         & \cs 0.47           & \cs 0.353          & \cs 0.159          \\
 \HRm{}     &\cs 0.084         & \cs 0.106          & \cs 0.104          & \cs 0.096          & \cs 0.074          & \cs 0.181         & \cs 0.266          & \cs 0.241          & \cs 0.259          & \cs 0.196          & \cs 0.201          & \cs 0.318          & \cs 0.262          & \cs 0.251          & \cs 0.165         & \cs 0.054          & \cs 0.097          & \cs 0.108          & \cs 0.098          & \cs 0.075          & \cs 0.092          & \cs 0.168          & \cs 0.25           & \cs 0.239          & \cs 0.179          & \cs 0.14           & \cs 0.244         & \cs 0.269          & \cs 0.233          & \cs 0.176          & \cs 0.146          & \cs 0.224          & \cs 0.205          & \cs 0.175          & \cs 0.132          & \cs 0.178          & \cs 0.258         & \cs 0.26           & \cs 0.222          & \cs 0.145          \\
 \LHRm{}    &\cs 0.11          & \cs 0.137          & \cs 0.153          & \cs 0.136          & \cs 0.096          & \cs 0.204         & \cs 0.249          & \cs 0.199          & \cs 0.228          & \cs 0.151          & \cs 0.198          & \cs 0.285          & \cs 0.271          & \cs 0.239          & \cs 0.131         & \cs 0.087          & \cs 0.145          & \cs 0.144          & \cs 0.116          & \cs 0.086          & \cs 0.102          & \cs 0.133          & \cs 0.208          & \cs 0.206          & \cs 0.137          & \cs 0.228          & \cs 0.256         & \cs 0.359          & \cs 0.266          & \cs 0.168          & \cs 0.184          & \cs 0.238          & \cs 0.339          & \cs 0.251          & \cs 0.194          & \cs 0.138          & \cs 0.241         & \cs 0.284          & \cs 0.221          & \cs 0.164          \\
 \smHRem{}  &\cs 0.23          & \cs 0.289          & \cs 0.346          & \cs 0.344          & \cs 0.369          & \cs 0.081         & \cs 0.057          & \cs 0.062          & \cs 0.07           & \cs 0.049          & \cs 0.189          & \cs 0.316          & \cs 0.265          & \cs 0.255          & \cs 0.163         & \cs 0.177          & \cs 0.249          & \cs 0.269          & \cs 0.202          & \cs 0.14           & \cs 0.104          & \cs 0.124          & \cs 0.162          & \cs 0.136          & \cs 0.103          & \cs 0.133          & \cs 0.241         & \cs 0.263          & \cs 0.24           & \cs 0.177          & \cs 0.223          & \cs 0.166          & \cs 0.134          & \cs 0.111          & \cs 0.059          & \cs 0.21           & \cs 0.269         & \cs 0.205          & \cs 0.169          & \cs 0.114          \\
 \smHRdm{}  &\cs 0.244         & \cs 0.32           & \cs 0.342          & \cs 0.331          & \cs 0.35           & \cs 0.166         & \cs 0.265          & \cs 0.239          & \cs 0.245          & \cs 0.195          & \cs 0.197          & \cs 0.306          & \cs 0.271          & \cs 0.254          & \cs 0.172         & \cs 0.149          & \cs 0.265          & \cs 0.327          & \cs 0.218          & \cs 0.138          & \cs 0.115          & \cs 0.189          & \cs 0.274          & \cs 0.227          & \cs 0.175          & \cs 0.13           & \cs 0.228         & \cs 0.291          & \cs 0.25           & \cs 0.178          & \cs 0.233          & \cs 0.214          & \cs 0.18           & \cs 0.13           & \cs 0.089          & \cs 0.403          & \cs 0.357         & \cs 0.252          & \cs 0.21           & \cs 0.129          \\
 \smLHRem{} &\cs 0.261         & \cs 0.286          & \cs 0.432          & \cs 0.391          & \cs 0.449          & \cs 0.029         & \cs 0.035          & \cs 0.045          & \cs 0.055          & \cs 0.035          & \cs 0.182          & \cs 0.29           & \cs 0.273          & \cs 0.224          & \cs 0.135         & \cs 0.352          & \cs 0.341          & \cs 0.442          & \cs 0.304          & \cs 0.197          & \cs 0.188          & \cs 0.199          & \cs 0.193          & \cs 0.174          & \cs 0.104          & \cs 0.234          & \cs 0.282         & \cs 0.353          & \cs 0.248          & \cs 0.18           & \cs 0.298          & \cs 0.18           & \cs 0.165          & \cs 0.143          & \cs 0.075          & \cs 0.204          & \cs 0.208         & \cs 0.216          & \cs 0.162          & \cs 0.1            \\
 \smLHRdm{} &\cs 0.305         & \cs 0.277          & \cs 0.3            & \cs 0.195          & \cs 0.155          & \cs 0.19          & \cs 0.232          & \cs 0.193          & \cs 0.221          & \cs 0.154          & \cs 0.192          & \cs 0.253          & \cs 0.269          & \cs 0.236          & \cs 0.138         & \cs 0.334          & \cs 0.423          & \cs 0.334          & \cs 0.224          & \cs 0.123          & \cs 0.214          & \cs 0.289          & \cs 0.366          & \cs 0.281          & \cs 0.183          & \cs 0.225          & \cs 0.301         & \cs 0.354          & \cs 0.262          & \cs 0.184          & \cs 0.374          & \cs 0.28           & \cs 0.246          & \cs 0.173          & \cs 0.118          & \cs 0.288          & \cs 0.232         & \cs 0.286          & \cs 0.247          & \cs 0.153          \\
\hline
\end{tabular}};
    \end{tikzpicture}
    }
\end{table}

\begin{table}[]
    \centering
    \caption{
        Average relative performance based on \minESS{}/s of the tested samplers across polytope angles as presented as solid lines in \cref{fig:esst1}.
    }
    \resizebox{\linewidth}{!}{
    \begin{tikzpicture}
        \node at (0, 0) {%
            \begin{tabular}{r|
                S[table-format=1.4]S[table-format=1.4]S[table-format=1.4]S[table-format=1.4]|
                S[table-format=1.4]S[table-format=1.4]S[table-format=1.4]S[table-format=1.4]|
                S[table-format=1.4]S[table-format=1.4]S[table-format=1.4]S[table-format=1.4]|
                S[table-format=1.4]S[table-format=1.4]S[table-format=1.4]S[table-format=1.4]|
                S[table-format=1.4]S[table-format=1.4]S[table-format=1.4]S[table-format=1.4]|
                S[table-format=1.4]S[table-format=1.4]S[table-format=1.4]S[table-format=1.4]|
                S[table-format=1.4]S[table-format=1.4]S[table-format=1.4]S[table-format=1.4]|
                S[table-format=1.4]S[table-format=1.4]S[table-format=1.4]S[table-format=1.4]
            }
            \hline
            \multirow{2}{*}{Sampler ${\vcenter{\hbox{\begin{tikzpicture}[x=1mm,y=3.5mm]
                \draw[-] (1,-1) -- (-1,1);
            \end{tikzpicture}}}}$ \!\!\!\!\!\! $\begin{array}{c}\text{Target} \\\theta\end{array}$ \!\!\!\!\!\! }
                & \multicolumn{4}{c|}{Gauss $(\mu = 0)$}
                & \multicolumn{4}{c|}{Disc $(\mu = 0)$}
                & \multicolumn{4}{c|}{Cigar $(\mu = 0)$}
                & \multicolumn{4}{c|}{Gauss $(\mu = 0.5)$}
                & \multicolumn{4}{c|}{Disc $(\mu = 0.5)$}
                & \multicolumn{4}{c|}{Cigar $(\mu = 0.5)$}
                & \multicolumn{4}{c|}{Bowtie}
                & \multicolumn{4}{c}{Funnel}
            \\
            & \multicolumn{1}{c}{\footnotesize $9^{\circ}$} & \multicolumn{1}{c}{\footnotesize $19^{\circ}$} & \multicolumn{1}{c}{\footnotesize $45^{\circ}$} & \multicolumn{1}{c|}{\footnotesize $90^{\circ}$}
            & \multicolumn{1}{c}{\footnotesize $9^{\circ}$} & \multicolumn{1}{c}{\footnotesize $19^{\circ}$} & \multicolumn{1}{c}{\footnotesize $45^{\circ}$} & \multicolumn{1}{c|}{\footnotesize $90^{\circ}$}
            & \multicolumn{1}{c}{\footnotesize $9^{\circ}$} & \multicolumn{1}{c}{\footnotesize $19^{\circ}$} & \multicolumn{1}{c}{\footnotesize $45^{\circ}$} & \multicolumn{1}{c|}{\footnotesize $90^{\circ}$}
            & \multicolumn{1}{c}{\footnotesize $9^{\circ}$} & \multicolumn{1}{c}{\footnotesize $19^{\circ}$} & \multicolumn{1}{c}{\footnotesize $45^{\circ}$} & \multicolumn{1}{c|}{\footnotesize $90^{\circ}$}
            & \multicolumn{1}{c}{\footnotesize $9^{\circ}$} & \multicolumn{1}{c}{\footnotesize $19^{\circ}$} & \multicolumn{1}{c}{\footnotesize $45^{\circ}$} & \multicolumn{1}{c|}{\footnotesize $90^{\circ}$}
            & \multicolumn{1}{c}{\footnotesize $9^{\circ}$} & \multicolumn{1}{c}{\footnotesize $19^{\circ}$} & \multicolumn{1}{c}{\footnotesize $45^{\circ}$} & \multicolumn{1}{c|}{\footnotesize $90^{\circ}$}
            & \multicolumn{1}{c}{\footnotesize $9^{\circ}$} & \multicolumn{1}{c}{\footnotesize $19^{\circ}$} & \multicolumn{1}{c}{\footnotesize $45^{\circ}$} & \multicolumn{1}{c|}{\footnotesize $90^{\circ}$}
            & \multicolumn{1}{c}{\footnotesize $9^{\circ}$} & \multicolumn{1}{c}{\footnotesize $19^{\circ}$} & \multicolumn{1}{c}{\footnotesize $45^{\circ}$} & \multicolumn{1}{c }{\footnotesize $90^{\circ}$}
            \\
            \hline
  \RWMHm{}   & \cs 0.01           & \cs 0.057          & \cs 0.102          & \cs 0.213          & \cs 0.031          & \cs 0.106          & \cs 0.413          & \cs 0.535          & \cs 0.033          & \cs 0.248          & \cs 0.435          & \cs 0.496         & \cs 0.013          & \cs 0.078          & \cs 0.119         & \cs 0.192          & \cs 0.198          & \cs 0.343          & \cs 0.287          & \cs 0.247          & \cs 0.124          & \cs 0.377          & \cs 0.312          & \cs 0.318          & \cs 0.185          & \cs 0.334          & \cs 0.272          & \cs 0.398          & \cs 0.149          & \cs 0.337          & \cs 0.366          & \cs 0.503          \\
  \MALAm{}   & \cs 0.012          & \cs 0.068          & \cs 0.12           & \cs 0.267          & \cs 0.035          & \cs 0.156          & \cs 0.547          & \cs 0.647          & \cs 0.036          & \cs 0.334          & \cs 0.496          & \cs 0.668         & \cs 0.017          & \cs 0.087          & \cs 0.152         & \cs 0.266          & \cs 0.247          & \cs 0.398          & \cs 0.335          & \cs 0.265          & \cs 0.268          & \cs 0.589          & \cs 0.587          & \cs 0.619          & \cs 0.29           & \cs 0.522          & \cs 0.452          & \cs 0.642          & \cs 0.198          & \cs 0.429          & \cs 0.461          & \cs 0.516          \\
  \smMALAem{}& \cs 0.41           & \cs 0.628          & \cs 0.482          & \cs 0.473          & \cs 0.002          & \cs 0.007          & \cs 0.024          & \cs 0.262          & \cs 0.043          & \cs 0.327          & \cs 0.495          & \cs 0.644         & \cs 0.596          & \cs 0.668          & \cs 0.664         & \cs 0.597          & \cs 0.22           & \cs 0.317          & \cs 0.462          & \cs 0.632          & \cs 0.266          & \cs 0.575          & \cs 0.58           & \cs 0.612          & \cs 0.124          & \cs 0.231          & \cs 0.207          & \cs 0.268          & \cs 0.147          & \cs 0.312          & \cs 0.351          & \cs 0.342          \\
  \smMALAdm{}& \cs 0.099          & \cs 0.317          & \cs 0.457          & \cs 0.685          & \cs 0.044          & \cs 0.166          & \cs 0.532          & \cs 0.657          & \cs 0.041          & \cs 0.328          & \cs 0.521          & \cs 0.663         & \cs 0.36           & \cs 0.479          & \cs 0.601         & \cs 0.75           & \cs 0.369          & \cs 0.59           & \cs 0.617          & \cs 0.612          & \cs 0.283          & \cs 0.578          & \cs 0.59           & \cs 0.631          & \cs 0.26           & \cs 0.437          & \cs 0.421          & \cs 0.561          & \cs 0.242          & \cs 0.52           & \cs 0.539          & \cs 0.654          \\
  \Dikinm{}  & \cs 0.309          & \cs 0.165          & \cs 0.103          & \cs 0.123          & \cs \textbf{0.978} & \cs \textbf{0.954} & \cs 0.573          & \cs 0.377          & \cs \textbf{0.946} & \cs \textbf{0.866} & \cs 0.446          & \cs 0.257         & \cs 0.168          & \cs 0.119          & \cs 0.089         & \cs 0.119          & \cs 0.496          & \cs 0.378          & \cs 0.169          & \cs 0.171          & \cs \textbf{0.715} & \cs 0.524          & \cs 0.213          & \cs 0.23           & \cs \textbf{0.652} & \cs 0.531          & \cs 0.336          & \cs 0.312          & \cs \textbf{0.697} & \cs 0.53           & \cs 0.336          & \cs 0.316          \\
  \MAPLAm{}  & \cs 0.232          & \cs 0.124          & \cs 0.081          & \cs 0.126          & \cs 0.647          & \cs 0.581          & \cs 0.337          & \cs 0.34           & \cs 0.728          & \cs 0.594          & \cs 0.298          & \cs 0.252         & \cs 0.287          & \cs 0.165          & \cs 0.091         & \cs 0.126          & \cs 0.431          & \cs 0.299          & \cs 0.129          & \cs 0.178          & \cs 0.631          & \cs 0.472          & \cs 0.243          & \cs 0.373          & \cs 0.477          & \cs 0.425          & \cs 0.381          & \cs 0.374          & \cs 0.633          & \cs 0.51           & \cs 0.309          & \cs 0.273          \\
  \HRm{}     & \cs 0.048          & \cs 0.113          & \cs 0.243          & \cs 0.299          & \cs 0.15           & \cs 0.34           & \cs \textbf{0.871} & \cs 0.786          & \cs 0.175          & \cs 0.495          & \cs 0.831          & \cs 0.809         & \cs 0.03           & \cs 0.116          & \cs 0.248         & \cs 0.279          & \cs 0.346          & \cs 0.542          & \cs 0.532          & \cs 0.401          & \cs 0.29           & \cs 0.61           & \cs 0.56           & \cs 0.483          & \cs 0.235          & \cs 0.432          & \cs 0.427          & \cs 0.513          & \cs 0.233          & \cs 0.458          & \cs 0.582          & \cs 0.641          \\
  \LHRm{}    & \cs 0.035          & \cs 0.161          & \cs 0.433          & \cs 0.41           & \cs 0.128          & \cs 0.363          & \cs 0.855          & \cs 0.798          & \cs 0.141          & \cs 0.503          & \cs 0.837          & \cs \textbf{0.86} & \cs 0.043          & \cs 0.171          & \cs 0.358         & \cs 0.377          & \cs 0.346          & \cs 0.525          & \cs 0.511          & \cs 0.353          & \cs 0.406          & \cs 0.753          & \cs \textbf{0.842} & \cs 0.773          & \cs 0.421          & \cs 0.654          & \cs \textbf{0.733} & \cs \textbf{0.728} & \cs 0.292          & \cs 0.493          & \cs 0.666          & \cs 0.655          \\
  \smHRem{}  & \cs 0.737          & \cs 0.625          & \cs 0.39           & \cs 0.359          & \cs 0.013          & \cs 0.029          & \cs 0.075          & \cs 0.608          & \cs 0.173          & \cs 0.502          & \cs 0.829          & \cs 0.8           & \cs 0.478          & \cs 0.446          & \cs 0.337         & \cs 0.268          & \cs 0.161          & \cs 0.23           & \cs 0.333          & \cs 0.494          & \cs 0.295          & \cs 0.583          & \cs 0.57           & \cs 0.49           & \cs 0.213          & \cs 0.362          & \cs 0.386          & \cs 0.367          & \cs 0.274          & \cs 0.498          & \cs 0.565          & \cs 0.541          \\
  \smHRdm{}  & \cs 0.804          & \cs 0.572          & \cs 0.311          & \cs 0.35           & \cs 0.129          & \cs 0.338          & \cs 0.758          & \cs \textbf{0.877} & \cs 0.192          & \cs 0.512          & \cs 0.858          & \cs 0.763         & \cs 0.459          & \cs 0.407          & \cs 0.321         & \cs 0.317          & \cs 0.346          & \cs 0.514          & \cs 0.528          & \cs 0.524          & \cs 0.297          & \cs 0.6            & \cs 0.566          & \cs 0.508          & \cs 0.333          & \cs 0.506          & \cs 0.435          & \cs 0.517          & \cs 0.432          & \cs \textbf{0.619} & \cs 0.73           & \cs 0.694          \\
  \smLHRem{} & \cs \textbf{0.872} & \cs \textbf{0.862} & \cs 0.548          & \cs 0.575          & \cs 0.005          & \cs 0.013          & \cs 0.05           & \cs 0.441          & \cs 0.138          & \cs 0.479          & \cs \textbf{0.864} & \cs 0.818         & \cs \textbf{0.933} & \cs \textbf{0.905} & \cs 0.732         & \cs 0.659          & \cs 0.283          & \cs 0.398          & \cs 0.531          & \cs 0.712          & \cs 0.386          & \cs \textbf{0.777} & \cs 0.838          & \cs \textbf{0.809} & \cs 0.307          & \cs 0.469          & \cs 0.446          & \cs 0.382          & \cs 0.299          & \cs 0.481          & \cs 0.579          & \cs 0.5            \\
  \smLHRdm{} & \cs 0.204          & \cs 0.493          & \cs \textbf{0.874} & \cs \textbf{0.717} & \cs 0.129          & \cs 0.342          & \cs 0.85           & \cs 0.792          & \cs 0.16           & \cs 0.498          & \cs 0.836          & \cs 0.827         & \cs 0.542          & \cs 0.695          & \cs \textbf{0.77} & \cs \textbf{0.786} & \cs \textbf{0.528} & \cs \textbf{0.748} & \cs \textbf{0.829} & \cs \textbf{0.766} & \cs 0.423          & \cs 0.756          & \cs 0.839          & \cs 0.808          & \cs 0.457          & \cs \textbf{0.671} & \cs 0.656          & \cs 0.699          & \cs 0.4            & \cs 0.606          & \cs \textbf{0.821} & \cs \textbf{0.752} \\
 \hline
\end{tabular}};
    \end{tikzpicture}
    }
\end{table}

\begin{table}[]
    \centering
    \caption{
        Average relative performance based on \minESS{}/s of the tested samplers across densities' scale parameter as presented as solid lines in \cref{fig:esst1}.
    }
    \resizebox{.9\linewidth}{!}{
    \begin{tikzpicture}
        \node at (0, 0) {%
            \begin{tabular}{r|
                S[table-format=1.4]S[table-format=1.4]S[table-format=1.4]S[table-format=1.4]S[table-format=1.4]S[table-format=1.4]S[table-format=1.4]|
                S[table-format=1.4]S[table-format=1.4]S[table-format=1.4]S[table-format=1.4]S[table-format=1.4]S[table-format=1.4]S[table-format=1.4]|
                S[table-format=1.4]S[table-format=1.4]S[table-format=1.4]S[table-format=1.4]S[table-format=1.4]S[table-format=1.4]S[table-format=1.4]|
                S[table-format=1.4]S[table-format=1.4]S[table-format=1.4]S[table-format=1.4]S[table-format=1.4]S[table-format=1.4]S[table-format=1.4]
            }
            \hline
            \multirow{2}{*}{Sampler ${\vcenter{\hbox{\begin{tikzpicture}[x=1mm,y=3.5mm]
                \draw[-] (1,-1) -- (-1,1);
            \end{tikzpicture}}}}$ \!\!\!\!\!\! $\begin{array}{c}\text{Target} \\\sigma\end{array}$ \!\!\!\!\!\! }
                & \multicolumn{7}{c|}{Gauss $(\mu = 0)$}
                & \multicolumn{7}{c|}{Disc $(\mu = 0)$}
                & \multicolumn{7}{c|}{Cigar $(\mu = 0)$}
                & \multicolumn{7}{c }{Gauss $(\mu = 0.5)$}
            \\
            & \multicolumn{1}{c}{\footnotesize $10^{-2}$} & \multicolumn{1}{c}{\footnotesize $10^{-1.5}$} & \multicolumn{1}{c}{\footnotesize $10^{-1}$} & \multicolumn{1}{c}{\footnotesize $10^{-0.5}$} & \multicolumn{1}{c}{\footnotesize $10^{0}$} & \multicolumn{1}{c}{\footnotesize $10^{0.5}$} & \multicolumn{1}{c|}{\footnotesize $10^{1}$}
            & \multicolumn{1}{c}{\footnotesize $10^{-2}$} & \multicolumn{1}{c}{\footnotesize $10^{-1.5}$} & \multicolumn{1}{c}{\footnotesize $10^{-1}$} & \multicolumn{1}{c}{\footnotesize $10^{-0.5}$} & \multicolumn{1}{c}{\footnotesize $10^{0}$} & \multicolumn{1}{c}{\footnotesize $10^{0.5}$} & \multicolumn{1}{c|}{\footnotesize $10^{1}$}
            & \multicolumn{1}{c}{\footnotesize $10^{-2}$} & \multicolumn{1}{c}{\footnotesize $10^{-1.5}$} & \multicolumn{1}{c}{\footnotesize $10^{-1}$} & \multicolumn{1}{c}{\footnotesize $10^{-0.5}$} & \multicolumn{1}{c}{\footnotesize $10^{0}$} & \multicolumn{1}{c}{\footnotesize $10^{0.5}$} & \multicolumn{1}{c|}{\footnotesize $10^{1}$}
            & \multicolumn{1}{c}{\footnotesize $10^{-2}$} & \multicolumn{1}{c}{\footnotesize $10^{-1.5}$} & \multicolumn{1}{c}{\footnotesize $10^{-1}$} & \multicolumn{1}{c}{\footnotesize $10^{-0.5}$} & \multicolumn{1}{c}{\footnotesize $10^{0}$} & \multicolumn{1}{c}{\footnotesize $10^{0.5}$} & \multicolumn{1}{c }{\footnotesize $10^{1}$}
            \\
            \hline
  \RWMHm{}   &  \cs 0.034          & \cs 0.035          & \cs 0.033          & \cs 0.097          & \cs 0.212          & \cs 0.49           & \cs \textbf{0.547} & \cs 0.452          & \cs 0.473          & \cs 0.49           & \cs 0.464         & \cs 0.471          & \cs 0.425          & \cs 0.555          & \cs 0.453          & \cs 0.437          & \cs 0.447          & \cs 0.486          & \cs 0.655          & \cs \textbf{0.667} & \cs \textbf{0.646} & \cs 0.012          & \cs 0.013          & \cs 0.032          & \cs 0.117          & \cs 0.291         & \cs 0.498         & \cs 0.556           \\
  \MALAm{}   &  \cs 0.071          & \cs 0.055          & \cs 0.05           & \cs 0.146          & \cs 0.286          & \cs 0.494          & \cs 0.538          & \cs 0.546          & \cs 0.625          & \cs 0.555          & \cs 0.532         & \cs 0.559          & \cs 0.52           & \cs 0.532          & \cs 0.588          & \cs 0.618          & \cs 0.589          & \cs \textbf{0.605} & \cs \textbf{0.672} & \cs 0.654          & \cs 0.636          & \cs 0.037          & \cs 0.032          & \cs 0.082          & \cs 0.199          & \cs 0.373         & \cs \textbf{0.55} & \cs 0.539           \\
  \smMALAem{}&  \cs \textbf{0.848} & \cs \textbf{0.817} & \cs \textbf{0.733} & \cs \textbf{0.701} & \cs \textbf{0.676} & \cs \textbf{0.555} & \cs 0.524          & \cs 0.126          & \cs 0.077          & \cs 0.057          & \cs 0.048         & \cs 0.06           & \cs 0.106          & \cs 0.129          & \cs 0.449          & \cs 0.399          & \cs 0.417          & \cs 0.393          & \cs 0.472          & \cs 0.43           & \cs 0.439          & \cs \textbf{0.969} & \cs \textbf{0.956} & \cs \textbf{0.941} & \cs \textbf{0.812} & \cs \textbf{0.73} & \cs 0.549         & \cs \textbf{0.581}  \\
  \smMALAdm{}&  \cs 0.497          & \cs 0.584          & \cs 0.55           & \cs 0.568          & \cs 0.354          & \cs 0.366          & \cs 0.345          & \cs 0.396          & \cs 0.383          & \cs 0.409          & \cs 0.391         & \cs 0.403          & \cs 0.374          & \cs 0.366          & \cs 0.431          & \cs 0.414          & \cs 0.443          & \cs 0.433          & \cs 0.485          & \cs 0.459          & \cs 0.445          & \cs 0.905          & \cs 0.837          & \cs 0.779          & \cs 0.648          & \cs 0.38          & \cs 0.379         & \cs 0.384           \\
  \Dikinm{}  &  \cs 0.141          & \cs 0.188          & \cs 0.167          & \cs 0.173          & \cs 0.201          & \cs 0.251          & \cs 0.259          & \cs \textbf{0.674} & \cs \textbf{0.678} & \cs \textbf{0.719} & \cs \textbf{0.71} & \cs \textbf{0.721} & \cs \textbf{0.667} & \cs \textbf{0.605} & \cs \textbf{0.627} & \cs \textbf{0.641} & \cs \textbf{0.624} & \cs 0.588          & \cs 0.569          & \cs 0.563          & \cs 0.557          & \cs 0.056          & \cs 0.058          & \cs 0.058          & \cs 0.073          & \cs 0.14          & \cs 0.227         & \cs 0.252           \\
  \MAPLAm{}  &  \cs 0.082          & \cs 0.104          & \cs 0.112          & \cs 0.114          & \cs 0.158          & \cs 0.214          & \cs 0.225          & \cs 0.355          & \cs 0.345          & \cs 0.369          & \cs 0.363         & \cs 0.402          & \cs 0.472          & \cs 0.498          & \cs 0.332          & \cs 0.353          & \cs 0.339          & \cs 0.403          & \cs 0.462          & \cs 0.491          & \cs 0.466          & \cs 0.134          & \cs 0.127          & \cs 0.111          & \cs 0.061          & \cs 0.108         & \cs 0.182         & \cs 0.206           \\
  \HRm{}     &  \cs 0.012          & \cs 0.017          & \cs 0.018          & \cs 0.036          & \cs 0.101          & \cs 0.218          & \cs 0.249          & \cs 0.183          & \cs 0.183          & \cs 0.216          & \cs 0.219         & \cs 0.229          & \cs 0.287          & \cs 0.283          & \cs 0.163          & \cs 0.176          & \cs 0.194          & \cs 0.223          & \cs 0.316          & \cs 0.313          & \cs 0.291          & \cs 0.002          & \cs 0.003          & \cs 0.007          & \cs 0.036          & \cs 0.102         & \cs 0.197         & \cs 0.259           \\
  \LHRm{}    &  \cs 0.03           & \cs 0.029          & \cs 0.034          & \cs 0.069          & \cs 0.157          & \cs 0.272          & \cs 0.295          & \cs 0.18           & \cs 0.19           & \cs 0.214          & \cs 0.208         & \cs 0.185          & \cs 0.246          & \cs 0.223          & \cs 0.173          & \cs 0.173          & \cs 0.216          & \cs 0.212          & \cs 0.255          & \cs 0.246          & \cs 0.301          & \cs 0.006          & \cs 0.006          & \cs 0.022          & \cs 0.067          & \cs 0.172         & \cs 0.273         & \cs 0.263           \\
  \smHRem{}  &  \cs 0.18           & \cs 0.238          & \cs 0.26           & \cs 0.26           & \cs 0.405          & \cs 0.426          & \cs 0.439          & \cs 0.066          & \cs 0.056          & \cs 0.053          & \cs 0.041         & \cs 0.058          & \cs 0.081          & \cs 0.089          & \cs 0.16           & \cs 0.166          & \cs 0.196          & \cs 0.215          & \cs 0.312          & \cs 0.301          & \cs 0.315          & \cs 0.065          & \cs 0.067          & \cs 0.097          & \cs 0.157          & \cs 0.308         & \cs 0.371         & \cs 0.388           \\
  \smHRdm{}  &  \cs 0.153          & \cs 0.183          & \cs 0.218          & \cs 0.233          & \cs 0.398          & \cs 0.511          & \cs 0.524          & \cs 0.175          & \cs 0.183          & \cs 0.207          & \cs 0.192         & \cs 0.245          & \cs 0.267          & \cs 0.284          & \cs 0.167          & \cs 0.158          & \cs 0.2            & \cs 0.222          & \cs 0.324          & \cs 0.306          & \cs 0.303          & \cs 0.045          & \cs 0.047          & \cs 0.082          & \cs 0.16           & \cs 0.294         & \cs 0.421         & \cs 0.485           \\
  \smLHRem{} &  \cs 0.299          & \cs 0.358          & \cs 0.404          & \cs 0.324          & \cs 0.391          & \cs 0.395          & \cs 0.376          & \cs 0.039          & \cs 0.036          & \cs 0.032          & \cs 0.026         & \cs 0.03           & \cs 0.055          & \cs 0.061          & \cs 0.177          & \cs 0.181          & \cs 0.22           & \cs 0.181          & \cs 0.251          & \cs 0.247          & \cs 0.29           & \cs 0.263          & \cs 0.263          & \cs 0.29           & \cs 0.315          & \cs 0.385         & \cs 0.414         & \cs 0.359           \\
  \smLHRdm{} &  \cs 0.205          & \cs 0.19           & \cs 0.283          & \cs 0.21           & \cs 0.244          & \cs 0.305          & \cs 0.287          & \cs 0.168          & \cs 0.172          & \cs 0.204          & \cs 0.182         & \cs 0.202          & \cs 0.236          & \cs 0.223          & \cs 0.189          & \cs 0.151          & \cs 0.204          & \cs 0.206          & \cs 0.243          & \cs 0.248          & \cs 0.28           & \cs 0.216          & \cs 0.237          & \cs 0.248          & \cs 0.291          & \cs 0.323         & \cs 0.363         & \cs 0.333           \\
            \hline
\end{tabular}};
    \end{tikzpicture}
    }
\end{table}

\begin{table}[]
    \centering
    \caption{
        Average relative performance based on \minESS{}/s of the tested samplers across densities' scale parameter as presented as solid lines in \cref{fig:esst1}.
    }
    \resizebox{.9\linewidth}{!}{
    \begin{tikzpicture}
        \node at (0, 0) {%
            \begin{tabular}{r|
                S[table-format=1.4]S[table-format=1.4]S[table-format=1.4]S[table-format=1.4]S[table-format=1.4]S[table-format=1.4]S[table-format=1.4]|
                S[table-format=1.4]S[table-format=1.4]S[table-format=1.4]S[table-format=1.4]S[table-format=1.4]S[table-format=1.4]S[table-format=1.4]|
                S[table-format=1.4]S[table-format=1.4]S[table-format=1.4]S[table-format=1.4]S[table-format=1.4]S[table-format=1.4]S[table-format=1.4]|
                S[table-format=1.4]S[table-format=1.4]S[table-format=1.4]S[table-format=1.4]S[table-format=1.4]S[table-format=1.4]S[table-format=1.4]
            }
            \hline
            \multirow{2}{*}{Sampler ${\vcenter{\hbox{\begin{tikzpicture}[x=1mm,y=3.5mm]
                \draw[-] (1,-1) -- (-1,1);
            \end{tikzpicture}}}}$ \!\!\!\!\!\! $\begin{array}{c}\text{Target} \\\sigma\end{array}$ \!\!\!\!\!\! }
                & \multicolumn{7}{c|}{Disc $(\mu = 0.5)$}
                & \multicolumn{7}{c|}{Cigar $(\mu = 0.5)$}
                & \multicolumn{7}{c|}{Bowtie}
                & \multicolumn{7}{c}{Funnel}
            \\
            & \multicolumn{1}{c}{\footnotesize $10^{-2}$} & \multicolumn{1}{c}{\footnotesize $10^{-1.5}$} & \multicolumn{1}{c}{\footnotesize $10^{-1}$} & \multicolumn{1}{c}{\footnotesize $10^{-0.5}$} & \multicolumn{1}{c}{\footnotesize $10^{0}$} & \multicolumn{1}{c}{\footnotesize $10^{0.5}$} & \multicolumn{1}{c|}{\footnotesize $10^{1}$}
            & \multicolumn{1}{c}{\footnotesize $10^{-2}$} & \multicolumn{1}{c}{\footnotesize $10^{-1.5}$} & \multicolumn{1}{c}{\footnotesize $10^{-1}$} & \multicolumn{1}{c}{\footnotesize $10^{-0.5}$} & \multicolumn{1}{c}{\footnotesize $10^{0}$} & \multicolumn{1}{c}{\footnotesize $10^{0.5}$} & \multicolumn{1}{c|}{\footnotesize $10^{1}$}
            & \multicolumn{1}{c}{\footnotesize $10^{-2}$} & \multicolumn{1}{c}{\footnotesize $10^{-1.5}$} & \multicolumn{1}{c}{\footnotesize $10^{-1}$} & \multicolumn{1}{c}{\footnotesize $10^{-0.5}$} & \multicolumn{1}{c}{\footnotesize $10^{0}$} & \multicolumn{1}{c}{\footnotesize $10^{0.5}$} & \multicolumn{1}{c }{\footnotesize $10^{1}$}
            \\
            \hline
  \RWMHm{}   & \cs 0.174          & \cs 0.268          & \cs 0.446          & \cs 0.541          & \cs 0.648         & \cs 0.685          & \cs \textbf{0.73} & \cs 0.223          & \cs 0.293          & \cs 0.466          & \cs 0.646          & \cs 0.724          & \cs \textbf{0.749} & \cs \textbf{0.738} & \cs 0.48           & \cs 0.524          & \cs 0.57           & \cs 0.404          & \cs 0.544          & \cs 0.654          & \cs \textbf{0.706} & \cs 0.609          & \cs 0.548         & \cs 0.496          & \cs 0.581          & \cs \textbf{0.682} & \cs \textbf{0.72} & \cs \textbf{0.726} \\
  \MALAm{}   & \cs 0.167          & \cs 0.251          & \cs 0.51           & \cs \textbf{0.668} & \cs \textbf{0.76} & \cs \textbf{0.748} & \cs 0.709         & \cs \textbf{0.982} & \cs \textbf{0.867} & \cs \textbf{0.792} & \cs \textbf{0.761} & \cs \textbf{0.733} & \cs 0.734          & \cs 0.71           & \cs \textbf{0.786} & \cs \textbf{0.824} & \cs \textbf{0.833} & \cs \textbf{0.695} & \cs \textbf{0.694} & \cs \textbf{0.693} & \cs 0.661          & \cs \textbf{0.668} & \cs \textbf{0.59} & \cs \textbf{0.687} & \cs \textbf{0.721} & \cs 0.68           & \cs 0.684         & \cs 0.704          \\
  \smMALAem{}& \cs \textbf{0.915} & \cs \textbf{0.824} & \cs 0.627          & \cs 0.414          & \cs 0.266         & \cs 0.194          & \cs 0.169         & \cs 0.703          & \cs 0.582          & \cs 0.551          & \cs 0.515          & \cs 0.513          & \cs 0.485          & \cs 0.494          & \cs 0.057          & \cs 0.08           & \cs 0.126          & \cs 0.299          & \cs 0.351          & \cs 0.288          & \cs 0.396          & \cs 0.098          & \cs 0.133         & \cs 0.274          & \cs 0.305          & \cs 0.383          & \cs 0.362         & \cs 0.32           \\
  \smMALAdm{}& \cs 0.743          & \cs 0.784          & \cs \textbf{0.686} & \cs 0.639          & \cs 0.59          & \cs 0.547          & \cs 0.495         & \cs 0.713          & \cs 0.638          & \cs 0.614          & \cs 0.536          & \cs 0.502          & \cs 0.518          & \cs 0.5            & \cs 0.408          & \cs 0.372          & \cs 0.385          & \cs 0.42           & \cs 0.332          & \cs 0.316          & \cs 0.313          & \cs 0.629          & \cs 0.566         & \cs 0.452          & \cs 0.407          & \cs 0.323          & \cs 0.352         & \cs 0.326          \\
  \Dikinm{}  & \cs 0.015          & \cs 0.021          & \cs 0.087          & \cs 0.2            & \cs 0.398         & \cs 0.468          & \cs 0.542         & \cs 0.093          & \cs 0.174          & \cs 0.329          & \cs 0.443          & \cs 0.513          & \cs 0.505          & \cs 0.491          & \cs 0.507          & \cs 0.436          & \cs 0.483          & \cs 0.423          & \cs 0.367          & \cs 0.428          & \cs 0.456          & \cs 0.361          & \cs 0.439         & \cs 0.343          & \cs 0.493          & \cs 0.496          & \cs 0.483         & \cs 0.484          \\
  \MAPLAm{}  & \cs 0.008          & \cs 0.013          & \cs 0.059          & \cs 0.128          & \cs 0.289         & \cs 0.412          & \cs 0.417         & \cs 0.148          & \cs 0.189          & \cs 0.274          & \cs 0.365          & \cs 0.417          & \cs 0.427          & \cs 0.439          & \cs 0.284          & \cs 0.298          & \cs 0.37           & \cs 0.344          & \cs 0.285          & \cs 0.364          & \cs 0.387          & \cs 0.255          & \cs 0.33          & \cs 0.311          & \cs 0.424          & \cs 0.392          & \cs 0.381         & \cs 0.388          \\
  \HRm{}     & \cs 0.037          & \cs 0.076          & \cs 0.139          & \cs 0.189          & \cs 0.245         & \cs 0.284          & \cs 0.329         & \cs 0.059          & \cs 0.081          & \cs 0.144          & \cs 0.251          & \cs 0.302          & \cs 0.342          & \cs 0.308          & \cs 0.111          & \cs 0.114          & \cs 0.144          & \cs 0.12           & \cs 0.185          & \cs 0.262          & \cs 0.298          & \cs 0.107          & \cs 0.116         & \cs 0.141          & \cs 0.241          & \cs 0.307          & \cs 0.285         & \cs 0.291          \\
  \LHRm{}    & \cs 0.025          & \cs 0.07           & \cs 0.118          & \cs 0.199          & \cs 0.201         & \cs 0.247          & \cs 0.24          & \cs 0.191          & \cs 0.184          & \cs 0.221          & \cs 0.287          & \cs 0.294          & \cs 0.297          & \cs 0.315          & \cs 0.151          & \cs 0.165          & \cs 0.233          & \cs 0.231          & \cs 0.253          & \cs 0.323          & \cs 0.333          & \cs 0.148          & \cs 0.133         & \cs 0.188          & \cs 0.294          & \cs 0.216          & \cs 0.24          & \cs 0.249          \\
  \smHRem{}  & \cs 0.086          & \cs 0.123          & \cs 0.149          & \cs 0.163          & \cs 0.156         & \cs 0.102          & \cs 0.102         & \cs 0.056          & \cs 0.083          & \cs 0.144          & \cs 0.254          & \cs 0.298          & \cs 0.322          & \cs 0.319          & \cs 0.032          & \cs 0.045          & \cs 0.096          & \cs 0.171          & \cs 0.196          & \cs 0.192          & \cs 0.239          & \cs 0.071          & \cs 0.09          & \cs 0.167          & \cs 0.193          & \cs 0.263          & \cs 0.282         & \cs 0.289          \\
  \smHRdm{}  & \cs 0.089          & \cs 0.118          & \cs 0.158          & \cs 0.186          & \cs 0.259         & \cs 0.272          & \cs 0.29          & \cs 0.064          & \cs 0.084          & \cs 0.152          & \cs 0.247          & \cs 0.331          & \cs 0.315          & \cs 0.317          & \cs 0.107          & \cs 0.091          & \cs 0.109          & \cs 0.11           & \cs 0.159          & \cs 0.277          & \cs 0.332          & \cs 0.19           & \cs 0.15          & \cs 0.188          & \cs 0.24           & \cs 0.322          & \cs 0.389         & \cs 0.415          \\
  \smLHRem{} & \cs 0.25           & \cs 0.245          & \cs 0.273          & \cs 0.171          & \cs 0.111         & \cs 0.07           & \cs 0.08          & \cs 0.187          & \cs 0.191          & \cs 0.219          & \cs 0.291          & \cs 0.291          & \cs 0.305          & \cs 0.33           & \cs 0.021          & \cs 0.044          & \cs 0.148          & \cs 0.188          & \cs 0.269          & \cs 0.268          & \cs 0.268          & \cs 0.087          & \cs 0.124         & \cs 0.181          & \cs 0.177          & \cs 0.195          & \cs 0.245         & \cs 0.237          \\
  \smLHRdm{} & \cs 0.26           & \cs 0.308          & \cs 0.301          & \cs 0.301          & \cs 0.222         & \cs 0.22           & \cs 0.255         & \cs 0.193          & \cs 0.202          & \cs 0.218          & \cs 0.26           & \cs 0.293          & \cs 0.342          & \cs 0.348          & \cs 0.151          & \cs 0.13           & \cs 0.151          & \cs 0.211          & \cs 0.322          & \cs 0.35           & \cs 0.352          & \cs 0.289          & \cs 0.247         & \cs 0.245          & \cs 0.259          & \cs 0.223          & \cs 0.202         & \cs 0.223          \\
            \hline
\end{tabular}};
    \end{tikzpicture}
    }
\end{table}

\end{landscape}
\newpage
\begin{landscape}

\begin{table}[]
    \centering
    \caption{
        Average L1 error of the tested samplers across dimensions as presented as solid lines in \cref{fig:meanerr1}.
    }
    \resizebox{\linewidth}{!}{
    \begin{tikzpicture}
        \node at (0, 0) {%
            \begin{tabular}{r|
                S[table-format=1.4]S[table-format=1.4]S[table-format=1.4]S[table-format=1.4]S[table-format=1.4]|
                S[table-format=1.4]S[table-format=1.4]S[table-format=1.4]S[table-format=1.4]S[table-format=1.4]|
                S[table-format=1.4]S[table-format=1.4]S[table-format=1.4]S[table-format=1.4]S[table-format=1.4]|
                S[table-format=1.4]S[table-format=1.4]S[table-format=1.4]S[table-format=1.4]S[table-format=1.4]|
                S[table-format=1.4]S[table-format=1.4]S[table-format=1.4]S[table-format=1.4]S[table-format=1.4]|
                S[table-format=1.4]S[table-format=1.4]S[table-format=1.4]S[table-format=1.4]S[table-format=1.4]|
                S[table-format=1.4]S[table-format=1.4]S[table-format=1.4]S[table-format=1.4]S[table-format=1.4]|
                S[table-format=1.4]S[table-format=1.4]S[table-format=1.4]S[table-format=1.4]S[table-format=1.4]
            }
            \hline
            \multirow{2}{*}{Sampler ${\vcenter{\hbox{\begin{tikzpicture}[x=1mm,y=3.5mm]
                \draw[-] (1,-1) -- (-1,1);
            \end{tikzpicture}}}}$ \!\!\!\!\!\! $\begin{array}{c}\text{Target} \\d\end{array}$ \!\!\!\!\!\! }
                & \multicolumn{5}{c|}{Gauss $(\mu = 0)$}
                & \multicolumn{5}{c|}{Disc $(\mu = 0)$}
                & \multicolumn{5}{c|}{Cigar $(\mu = 0)$}
                & \multicolumn{5}{c|}{Gauss $(\mu = 0.5)$}
                & \multicolumn{5}{c|}{Disc $(\mu = 0.5)$}
                & \multicolumn{5}{c|}{Cigar $(\mu = 0.5)$}
                & \multicolumn{5}{c|}{Bowtie}
                & \multicolumn{5}{c}{Funnel}
            \\
            & \multicolumn{1}{c}{\footnotesize $2$} & \multicolumn{1}{c}{\footnotesize $4$} & \multicolumn{1}{c}{\footnotesize $8$} & \multicolumn{1}{c}{\footnotesize $16$} & \multicolumn{1}{c|}{\footnotesize $32$}
            & \multicolumn{1}{c}{\footnotesize $2$} & \multicolumn{1}{c}{\footnotesize $4$} & \multicolumn{1}{c}{\footnotesize $8$} & \multicolumn{1}{c}{\footnotesize $16$} & \multicolumn{1}{c|}{\footnotesize $32$}
            & \multicolumn{1}{c}{\footnotesize $2$} & \multicolumn{1}{c}{\footnotesize $4$} & \multicolumn{1}{c}{\footnotesize $8$} & \multicolumn{1}{c}{\footnotesize $16$} & \multicolumn{1}{c|}{\footnotesize $32$}
            & \multicolumn{1}{c}{\footnotesize $2$} & \multicolumn{1}{c}{\footnotesize $4$} & \multicolumn{1}{c}{\footnotesize $8$} & \multicolumn{1}{c}{\footnotesize $16$} & \multicolumn{1}{c|}{\footnotesize $32$}
            & \multicolumn{1}{c}{\footnotesize $2$} & \multicolumn{1}{c}{\footnotesize $4$} & \multicolumn{1}{c}{\footnotesize $8$} & \multicolumn{1}{c}{\footnotesize $16$} & \multicolumn{1}{c|}{\footnotesize $32$}
            & \multicolumn{1}{c}{\footnotesize $2$} & \multicolumn{1}{c}{\footnotesize $4$} & \multicolumn{1}{c}{\footnotesize $8$} & \multicolumn{1}{c}{\footnotesize $16$} & \multicolumn{1}{c|}{\footnotesize $32$}
            & \multicolumn{1}{c}{\footnotesize $2$} & \multicolumn{1}{c}{\footnotesize $4$} & \multicolumn{1}{c}{\footnotesize $8$} & \multicolumn{1}{c}{\footnotesize $16$} & \multicolumn{1}{c|}{\footnotesize $32$}
            & \multicolumn{1}{c}{\footnotesize $2$} & \multicolumn{1}{c}{\footnotesize $4$} & \multicolumn{1}{c}{\footnotesize $8$} & \multicolumn{1}{c}{\footnotesize $16$} & \multicolumn{1}{c }{\footnotesize $32$}
            \\
            \hline
 \RWMHm{}   &  \cs 0.039          & \cs 0.046          & \cs 0.111          & \cs 0.221          & \cs 0.318          & \cs 0.036          & \cs 0.042          & \cs 0.123          & \cs 0.247          & \cs 0.335          & \cs 0.045          & \cs 0.048          & \cs 0.076          & \cs 0.162          & \cs 0.25           & \cs 0.04           & \cs 0.043         & \cs 0.041          & \cs 0.045          & \cs 0.049          & \cs 0.042          & \cs 0.05           & \cs 0.047          & \cs 0.048          & \cs 0.052          & \cs 0.05           & \cs 0.048          & \cs 0.048          & \cs 0.049          & \cs 0.054         & \cs 0.039          & \cs 0.039          & \cs 0.04           & \cs 0.041          & \cs 0.049          & \cs 0.047          & \cs 0.054          & \cs 0.056          & \cs 0.06           & \cs 0.069          \\
 \MALAm{}   &  \cs 0.037          & \cs 0.043          & \cs 0.108          & \cs 0.22           & \cs 0.317          & \cs 0.036          & \cs 0.041          & \cs 0.123          & \cs 0.25           & \cs 0.332          & \cs 0.045          & \cs 0.048          & \cs 0.076          & \cs 0.16           & \cs 0.251          & \cs 0.039          & \cs 0.038         & \cs 0.036          & \cs 0.038          & \cs 0.043          & \cs 0.041          & \cs 0.051          & \cs 0.047          & \cs 0.048          & \cs 0.052          & \cs 0.049          & \cs 0.048          & \cs 0.048          & \cs 0.048          & \cs 0.054         & \cs 0.038          & \cs 0.039          & \cs 0.037          & \cs 0.038          & \cs 0.044          & \cs 0.053          & \cs 0.057          & \cs 0.055          & \cs 0.052          & \cs 0.058          \\
 \smMALAem{}&  \cs 0.034          & \cs 0.034          & \cs 0.034          & \cs 0.042          & \cs 0.072          & \cs 0.069          & \cs 0.259          & \cs 0.427          & \cs 0.536          & \cs 0.535          & \cs 0.045          & \cs 0.049          & \cs 0.076          & \cs 0.169          & \cs 0.248          & \cs 0.033          & \cs 0.034         & \cs 0.033          & \cs 0.036          & \cs 0.039          & \cs 0.048          & \cs 0.066          & \cs 0.082          & \cs 0.114          & \cs 0.131          & \cs 0.049          & \cs 0.048          & \cs 0.048          & \cs 0.048          & \cs 0.054         & \cs 0.063          & \cs 0.051          & \cs 0.05           & \cs 0.05           & \cs 0.097          & \cs 0.061          & \cs 0.07           & \cs 0.064          & \cs 0.069          & \cs 0.091          \\
 \smMALAdm{}&  \cs 0.034          & \cs 0.037          & \cs 0.071          & \cs 0.208          & \cs 0.315          & \cs 0.036          & \cs 0.042          & \cs 0.125          & \cs 0.25           & \cs 0.333          & \cs 0.045          & \cs 0.048          & \cs 0.076          & \cs 0.163          & \cs 0.251          & \cs 0.032          & \cs 0.033         & \cs 0.034          & \cs 0.036          & \cs 0.041          & \cs \textbf{0.035} & \cs 0.045          & \cs 0.045          & \cs 0.047          & \cs 0.051          & \cs 0.049          & \cs 0.047          & \cs 0.048          & \cs 0.048          & \cs 0.054         & \cs 0.036          & \cs 0.037          & \cs 0.038          & \cs 0.04           & \cs 0.046          & \cs 0.043          & \cs 0.05           & \cs 0.05           & \cs 0.047          & \cs 0.052          \\
 \Dikinm{}  &  \cs 0.042          & \cs 0.039          & \cs 0.037          & \cs 0.036          & \cs 0.037          & \cs 0.039          & \cs \textbf{0.034} & \cs \textbf{0.034} & \cs \textbf{0.035} & \cs \textbf{0.036} & \cs 0.048          & \cs 0.044          & \cs \textbf{0.042} & \cs \textbf{0.041} & \cs \textbf{0.043} & \cs 0.039          & \cs 0.037         & \cs 0.037          & \cs 0.039          & \cs 0.04           & \cs 0.058          & \cs 0.063          & \cs 0.053          & \cs 0.05           & \cs 0.049          & \cs 0.054          & \cs 0.048          & \cs 0.047          & \cs 0.048          & \cs 0.051         & \cs 0.042          & \cs 0.039          & \cs 0.039          & \cs 0.041          & \cs 0.043          & \cs 0.05           & \cs 0.051          & \cs 0.052          & \cs 0.052          & \cs 0.054          \\
 \MAPLAm{}  &  \cs 0.042          & \cs 0.041          & \cs 0.038          & \cs 0.038          & \cs 0.039          & \cs 0.039          & \cs 0.039          & \cs 0.037          & \cs 0.037          & \cs 0.039          & \cs 0.048          & \cs 0.046          & \cs 0.044          & \cs 0.043          & \cs 0.045          & \cs 0.039          & \cs 0.037         & \cs 0.035          & \cs 0.037          & \cs 0.039          & \cs 0.077          & \cs 0.08           & \cs 0.065          & \cs 0.054          & \cs 0.051          & \cs 0.052          & \cs 0.049          & \cs 0.048          & \cs 0.048          & \cs 0.052         & \cs 0.043          & \cs 0.041          & \cs 0.039          & \cs 0.041          & \cs 0.043          & \cs 0.059          & \cs 0.056          & \cs 0.053          & \cs 0.05           & \cs 0.051          \\
 \HRm{}     &  \cs 0.038          & \cs 0.038          & \cs 0.045          & \cs 0.063          & \cs 0.114          & \cs \textbf{0.034} & \cs \textbf{0.034} & \cs 0.044          & \cs 0.062          & \cs 0.099          & \cs \textbf{0.043} & \cs \textbf{0.042} & \cs 0.045          & \cs 0.062          & \cs 0.113          & \cs 0.04           & \cs 0.04          & \cs 0.04           & \cs 0.042          & \cs 0.045          & \cs 0.042          & \cs 0.048          & \cs 0.044          & \cs \textbf{0.045} & \cs \textbf{0.047} & \cs 0.049          & \cs 0.046          & \cs \textbf{0.045} & \cs \textbf{0.046} & \cs \textbf{0.05} & \cs 0.038          & \cs 0.037          & \cs 0.037          & \cs 0.039          & \cs 0.045          & \cs 0.045          & \cs 0.051          & \cs 0.052          & \cs 0.055          & \cs 0.063          \\
 \LHRm{}    &  \cs 0.035          & \cs 0.036          & \cs 0.038          & \cs 0.054          & \cs 0.098          & \cs \textbf{0.034} & \cs 0.036          & \cs 0.044          & \cs 0.065          & \cs 0.113          & \cs \textbf{0.043} & \cs 0.043          & \cs 0.046          & \cs 0.063          & \cs 0.119          & \cs 0.04           & \cs 0.036         & \cs 0.033          & \cs 0.035          & \cs 0.038          & \cs 0.04           & \cs 0.049          & \cs 0.046          & \cs 0.046          & \cs 0.049          & \cs \textbf{0.047} & \cs \textbf{0.045} & \cs \textbf{0.045} & \cs \textbf{0.046} & \cs 0.051         & \cs 0.037          & \cs 0.037          & \cs 0.037          & \cs \textbf{0.037} & \cs \textbf{0.041} & \cs 0.051          & \cs 0.055          & \cs 0.053          & \cs 0.048          & \cs 0.053          \\
 \smHRem{}  &  \cs 0.035          & \cs 0.033          & \cs \textbf{0.032} & \cs \textbf{0.032} & \cs \textbf{0.033} & \cs 0.049          & \cs 0.214          & \cs 0.359          & \cs 0.475          & \cs 0.481          & \cs \textbf{0.043} & \cs \textbf{0.042} & \cs 0.045          & \cs 0.062          & \cs 0.112          & \cs 0.035          & \cs 0.033         & \cs 0.032          & \cs \textbf{0.034} & \cs \textbf{0.037} & \cs 0.042          & \cs 0.062          & \cs 0.072          & \cs 0.095          & \cs 0.119          & \cs 0.049          & \cs 0.046          & \cs \textbf{0.045} & \cs \textbf{0.046} & \cs \textbf{0.05} & \cs 0.047          & \cs 0.044          & \cs 0.044          & \cs 0.052          & \cs 0.08           & \cs 0.053          & \cs 0.057          & \cs 0.061          & \cs 0.066          & \cs 0.081          \\
 \smHRdm{}  &  \cs 0.035          & \cs 0.034          & \cs \textbf{0.032} & \cs 0.033          & \cs 0.035          & \cs 0.035          & \cs 0.035          & \cs 0.043          & \cs 0.064          & \cs 0.099          & \cs \textbf{0.043} & \cs \textbf{0.042} & \cs 0.045          & \cs 0.061          & \cs 0.112          & \cs 0.034          & \cs 0.033         & \cs 0.033          & \cs 0.037          & \cs 0.041          & \cs 0.038          & \cs 0.045          & \cs \textbf{0.043} & \cs \textbf{0.045} & \cs 0.048          & \cs 0.049          & \cs 0.046          & \cs \textbf{0.045} & \cs \textbf{0.046} & \cs \textbf{0.05} & \cs 0.036          & \cs \textbf{0.036} & \cs \textbf{0.036} & \cs 0.038          & \cs 0.043          & \cs 0.043          & \cs \textbf{0.046} & \cs 0.046          & \cs 0.047          & \cs 0.056          \\
 \smLHRem{} &  \cs 0.034          & \cs \textbf{0.032} & \cs \textbf{0.032} & \cs \textbf{0.032} & \cs \textbf{0.033} & \cs 0.057          & \cs 0.219          & \cs 0.362          & \cs 0.482          & \cs 0.504          & \cs \textbf{0.043} & \cs 0.044          & \cs 0.046          & \cs 0.064          & \cs 0.118          & \cs 0.032          & \cs 0.031         & \cs 0.032          & \cs \textbf{0.034} & \cs 0.038          & \cs 0.042          & \cs 0.062          & \cs 0.071          & \cs 0.1            & \cs 0.122          & \cs \textbf{0.047} & \cs 0.046          & \cs \textbf{0.045} & \cs \textbf{0.046} & \cs \textbf{0.05} & \cs 0.055          & \cs 0.046          & \cs 0.047          & \cs 0.048          & \cs 0.087          & \cs 0.055          & \cs 0.058          & \cs 0.059          & \cs 0.064          & \cs 0.084          \\
 \smLHRdm{} &  \cs \textbf{0.032} & \cs \textbf{0.032} & \cs 0.033          & \cs 0.044          & \cs 0.091          & \cs \textbf{0.034} & \cs 0.036          & \cs 0.045          & \cs 0.066          & \cs 0.111          & \cs \textbf{0.043} & \cs 0.044          & \cs 0.046          & \cs 0.065          & \cs 0.119          & \cs \textbf{0.031} & \cs \textbf{0.03} & \cs \textbf{0.031} & \cs \textbf{0.034} & \cs \textbf{0.037} & \cs \textbf{0.035} & \cs \textbf{0.044} & \cs \textbf{0.043} & \cs \textbf{0.045} & \cs 0.048          & \cs \textbf{0.047} & \cs \textbf{0.045} & \cs \textbf{0.045} & \cs \textbf{0.046} & \cs \textbf{0.05} & \cs \textbf{0.035} & \cs \textbf{0.036} & \cs 0.037          & \cs 0.039          & \cs 0.043          & \cs \textbf{0.042} & \cs \textbf{0.046} & \cs \textbf{0.044} & \cs \textbf{0.043} & \cs \textbf{0.048} \\
\hline
\end{tabular}};
    \end{tikzpicture}
    }
\end{table}

\begin{table}[]
    \centering
    \caption{
        Average L1 error of the tested samplers across polytope angles as presented as solid lines in \cref{fig:meanerr1}.
    }
    \resizebox{\linewidth}{!}{
    \begin{tikzpicture}
        \node at (0, 0) {%
            \begin{tabular}{r|
                S[table-format=1.4]S[table-format=1.4]S[table-format=1.4]S[table-format=1.4]|
                S[table-format=1.4]S[table-format=1.4]S[table-format=1.4]S[table-format=1.4]|
                S[table-format=1.4]S[table-format=1.4]S[table-format=1.4]S[table-format=1.4]|
                S[table-format=1.4]S[table-format=1.4]S[table-format=1.4]S[table-format=1.4]|
                S[table-format=1.4]S[table-format=1.4]S[table-format=1.4]S[table-format=1.4]|
                S[table-format=1.4]S[table-format=1.4]S[table-format=1.4]S[table-format=1.4]|
                S[table-format=1.4]S[table-format=1.4]S[table-format=1.4]S[table-format=1.4]|
                S[table-format=1.4]S[table-format=1.4]S[table-format=1.4]S[table-format=1.4]
            }
            \hline
            \multirow{2}{*}{Sampler ${\vcenter{\hbox{\begin{tikzpicture}[x=1mm,y=3.5mm]
                \draw[-] (1,-1) -- (-1,1);
            \end{tikzpicture}}}}$ \!\!\!\!\!\! $\begin{array}{c}\text{Target} \\\theta\end{array}$ \!\!\!\!\!\! }
                & \multicolumn{4}{c|}{Gauss $(\mu = 0)$}
                & \multicolumn{4}{c|}{Disc $(\mu = 0)$}
                & \multicolumn{4}{c|}{Cigar $(\mu = 0)$}
                & \multicolumn{4}{c|}{Gauss $(\mu = 0.5)$}
                & \multicolumn{4}{c|}{Disc $(\mu = 0.5)$}
                & \multicolumn{4}{c|}{Cigar $(\mu = 0.5)$}
                & \multicolumn{4}{c|}{Bowtie}
                & \multicolumn{4}{c}{Funnel}
            \\
            & \multicolumn{1}{c}{\footnotesize $9^{\circ}$} & \multicolumn{1}{c}{\footnotesize $19^{\circ}$} & \multicolumn{1}{c}{\footnotesize $45^{\circ}$} & \multicolumn{1}{c|}{\footnotesize $90^{\circ}$}
            & \multicolumn{1}{c}{\footnotesize $9^{\circ}$} & \multicolumn{1}{c}{\footnotesize $19^{\circ}$} & \multicolumn{1}{c}{\footnotesize $45^{\circ}$} & \multicolumn{1}{c|}{\footnotesize $90^{\circ}$}
            & \multicolumn{1}{c}{\footnotesize $9^{\circ}$} & \multicolumn{1}{c}{\footnotesize $19^{\circ}$} & \multicolumn{1}{c}{\footnotesize $45^{\circ}$} & \multicolumn{1}{c|}{\footnotesize $90^{\circ}$}
            & \multicolumn{1}{c}{\footnotesize $9^{\circ}$} & \multicolumn{1}{c}{\footnotesize $19^{\circ}$} & \multicolumn{1}{c}{\footnotesize $45^{\circ}$} & \multicolumn{1}{c|}{\footnotesize $90^{\circ}$}
            & \multicolumn{1}{c}{\footnotesize $9^{\circ}$} & \multicolumn{1}{c}{\footnotesize $19^{\circ}$} & \multicolumn{1}{c}{\footnotesize $45^{\circ}$} & \multicolumn{1}{c|}{\footnotesize $90^{\circ}$}
            & \multicolumn{1}{c}{\footnotesize $9^{\circ}$} & \multicolumn{1}{c}{\footnotesize $19^{\circ}$} & \multicolumn{1}{c}{\footnotesize $45^{\circ}$} & \multicolumn{1}{c|}{\footnotesize $90^{\circ}$}
            & \multicolumn{1}{c}{\footnotesize $9^{\circ}$} & \multicolumn{1}{c}{\footnotesize $19^{\circ}$} & \multicolumn{1}{c}{\footnotesize $45^{\circ}$} & \multicolumn{1}{c|}{\footnotesize $90^{\circ}$}
            & \multicolumn{1}{c}{\footnotesize $9^{\circ}$} & \multicolumn{1}{c}{\footnotesize $19^{\circ}$} & \multicolumn{1}{c}{\footnotesize $45^{\circ}$} & \multicolumn{1}{c }{\footnotesize $90^{\circ}$}
            \\
            \hline
  \RWMHm{}   &  \cs 0.401          & \cs 0.095          & \cs 0.047          & \cs 0.046          & \cs 0.404          & \cs 0.141          & \cs 0.042          & \cs 0.04           & \cs 0.314          & \cs 0.047          & \cs 0.05           & \cs 0.054          & \cs 0.044          & \cs 0.039         & \cs 0.044          & \cs 0.047          & \cs 0.047          & \cs 0.045          & \cs 0.05           & \cs 0.05           & \cs 0.045          & \cs 0.045          & \cs 0.053          & \cs 0.056          & \cs 0.042          & \cs 0.039          & \cs 0.042         & \cs 0.044         & \cs 0.058          & \cs 0.052          & \cs 0.057          & \cs 0.061         \\
  \MALAm{}   &  \cs 0.399          & \cs 0.094          & \cs 0.044          & \cs 0.043          & \cs 0.407          & \cs 0.14           & \cs 0.041          & \cs 0.039          & \cs 0.316          & \cs 0.045          & \cs 0.05           & \cs 0.053          & \cs 0.038          & \cs 0.035         & \cs 0.04           & \cs 0.042          & \cs 0.046          & \cs 0.045          & \cs 0.05           & \cs 0.05           & \cs 0.045          & \cs 0.045          & \cs 0.052          & \cs 0.055          & \cs 0.04           & \cs 0.037          & \cs \textbf{0.04} & \cs 0.041         & \cs 0.053          & \cs 0.05           & \cs 0.056          & \cs 0.061         \\
  \smMALAem{}&  \cs 0.059          & \cs 0.029          & \cs 0.038          & \cs 0.048          & \cs 0.696          & \cs 0.509          & \cs 0.2            & \cs 0.056          & \cs 0.321          & \cs 0.046          & \cs 0.049          & \cs 0.053          & \cs 0.026          & \cs 0.031         & \cs 0.038          & \cs 0.045          & \cs 0.169          & \cs 0.085          & \cs 0.053          & \cs 0.046          & \cs 0.045          & \cs 0.045          & \cs 0.052          & \cs 0.055          & \cs 0.069          & \cs 0.055          & \cs 0.06          & \cs 0.064         & \cs 0.067          & \cs 0.062          & \cs 0.075          & \cs 0.081         \\
  \smMALAdm{}&  \cs 0.371          & \cs 0.08           & \cs 0.04           & \cs \textbf{0.041} & \cs 0.409          & \cs 0.14           & \cs 0.041          & \cs 0.039          & \cs 0.319          & \cs 0.045          & \cs 0.049          & \cs 0.053          & \cs 0.035          & \cs 0.032         & \cs \textbf{0.036} & \cs 0.039          & \cs 0.044          & \cs 0.042          & \cs \textbf{0.046} & \cs 0.045          & \cs 0.045          & \cs 0.045          & \cs 0.052          & \cs 0.055          & \cs 0.04           & \cs 0.037          & \cs 0.041         & \cs 0.041         & \cs 0.047          & \cs 0.044          & \cs 0.05           & \cs 0.053         \\
  \Dikinm{}  &  \cs 0.027          & \cs 0.033          & \cs 0.043          & \cs 0.05           & \cs \textbf{0.027} & \cs \textbf{0.033} & \cs 0.041          & \cs 0.041          & \cs \textbf{0.029} & \cs \textbf{0.037} & \cs 0.052          & \cs 0.056          & \cs 0.029          & \cs 0.034         & \cs 0.041          & \cs 0.049          & \cs 0.056          & \cs 0.055          & \cs 0.055          & \cs 0.051          & \cs 0.04           & \cs 0.046          & \cs 0.055          & \cs 0.057          & \cs 0.036          & \cs 0.039          & \cs 0.043         & \cs 0.045         & \cs \textbf{0.039} & \cs 0.048          & \cs 0.057          & \cs 0.064         \\
  \MAPLAm{}  &  \cs 0.029          & \cs 0.035          & \cs 0.043          & \cs 0.051          & \cs 0.03           & \cs 0.036          & \cs 0.044          & \cs 0.044          & \cs 0.032          & \cs 0.038          & \cs 0.054          & \cs 0.057          & \cs 0.029          & \cs 0.034         & \cs 0.04           & \cs 0.046          & \cs 0.081          & \cs 0.069          & \cs 0.061          & \cs 0.052          & \cs 0.04           & \cs 0.047          & \cs 0.055          & \cs 0.057          & \cs 0.039          & \cs 0.04           & \cs 0.042         & \cs 0.044         & \cs 0.04           & \cs 0.049          & \cs 0.06           & \cs 0.066         \\
  \HRm{}     &  \cs 0.114          & \cs 0.04           & \cs 0.039          & \cs 0.044          & \cs 0.105          & \cs 0.04           & \cs \textbf{0.037} & \cs 0.037          & \cs 0.107          & \cs \textbf{0.037} & \cs \textbf{0.048} & \cs \textbf{0.051} & \cs 0.036          & \cs 0.038         & \cs 0.044          & \cs 0.047          & \cs \textbf{0.041} & \cs 0.043          & \cs 0.049          & \cs 0.048          & \cs \textbf{0.039} & \cs \textbf{0.043} & \cs 0.052          & \cs 0.054          & \cs 0.035          & \cs 0.037          & \cs 0.042         & \cs 0.043         & \cs 0.048          & \cs 0.049          & \cs 0.056          & \cs 0.059         \\
  \LHRm{}    &  \cs 0.097          & \cs 0.033          & \cs 0.037          & \cs 0.042          & \cs 0.116          & \cs 0.042          & \cs 0.038          & \cs 0.037          & \cs 0.113          & \cs 0.039          & \cs \textbf{0.048} & \cs 0.052          & \cs 0.031          & \cs 0.033         & \cs 0.039          & \cs 0.042          & \cs 0.042          & \cs 0.044          & \cs 0.049          & \cs 0.049          & \cs \textbf{0.039} & \cs \textbf{0.043} & \cs 0.052          & \cs \textbf{0.053} & \cs 0.035          & \cs \textbf{0.036} & \cs \textbf{0.04} & \cs \textbf{0.04} & \cs 0.045          & \cs 0.047          & \cs 0.056          & \cs 0.06          \\
  \smHRem{}  &  \cs \textbf{0.022} & \cs \textbf{0.027} & \cs 0.037          & \cs 0.045          & \cs 0.66           & \cs 0.441          & \cs 0.12           & \cs 0.042          & \cs 0.107          & \cs \textbf{0.037} & \cs \textbf{0.048} & \cs \textbf{0.051} & \cs 0.026          & \cs 0.031         & \cs 0.038          & \cs 0.043          & \cs 0.145          & \cs 0.073          & \cs 0.049          & \cs 0.045          & \cs \textbf{0.039} & \cs \textbf{0.043} & \cs 0.052          & \cs 0.054          & \cs 0.053          & \cs 0.048          & \cs 0.054         & \cs 0.059         & \cs 0.051          & \cs 0.057          & \cs 0.07           & \cs 0.076         \\
  \smHRdm{}  &  \cs 0.023          & \cs 0.031          & \cs 0.038          & \cs 0.043          & \cs 0.106          & \cs 0.04           & \cs 0.038          & \cs \textbf{0.036} & \cs 0.107          & \cs \textbf{0.037} & \cs \textbf{0.048} & \cs \textbf{0.051} & \cs 0.028          & \cs 0.033         & \cs 0.039          & \cs 0.042          & \cs \textbf{0.041} & \cs 0.042          & \cs 0.047          & \cs 0.045          & \cs \textbf{0.039} & \cs 0.044          & \cs 0.052          & \cs 0.054          & \cs \textbf{0.034} & \cs \textbf{0.036} & \cs \textbf{0.04} & \cs 0.041         & \cs 0.042          & \cs 0.044          & \cs 0.051          & \cs 0.053         \\
  \smLHRem{} &  \cs \textbf{0.022} & \cs \textbf{0.027} & \cs 0.037          & \cs 0.045          & \cs 0.665          & \cs 0.453          & \cs 0.136          & \cs 0.045          & \cs 0.113          & \cs 0.039          & \cs \textbf{0.048} & \cs 0.052          & \cs \textbf{0.025} & \cs \textbf{0.03} & \cs 0.037          & \cs 0.042          & \cs 0.149          & \cs 0.073          & \cs 0.05           & \cs 0.045          & \cs \textbf{0.039} & \cs \textbf{0.043} & \cs 0.052          & \cs \textbf{0.053} & \cs 0.058          & \cs 0.05           & \cs 0.057         & \cs 0.061         & \cs 0.053          & \cs 0.057          & \cs 0.069          & \cs 0.077         \\
  \smLHRdm{} &  \cs 0.081          & \cs 0.029          & \cs \textbf{0.036} & \cs \textbf{0.041} & \cs 0.115          & \cs 0.042          & \cs 0.038          & \cs 0.037          & \cs 0.114          & \cs 0.039          & \cs \textbf{0.048} & \cs 0.052          & \cs 0.027          & \cs \textbf{0.03} & \cs \textbf{0.036} & \cs \textbf{0.038} & \cs \textbf{0.041} & \cs \textbf{0.041} & \cs \textbf{0.046} & \cs \textbf{0.044} & \cs \textbf{0.039} & \cs \textbf{0.043} & \cs \textbf{0.051} & \cs \textbf{0.053} & \cs \textbf{0.034} & \cs \textbf{0.036} & \cs \textbf{0.04} & \cs 0.041         & \cs \textbf{0.039} & \cs \textbf{0.041} & \cs \textbf{0.048} & \cs \textbf{0.05} \\
 \hline
\end{tabular}};
    \end{tikzpicture}
    }
\end{table}

\begin{table}[]
    \centering
    \caption{
        Average L1 error of the tested samplers across the densities' scale parameters
        as presented as solid lines in \cref{fig:meanerr1}.
    }
    \resizebox{.9\linewidth}{!}{
    \begin{tikzpicture}
        \node at (0, 0) {%
            \begin{tabular}{r|
                S[table-format=1.4]S[table-format=1.4]S[table-format=1.4]S[table-format=1.4]S[table-format=1.4]S[table-format=1.4]S[table-format=1.4]|
                S[table-format=1.4]S[table-format=1.4]S[table-format=1.4]S[table-format=1.4]S[table-format=1.4]S[table-format=1.4]S[table-format=1.4]|
                S[table-format=1.4]S[table-format=1.4]S[table-format=1.4]S[table-format=1.4]S[table-format=1.4]S[table-format=1.4]S[table-format=1.4]|
                S[table-format=1.4]S[table-format=1.4]S[table-format=1.4]S[table-format=1.4]S[table-format=1.4]S[table-format=1.4]S[table-format=1.4]
            }
            \hline
            \multirow{2}{*}{Sampler ${\vcenter{\hbox{\begin{tikzpicture}[x=1mm,y=3.5mm]
                \draw[-] (1,-1) -- (-1,1);
            \end{tikzpicture}}}}$ \!\!\!\!\!\! $\begin{array}{c}\text{Target} \\\sigma\end{array}$ \!\!\!\!\!\! }
                & \multicolumn{7}{c|}{Gauss $(\mu = 0)$}
                & \multicolumn{7}{c|}{Disc $(\mu = 0)$}
                & \multicolumn{7}{c|}{Cigar $(\mu = 0)$}
                & \multicolumn{7}{c }{Gauss $(\mu = 0.5)$}
            \\
            & \multicolumn{1}{c}{\footnotesize $10^{-2}$} & \multicolumn{1}{c}{\footnotesize $10^{-1.5}$} & \multicolumn{1}{c}{\footnotesize $10^{-1}$} & \multicolumn{1}{c}{\footnotesize $10^{-0.5}$} & \multicolumn{1}{c}{\footnotesize $10^{0}$} & \multicolumn{1}{c}{\footnotesize $10^{0.5}$} & \multicolumn{1}{c|}{\footnotesize $10^{1}$}
            & \multicolumn{1}{c}{\footnotesize $10^{-2}$} & \multicolumn{1}{c}{\footnotesize $10^{-1.5}$} & \multicolumn{1}{c}{\footnotesize $10^{-1}$} & \multicolumn{1}{c}{\footnotesize $10^{-0.5}$} & \multicolumn{1}{c}{\footnotesize $10^{0}$} & \multicolumn{1}{c}{\footnotesize $10^{0.5}$} & \multicolumn{1}{c|}{\footnotesize $10^{1}$}
            & \multicolumn{1}{c}{\footnotesize $10^{-2}$} & \multicolumn{1}{c}{\footnotesize $10^{-1.5}$} & \multicolumn{1}{c}{\footnotesize $10^{-1}$} & \multicolumn{1}{c}{\footnotesize $10^{-0.5}$} & \multicolumn{1}{c}{\footnotesize $10^{0}$} & \multicolumn{1}{c}{\footnotesize $10^{0.5}$} & \multicolumn{1}{c|}{\footnotesize $10^{1}$}
            & \multicolumn{1}{c}{\footnotesize $10^{-2}$} & \multicolumn{1}{c}{\footnotesize $10^{-1.5}$} & \multicolumn{1}{c}{\footnotesize $10^{-1}$} & \multicolumn{1}{c}{\footnotesize $10^{-0.5}$} & \multicolumn{1}{c}{\footnotesize $10^{0}$} & \multicolumn{1}{c}{\footnotesize $10^{0.5}$} & \multicolumn{1}{c }{\footnotesize $10^{1}$}
            \\
            \hline
  \RWMHm{}   &  \cs 0.116          & \cs 0.144          & \cs 0.188          & \cs 0.171         & \cs 0.15           & \cs 0.135          & \cs 0.124          & \cs 0.136          & \cs 0.098          & \cs 0.092          & \cs 0.114          & \cs 0.207          & \cs 0.225          & \cs 0.225          & \cs 0.077          & \cs 0.084          & \cs 0.113          & \cs 0.131          & \cs 0.137          & \cs 0.136          & \cs 0.136          & \cs 0.038          & \cs 0.039          & \cs 0.041          & \cs 0.045          & \cs 0.046          & \cs 0.048          & \cs 0.048           \\
  \MALAm{}   &  \cs 0.111          & \cs 0.14           & \cs 0.187          & \cs 0.169         & \cs 0.148          & \cs 0.134          & \cs 0.125          & \cs 0.134          & \cs 0.099          & \cs 0.099          & \cs 0.112          & \cs 0.205          & \cs 0.224          & \cs 0.225          & \cs 0.078          & \cs 0.085          & \cs 0.113          & \cs 0.129          & \cs 0.136          & \cs 0.136          & \cs 0.136          & \cs 0.028          & \cs 0.03           & \cs 0.034          & \cs 0.038          & \cs 0.045          & \cs 0.047          & \cs 0.048           \\
  \smMALAem{}&  \cs \textbf{0.026} & \cs 0.028          & \cs 0.031          & \cs 0.041         & \cs 0.054          & \cs 0.061          & \cs 0.061          & \cs 0.16           & \cs 0.33           & \cs 0.413          & \cs 0.41           & \cs 0.437          & \cs 0.421          & \cs 0.385          & \cs 0.065          & \cs 0.082          & \cs 0.116          & \cs 0.131          & \cs 0.137          & \cs 0.141          & \cs 0.148          & \cs 0.023          & \cs \textbf{0.023} & \cs 0.027          & \cs 0.033          & \cs 0.042          & \cs 0.048          & \cs 0.05            \\
  \smMALAdm{}&  \cs 0.105          & \cs 0.118          & \cs 0.152          & \cs 0.15          & \cs 0.148          & \cs 0.135          & \cs 0.124          & \cs 0.132          & \cs 0.097          & \cs 0.1            & \cs 0.115          & \cs 0.208          & \cs 0.225          & \cs 0.225          & \cs 0.078          & \cs 0.086          & \cs 0.115          & \cs 0.129          & \cs 0.136          & \cs 0.136          & \cs 0.136          & \cs \textbf{0.022} & \cs \textbf{0.023} & \cs 0.029          & \cs 0.035          & \cs 0.044          & \cs 0.047          & \cs 0.048           \\
  \Dikinm{}  &  \cs 0.034          & \cs 0.033          & \cs 0.035          & \cs 0.037         & \cs 0.041          & \cs 0.044          & \cs 0.044          & \cs \textbf{0.033} & \cs \textbf{0.033} & \cs \textbf{0.032} & \cs \textbf{0.032} & \cs \textbf{0.034} & \cs \textbf{0.041} & \cs \textbf{0.044} & \cs \textbf{0.039} & \cs \textbf{0.039} & \cs \textbf{0.039} & \cs \textbf{0.044} & \cs \textbf{0.048} & \cs \textbf{0.048} & \cs \textbf{0.048} & \cs 0.032          & \cs 0.028          & \cs 0.032          & \cs 0.039          & \cs 0.044          & \cs 0.046          & \cs 0.047           \\
  \MAPLAm{}  &  \cs 0.035          & \cs 0.036          & \cs 0.037          & \cs 0.038         & \cs 0.042          & \cs 0.045          & \cs 0.045          & \cs 0.036          & \cs 0.036          & \cs 0.035          & \cs 0.035          & \cs 0.036          & \cs 0.043          & \cs 0.046          & \cs 0.041          & \cs 0.041          & \cs 0.041          & \cs 0.045          & \cs 0.049          & \cs 0.05           & \cs 0.049          & \cs 0.026          & \cs 0.027          & \cs 0.032          & \cs 0.038          & \cs 0.044          & \cs 0.047          & \cs 0.048           \\
  \HRm{}     &  \cs 0.047          & \cs 0.056          & \cs 0.059          & \cs 0.062         & \cs 0.064          & \cs 0.065          & \cs 0.063          & \cs 0.035          & \cs 0.038          & \cs 0.045          & \cs 0.051          & \cs 0.063          & \cs 0.073          & \cs 0.078          & \cs 0.048          & \cs 0.052          & \cs 0.057          & \cs 0.064          & \cs 0.069          & \cs 0.068          & \cs 0.068          & \cs 0.039          & \cs 0.039          & \cs 0.04           & \cs 0.041          & \cs 0.042          & \cs 0.043          & \cs 0.044           \\
  \LHRm{}    &  \cs 0.038          & \cs 0.044          & \cs 0.047          & \cs 0.052         & \cs 0.059          & \cs 0.062          & \cs 0.062          & \cs 0.045          & \cs 0.044          & \cs 0.049          & \cs 0.053          & \cs 0.065          & \cs 0.073          & \cs 0.078          & \cs 0.052          & \cs 0.055          & \cs 0.06           & \cs 0.066          & \cs 0.07           & \cs 0.069          & \cs 0.069          & \cs 0.029          & \cs 0.03           & \cs 0.033          & \cs 0.035          & \cs 0.04           & \cs 0.043          & \cs 0.043           \\
  \smHRem{}  &  \cs 0.027          & \cs \textbf{0.026} & \cs \textbf{0.027} & \cs \textbf{0.03} & \cs \textbf{0.037} & \cs 0.041          & \cs 0.041          & \cs 0.13           & \cs 0.285          & \cs 0.336          & \cs 0.351          & \cs 0.38           & \cs 0.377          & \cs 0.35           & \cs 0.047          & \cs 0.052          & \cs 0.057          & \cs 0.064          & \cs 0.069          & \cs 0.068          & \cs 0.068          & \cs 0.025          & \cs 0.025          & \cs 0.028          & \cs 0.033          & \cs 0.04           & \cs 0.044          & \cs 0.044           \\
  \smHRdm{}  &  \cs 0.03           & \cs 0.03           & \cs 0.03           & \cs 0.032         & \cs \textbf{0.037} & \cs \textbf{0.039} & \cs \textbf{0.039} & \cs 0.036          & \cs 0.04           & \cs 0.045          & \cs 0.052          & \cs 0.062          & \cs 0.074          & \cs 0.077          & \cs 0.048          & \cs 0.052          & \cs 0.057          & \cs 0.064          & \cs 0.068          & \cs 0.069          & \cs 0.068          & \cs 0.029          & \cs 0.029          & \cs 0.031          & \cs 0.036          & \cs 0.04           & \cs \textbf{0.042} & \cs \textbf{0.042}  \\
  \smLHRem{} &  \cs \textbf{0.026} & \cs \textbf{0.026} & \cs \textbf{0.027} & \cs \textbf{0.03} & \cs \textbf{0.037} & \cs 0.041          & \cs 0.042          & \cs 0.138          & \cs 0.296          & \cs 0.35           & \cs 0.36           & \cs 0.39           & \cs 0.385          & \cs 0.354          & \cs 0.052          & \cs 0.056          & \cs 0.06           & \cs 0.066          & \cs 0.07           & \cs 0.069          & \cs 0.069          & \cs 0.023          & \cs \textbf{0.023} & \cs \textbf{0.026} & \cs \textbf{0.032} & \cs 0.04           & \cs 0.044          & \cs 0.045           \\
  \smLHRdm{} &  \cs 0.032          & \cs 0.036          & \cs 0.039          & \cs 0.045         & \cs 0.054          & \cs 0.059          & \cs 0.06           & \cs 0.045          & \cs 0.043          & \cs 0.048          & \cs 0.054          & \cs 0.066          & \cs 0.074          & \cs 0.077          & \cs 0.051          & \cs 0.055          & \cs 0.062          & \cs 0.066          & \cs 0.07           & \cs 0.069          & \cs 0.069          & \cs \textbf{0.022} & \cs \textbf{0.023} & \cs 0.027          & \cs \textbf{0.032} & \cs \textbf{0.039} & \cs \textbf{0.042} & \cs 0.043           \\
            \hline
\end{tabular}};
    \end{tikzpicture}
    }
\end{table}

\begin{table}[]
    \centering
    \caption{
        Average L1 error of the tested samplers across the densities' scale parameters
        as presented as solid lines in \cref{fig:meanerr1}.
    }
    \resizebox{.9\linewidth}{!}{
    \begin{tikzpicture}
        \node at (0, 0) {%
            \begin{tabular}{r|
                S[table-format=1.4]S[table-format=1.4]S[table-format=1.4]S[table-format=1.4]S[table-format=1.4]S[table-format=1.4]S[table-format=1.4]|
                S[table-format=1.4]S[table-format=1.4]S[table-format=1.4]S[table-format=1.4]S[table-format=1.4]S[table-format=1.4]S[table-format=1.4]|
                S[table-format=1.4]S[table-format=1.4]S[table-format=1.4]S[table-format=1.4]S[table-format=1.4]S[table-format=1.4]S[table-format=1.4]|
                S[table-format=1.4]S[table-format=1.4]S[table-format=1.4]S[table-format=1.4]S[table-format=1.4]S[table-format=1.4]S[table-format=1.4]
            }
            \hline
            \multirow{2}{*}{Sampler ${\vcenter{\hbox{\begin{tikzpicture}[x=1mm,y=3.5mm]
                \draw[-] (1,-1) -- (-1,1);
            \end{tikzpicture}}}}$ \!\!\!\!\!\! $\begin{array}{c}\text{Target} \\\sigma\end{array}$ \!\!\!\!\!\! }
                & \multicolumn{7}{c|}{Disc $(\mu = 0.5)$}
                & \multicolumn{7}{c|}{Cigar $(\mu = 0.5)$}
                & \multicolumn{7}{c|}{Bowtie}
                & \multicolumn{7}{c}{Funnel}
            \\
            & \multicolumn{1}{c}{\footnotesize $10^{-2}$} & \multicolumn{1}{c}{\footnotesize $10^{-1.5}$} & \multicolumn{1}{c}{\footnotesize $10^{-1}$} & \multicolumn{1}{c}{\footnotesize $10^{-0.5}$} & \multicolumn{1}{c}{\footnotesize $10^{0}$} & \multicolumn{1}{c}{\footnotesize $10^{0.5}$} & \multicolumn{1}{c|}{\footnotesize $10^{1}$}
            & \multicolumn{1}{c}{\footnotesize $10^{-2}$} & \multicolumn{1}{c}{\footnotesize $10^{-1.5}$} & \multicolumn{1}{c}{\footnotesize $10^{-1}$} & \multicolumn{1}{c}{\footnotesize $10^{-0.5}$} & \multicolumn{1}{c}{\footnotesize $10^{0}$} & \multicolumn{1}{c}{\footnotesize $10^{0.5}$} & \multicolumn{1}{c|}{\footnotesize $10^{1}$}
            & \multicolumn{1}{c}{\footnotesize $10^{-2}$} & \multicolumn{1}{c}{\footnotesize $10^{-1.5}$} & \multicolumn{1}{c}{\footnotesize $10^{-1}$} & \multicolumn{1}{c}{\footnotesize $10^{-0.5}$} & \multicolumn{1}{c}{\footnotesize $10^{0}$} & \multicolumn{1}{c}{\footnotesize $10^{0.5}$} & \multicolumn{1}{c|}{\footnotesize $10^{1}$}
            & \multicolumn{1}{c}{\footnotesize $10^{-2}$} & \multicolumn{1}{c}{\footnotesize $10^{-1.5}$} & \multicolumn{1}{c}{\footnotesize $10^{-1}$} & \multicolumn{1}{c}{\footnotesize $10^{-0.5}$} & \multicolumn{1}{c}{\footnotesize $10^{0}$} & \multicolumn{1}{c}{\footnotesize $10^{0.5}$} & \multicolumn{1}{c }{\footnotesize $10^{1}$}
            \\
            \hline
  \RWMHm{}    & \cs 0.042          & \cs 0.047         & \cs 0.049          & \cs 0.051          & \cs 0.048          & \cs 0.048          & \cs 0.049          & \cs 0.045          & \cs 0.047          & \cs 0.048          & \cs 0.051          & \cs 0.052          & \cs 0.053          & \cs 0.052          & \cs 0.032          & \cs 0.034          & \cs 0.036          & \cs 0.044         & \cs 0.047          & \cs 0.049          & \cs 0.049          & \cs 0.07           & \cs 0.062          & \cs 0.06           & \cs 0.061          & \cs 0.052          & \cs 0.048          & \cs 0.048          \\
  \MALAm{}    & \cs 0.045          & \cs 0.047         & \cs 0.048          & \cs 0.05           & \cs 0.048          & \cs 0.048          & \cs 0.049          & \cs \textbf{0.044} & \cs 0.047          & \cs 0.047          & \cs 0.051          & \cs 0.052          & \cs 0.053          & \cs 0.052          & \cs \textbf{0.026} & \cs \textbf{0.031} & \cs \textbf{0.034} & \cs 0.042         & \cs 0.046          & \cs 0.048          & \cs 0.049          & \cs 0.069          & \cs 0.061          & \cs 0.055          & \cs 0.052          & \cs 0.052          & \cs 0.048          & \cs 0.048          \\
  \smMALAem{} & \cs 0.036          & \cs 0.041         & \cs 0.054          & \cs 0.091          & \cs 0.139          & \cs 0.136          & \cs 0.121          & \cs \textbf{0.044} & \cs 0.047          & \cs 0.047          & \cs 0.051          & \cs 0.052          & \cs 0.053          & \cs 0.053          & \cs 0.075          & \cs 0.075          & \cs 0.06           & \cs 0.059         & \cs 0.061          & \cs 0.057          & \cs 0.049          & \cs 0.123          & \cs 0.094          & \cs 0.069          & \cs 0.065          & \cs 0.051          & \cs 0.048          & \cs 0.048          \\
  \smMALAdm{} & \cs \textbf{0.035} & \cs \textbf{0.04} & \cs \textbf{0.044} & \cs 0.048          & \cs 0.048          & \cs 0.048          & \cs 0.049          & \cs \textbf{0.044} & \cs 0.046          & \cs 0.047          & \cs 0.051          & \cs 0.052          & \cs 0.052          & \cs 0.053          & \cs 0.028          & \cs \textbf{0.031} & \cs \textbf{0.034} & \cs 0.042         & \cs 0.046          & \cs 0.048          & \cs 0.049          & \cs 0.048          & \cs 0.047          & \cs 0.047          & \cs 0.05           & \cs 0.051          & \cs 0.048          & \cs 0.048          \\
  \Dikinm{}   & \cs 0.062          & \cs 0.064         & \cs 0.059          & \cs 0.055          & \cs 0.048          & \cs 0.047          & \cs 0.046          & \cs 0.047          & \cs 0.048          & \cs 0.047          & \cs 0.051          & \cs 0.051          & \cs 0.051          & \cs 0.051          & \cs 0.032          & \cs 0.036          & \cs 0.037          & \cs 0.042         & \cs 0.045          & \cs 0.047          & \cs 0.047          & \cs 0.068          & \cs 0.058          & \cs 0.051          & \cs 0.045          & \cs 0.047          & \cs 0.047          & \cs 0.046          \\
  \MAPLAm{}   & \cs 0.101          & \cs 0.091         & \cs 0.068          & \cs 0.055          & \cs 0.049          & \cs 0.048          & \cs 0.048          & \cs 0.046          & \cs 0.047          & \cs 0.047          & \cs 0.052          & \cs 0.052          & \cs 0.053          & \cs 0.053          & \cs 0.036          & \cs 0.036          & \cs 0.036          & \cs 0.041         & \cs 0.045          & \cs 0.048          & \cs 0.048          & \cs 0.075          & \cs 0.061          & \cs 0.051          & \cs 0.045          & \cs 0.048          & \cs 0.048          & \cs 0.048          \\
  \HRm{}      & \cs 0.042          & \cs 0.046         & \cs 0.048          & \cs 0.048          & \cs \textbf{0.045} & \cs \textbf{0.044} & \cs \textbf{0.044} & \cs 0.045          & \cs 0.046          & \cs \textbf{0.045} & \cs \textbf{0.048} & \cs \textbf{0.048} & \cs \textbf{0.048} & \cs \textbf{0.048} & \cs 0.031          & \cs 0.034          & \cs 0.035          & \cs 0.041         & \cs 0.043          & \cs \textbf{0.044} & \cs \textbf{0.044} & \cs 0.073          & \cs 0.06           & \cs 0.057          & \cs 0.05           & \cs \textbf{0.045} & \cs 0.044          & \cs 0.044          \\
  \LHRm{}     & \cs 0.044          & \cs 0.046         & \cs 0.047          & \cs 0.048          & \cs \textbf{0.045} & \cs 0.045          & \cs 0.045          & \cs \textbf{0.044} & \cs \textbf{0.045} & \cs \textbf{0.045} & \cs 0.049          & \cs 0.049          & \cs 0.049          & \cs \textbf{0.048} & \cs 0.028          & \cs \textbf{0.031} & \cs \textbf{0.034} & \cs \textbf{0.04} & \cs 0.043          & \cs \textbf{0.044} & \cs \textbf{0.044} & \cs 0.07           & \cs 0.059          & \cs 0.053          & \cs 0.045          & \cs 0.047          & \cs 0.045          & \cs 0.045          \\
  \smHRem{}   & \cs 0.037          & \cs 0.041         & \cs 0.052          & \cs 0.079          & \cs 0.12           & \cs 0.115          & \cs 0.102          & \cs 0.045          & \cs 0.046          & \cs \textbf{0.045} & \cs \textbf{0.048} & \cs \textbf{0.048} & \cs \textbf{0.048} & \cs \textbf{0.048} & \cs 0.062          & \cs 0.062          & \cs 0.051          & \cs 0.052         & \cs 0.052          & \cs 0.05           & \cs 0.045          & \cs 0.112          & \cs 0.082          & \cs 0.066          & \cs 0.052          & \cs \textbf{0.045} & \cs 0.044          & \cs 0.044          \\
  \smHRdm{}   & \cs 0.037          & \cs 0.041         & \cs 0.045          & \cs 0.048          & \cs \textbf{0.045} & \cs \textbf{0.044} & \cs \textbf{0.044} & \cs 0.045          & \cs 0.047          & \cs \textbf{0.045} & \cs \textbf{0.048} & \cs \textbf{0.048} & \cs \textbf{0.048} & \cs \textbf{0.048} & \cs 0.027          & \cs 0.032          & \cs 0.035          & \cs 0.041         & \cs 0.043          & \cs \textbf{0.044} & \cs \textbf{0.044} & \cs 0.053          & \cs 0.051          & \cs 0.05           & \cs 0.048          & \cs \textbf{0.045} & \cs \textbf{0.043} & \cs \textbf{0.043} \\
  \smLHRem{}  & \cs 0.036          & \cs \textbf{0.04} & \cs 0.051          & \cs 0.083          & \cs 0.124          & \cs 0.117          & \cs 0.103          & \cs \textbf{0.044} & \cs \textbf{0.045} & \cs \textbf{0.045} & \cs \textbf{0.048} & \cs 0.049          & \cs 0.049          & \cs \textbf{0.048} & \cs 0.072          & \cs 0.065          & \cs 0.054          & \cs 0.055         & \cs 0.054          & \cs 0.05           & \cs 0.046          & \cs 0.115          & \cs 0.082          & \cs 0.061          & \cs 0.055          & \cs 0.047          & \cs 0.044          & \cs 0.045          \\
  \smLHRdm{}  & \cs \textbf{0.035} & \cs \textbf{0.04} & \cs \textbf{0.044} & \cs \textbf{0.047} & \cs \textbf{0.045} & \cs 0.045          & \cs 0.045          & \cs \textbf{0.044} & \cs \textbf{0.045} & \cs \textbf{0.045} & \cs \textbf{0.048} & \cs 0.049          & \cs 0.049          & \cs 0.049          & \cs 0.029          & \cs 0.032          & \cs \textbf{0.034} & \cs \textbf{0.04} & \cs \textbf{0.042} & \cs \textbf{0.044} & \cs \textbf{0.044} & \cs \textbf{0.044} & \cs \textbf{0.043} & \cs \textbf{0.044} & \cs \textbf{0.044} & \cs 0.048          & \cs 0.045          & \cs 0.045          \\
            \hline
\end{tabular}};
    \end{tikzpicture}
    }
\end{table}

\end{landscape}

\end{document}